\newcommand\fauxsc[1]{\fauxschelper#1 \relax\relax}
\def\fauxschelper#1 #2\relax{%
  \fauxschelphelp#1\relax\relax%
  \if\relax#2\relax\else\ \fauxschelper#2\relax\fi%
}
\def\Hscale{.85}\def\Vscale{.72}\def\Cscale{1.10}
\def\fauxschelphelp#1#2\relax{%
  \ifnum`#1>``\ifnum`#1<`\{\scalebox{\Hscale}[\Vscale]{\uppercase{#1}}\else%
    \scalebox{\Cscale}[1]{#1}\fi\else\scalebox{\Cscale}[1]{#1}\fi%
  \ifx\relax#2\relax\else\fauxschelphelp#2\relax\fi}
\newcommand{\repeattheorem}[1]{%
  \begingroup
  \renewcommand{\thetheorem}{\ref{#1}}%
  \expandafter\expandafter\expandafter\theorem
  \csname reptheorem@#1\endcsname
  \endtheorem
  \endgroup
}
\xdef\csname reptheorem@#1\endcsname{%
    \unexpanded\expandafter{\BODY}%
  }%
\unskip\label{#1}\endtheorem
\newcommand{\repeatlemma}[1]{%
  \begingroup
  \renewcommand{\thelemma}{\ref{#1}}%
  \expandafter\expandafter\expandafter\lemma
  \csname replemma@#1\endcsname
  \endlemma
  \endgroup
}
\xdef\csname replemma@#1\endcsname{%
    \unexpanded\expandafter{\BODY}%
  }%
\unskip\label{#1}\endlemma
\newcommand{\edge }{\text{ --- }}
\newcommand{\calU}{{\mathcal{U}}}
\newcommand{\calP}{{\mathcal{P}}}
\newcommand{\calC}{{\mathcal{C}}}
\newif\ifwitheverything
\renewcommand{\labelenumi}{{$\langle${\theenumi}$\rangle$}}
\begin{document}
\title{Recovering Single-Crossing Preferences From Approval Ballots}
%

\author{Andrei Constantinescu\textsuperscript{\thinspace\Letter\thinspace}\orcidID{0009-0005-1708-9376} \and
Roger Wattenhofer \orcidID{0000-0002-6339-3134}}
\authorrunning{A. Constantinescu and R. Wattenhofer}
%
\institute{ETH Zurich, Rämistrasse 101, 8092 Zurich, Switzerland \\
\email{\{aconstantine,wattenhofer\}@ethz.ch}}
\maketitle              
\begin{abstract} 
An electorate with fully-ranked innate preferences casts approval votes over a finite set of alternatives. As a result, only partial information about the true preferences is revealed to the voting authorities. In an effort to understand the nature of the true preferences given only partial information, one might ask whether the unknown innate preferences could possibly be single-crossing.
The existence of a polynomial time algorithm to determine this has been asked as an outstanding problem in the works of Elkind and Lackner \cite{edith_ssc}. We hereby give a polynomial time algorithm determining a single-crossing collection of fully-ranked preferences that could have induced the elicited approval ballots, or reporting the nonexistence thereof. Moreover, we consider the problem of identifying negative instances with a set of forbidden sub-ballots, showing that any such characterization requires infinitely many forbidden configurations.

\keywords{Approval Voting \and Single-crossing  \and Algorithms \and Computational Complexity.}
\end{abstract}
%
%
%

\section{Introduction}\label{sec:intro}

One can express their opinion either in \textit{precise} or in \textit{simple} terms. The scientific world favors precision, but real-world decision, voting and election schemes usually adopt simplicity. Both ways have their disadvantages:~simple voting is often limited in expressivity, requiring voters to make compromises (``I would really like to vote for candidate A, but candidate B has an actual chance of getting elected.''). Precise voting, on the other hand, is too demanding of the voter (``Do I really have to rank all these candidates? I barely know them!''), and hence often not implemented. Indeed, the vast majority of the preference data repository \fauxsc{PrefLib} \cite{mattei_walsh_preflib} consists of partially-elicited preferences.
A modern solution striking a good balance between simplicity and
expressive
power stands in approval voting \cite{approval_book}. Here ballots ask voters to decide whether they approve or disapprove of each candidate; i.e., is the candidate good enough or not? Approval voting has been successfully applied to a number of settings, including among others multi-winner elections and participatory budgeting
(see excellent survey in \cite{lackner_skowron_abcs_book}).

Starting with the seminal work of Arrow \cite{arrow}, a long line of research has been dedicated to showing the impossibility of computing fair social outcomes. However, these impossibility proofs typically hinge on pathological instances that are unlikely to occur in real-world elections, where  
political preferences can, for instance, often be explained using a left-right spectrum. Elections following such a spectrum are known as (one-dimensional) Euclidean. This model has received a number of generalizations enhancing its expressive power, including the single-peaked \cite{black_single_peaked,arrow}
and single-crossing \cite{mirrlees_income_tax,roberts_single_crossing} models, weakening the requirements of the societal axis in two distinct ways. An election is single-crossing if there is an ordering of the voters, called the single-crossing axis, such that, as we sweep through voters in order, preference between any two candidates changes at most once. We omit the definition of single-peaked preferences as it is not required for our work. While the two notions are incomparable, they share many of their attractive properties, including the existence of Condorcet winners
and the polynomial-time computability of winners under election schemes for which this is hard in general, like the rules of Young, Dodgson and Kemeny \cite{handbook_chapters} in the single-winner setting, and those of Chamberlin--Courant and Monroe in the multi-winner one \cite{prz08,lb11}. The first three follow from the existence of Condorcet winners, while the latter two use dynamic programming  \cite{cc_sc_poly_time_edith,constantinescu_elkind_2021,chamberlin_poly_time_sp,chamberlin_poly_time_sp_better,hua_monroe}. When preferences are neither single-peaked nor single-crossing, one can still hope for the removal of a small fraction of voters and/or candidates to allow one of the models to apply.

The models discussed so far are defined for fully-elicited preferences. For approval ballots, their most natural extensions ask whether there are complete-information Euclidean, single-peaked, or single-crossing preferences refining the elicited approval preferences. If so, one calls the elicited preferences \emph{possibly} Euclidean, single-peaked, or single-crossing (short PE, PSP, PSC). Elkind and Lackner \cite{edith_ssc} study these notions, among others, showing that PE and PSP surprisingly coincide, and PSC is more general (in fact the most general considered in their paper). However, one so far unanswered question is whether PSC collections of ballots can be recognized in polynomial time. In contrast, this is known to be possible for the other notions.

A polynomial-time algorithm for recognizing PSC and computing an associated societal axis has a few important implications. First of all, such an axis is required to apply the known polynomial-time algorithms for computing winning committees for the approval-based Chamberlin--Courant and Monroe rules under single-crossing preferences \cite{cc_sc_poly_time_edith,constantinescu_elkind_2021,hua_monroe}. Moreover, Pierczy{\'{n}}ski and Skowron \cite{core_top_monotonic} show that core-stable committees always exist for approval elections under two definitions more restrictive than PSC (one of which being PE = PSP). It is likely that the same will be shown in the future for PSC, as no counterexample is known even for arbitrary approval elections,\footnote{This is perhaps the main open question in the literature on proportional representation \cite{lackner_skowron_abcs_book}.} and an algorithm computing the PSC axis will likely be a necessary component of an algorithm computing core-stable committees for PSC preferences. Finally, such an algorithm can be used by electoral authorities to study the political landscape emerging from an election.

\textbf{Our Contribution}. We give a polynomial time algorithm that takes as input a collection of approval ballots and outputs a collection of fully-ranked ballots with the single-crossing property that could have produced the observed approval votes, reporting accordingly in case this is not possible. In the process, our approach also leads to a novel FPT algorithm for the non-betweenness problem \cite{non_bet_np_complete}. Our problem has been posed as open by Elkind and Lackner \cite{edith_ssc}.\footnote{Recently, we learned that a solution has in fact been proposed as early as 1979 in the context of the simple plant location problem, by Beresnev and Davydov \cite{beresnev}. This paper is only available in Russian, and Russian-speaking experts seem to believe that the paper is likely missing steps in the arguments. Beresnev and Davydov \cite{beresnev} is referenced in \cite{gerhard}, but without details.}

Moreover, in Appendix \ref{sect:impossibility}, we show that, while single-crossing elections admit a characterization in terms of a set of two small forbidden subelections \cite{characterization_minors_single_crossing}, such a characterization ceases to exist for PSC elections. We show this by exhibiting an infinite set of approval elections that are not PSC, but all their proper subelections are.

\textbf{Related Work}. Checking whether an election (with fully-elicited strictly ranked preferences) is single-crossing can be achieved in polynomial time \cite{elkind_reco_sc,characterization_minors_single_crossing}, in fact even in near-linear time \cite{survey_restricted}. Checking whether an election can be embedded on a left-right spectrum (i.e., whether it is $1$-Euclidean) can also be done in polynomial time \cite{poly_time_one_dimensional,single_peaked_reco_2,euclidean_reco_2}. In both cases, finding witnessing axes/embeddings is also polynomial. On the other hand, deciding whether an embedding exists in $d$ dimensions becomes NP-hard as soon as $d \geq 2$ \cite{dominik_2d_is_np_hard}, and heuristic accounts support the difficulty of fitting such higher-dimensional models in practice 
\cite{busing_phd_thesis}. 

Single-peaked elections can also be recognized in polynomial time
\cite{single_peaked_reco_1,single_peaked_reco_2,single_peaked_reco_3,survey_restricted}. A natural question related to our work is how difficult it is to recover 1-Euclidean preferences or single-peaked preferences from approval ballots. The two notions coincide (PE = PSP), and polynomial-time algorithms have been proposed \cite{edith_ssc,possibly_single_peaked_first_pointed_out_how_to_reco} by reduction to the consecutive ones problem (which admits a linear-time algorithm, although using rather complex machinery \cite{booth_lueker}).
In contrast, our algorithm for PSC proceeds directly, using non-betweenness only as a lens. An interesting model that encompasses both single-crossing and single-peaked elections is that of top-monotonic elections \cite{barbera_top_monotonic}. Such elections can be recognized in polynomial time \cite{top_monotonic} (also the paper whose techniques are closest to ours).

For the case of multi-valued approval, where voters can choose between more than two judgments when filling in the ballot, Fitzsimmons \cite{fitzsimmons_weak_order_single_peaked} provides a polynomial time recognition algorithm for possibly single-peaked preferences, working even for the case of an unbounded number of possible judgments; i.e., voters give a score between 1 and the number of candidates to each candidate (can also be thought as voters reporting rankings with ties). Their method proceeds by reducing to the consecutive ones problem. Getting similar results for the single-crossing case might seem desirable, but, as shown in \cite{edith_incomplete_sc}, this is NP-hard for an unbounded number of possible judgments.

Working with approval ballots requires rather different social aggregators to those taking fully-ranked profiles as input. These aggregators have been recently extensively studied in the context of multi-winner elections; see the comprehensive survey of both normative and computational results of Lackner and Skowron \cite{lackner_skowron_abcs_book}. In an effort to translate the known models of voting to approval elections, Elkind and Lackner \cite{edith_ssc} consider twelve models catered to approval voting, the most general being the domain of possibly single-crossing elections. For all others, their paper shows polynomial-time recognition,
while Terzopoulou et al.~\cite{forbidden_for_approval} show that the least restrictive of the others admit finite forbidden substructures characterizations. We resolve the open case of PSC and show that it admits no finite forbidden substructures characterization.

Finally, the problem of recognizing possibly single-crossing approval elections has a similar flavor to the recent work of Mandal et al.~\cite{distortion_of_ballot} on distortion induced by the preference elicitation mechanism, thus making it interesting and natural to study. In particular, instead of being interested in recovering voter utilities, here the objective is recovering the single-crossing axis of the innate preferences, mitigating the information lost when the electorate casts their votes through approval ballots; i.e., recovering utilities inducing a single-crossing profile.

\begin{figure}[t]
    \centering
    \begin{subfigure}{.25\textwidth}
        \centering
        \begin{align*}
            \calU     = \{& 1, 2, 3, 4, 5, 6\} \\
            \calC = \{& (1, 2, 3), (2, 3, 4), \\ 
                      & (2, 4, 3), (4, 2, 5), \\
                      & (1, 5, 6)\}
        \end{align*}
        \vspace{0.09cm}
        \caption{Input instance.}
        \label{fig:example-for-intro-nb-constraints}
    \end{subfigure}
    \begin{subfigure}{.45\textwidth}
        \centering
        \includegraphics[width=.95\linewidth]{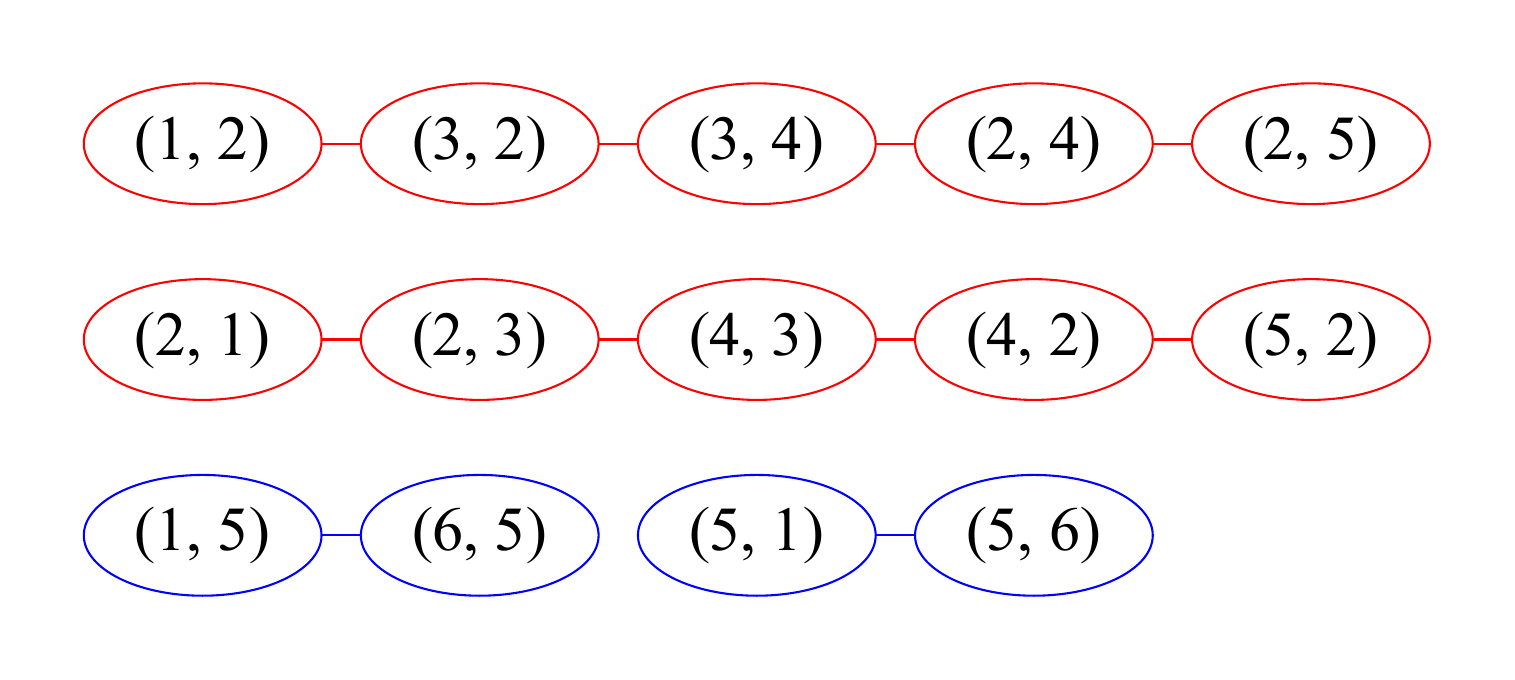}
        \vspace{0.45cm}
        \caption{Formula graph, excl.~singletons.}
        \label{fig:example-for-intro-formula-graph}
    \end{subfigure}
    \begin{subfigure}{.25\textwidth}
        \centering
        \includegraphics[width=.95\linewidth]{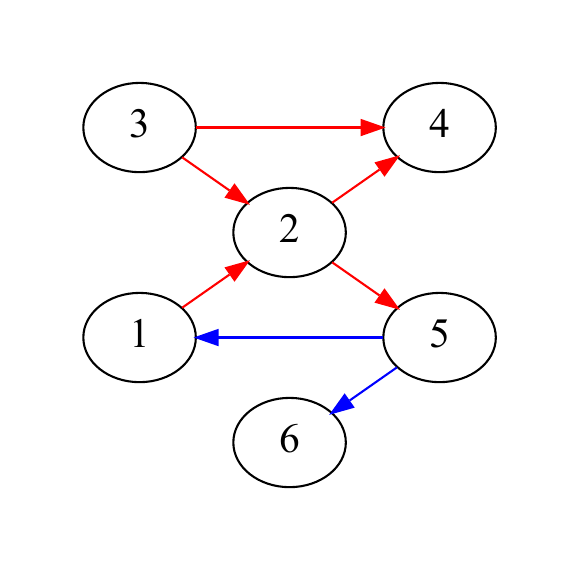}
        \caption{Colorful graph.}
        \label{fig:example-for-intro-colorful-graph}
    \end{subfigure}
    \caption{Example of the construction of the formula and colorful graphs. Consider the instance of the non-betweenness problem given by $\calU$ and $\calC$ in Fig.~\ref{fig:example-for-intro-nb-constraints}. The corresponding formula graph in Fig.~\ref{fig:example-for-intro-formula-graph} is constructed by adding edges $(a, b) \edge (c, b)$ and $(b, a) \edge (b, c)$ for each triple $(a, b, c) \in \calC.$ Isolated nodes have been omitted. Note that that no connected component of the formula graph contains both vertices $(a, b)$ and $(b, a)$ for any $a, b \in \calU.$ Subsequently, the connected components split into two pairs of complementary components:~the two top long red chains and the two bottom short blue chains. The corresponding colorful graph in Fig.~\ref{fig:example-for-intro-colorful-graph} is constructed by (arbitrarily) taking the top red component and the right blue component and introducing directed edges corresponding to the ordered pairs in the component. To solve the 
    instance in Fig.~\ref{fig:example-for-intro-nb-constraints} one needs to decide for each color whether or not to flip the direction of all edges of that color such that the resulting graph is acyclic. In our case, to avoid the cycle $1 \rightarrow 2 \rightarrow 5 \rightarrow 1,$ the possible solutions are to either flip all blue edges or all red edges (but not both, as that leads to cycle $1 \rightarrow 5 \rightarrow 2 \rightarrow 1$ instead). 
    }
    \label{fig:example-for-intro}
\end{figure}

\textbf{Technical Overview}. Our approach for proving the main result proceeds by first translating the input into an instance of the non-betweenness problem, which asks given a ground set $\calU$ and a set $\calC$ of triples $(a, b, c)$ over $\calU$ to compute a linear (i.e., total, irreflexive, antisymmetric, transitive) ordering of the elements of $\calU$ such that for none of the triples does $b$ come between $a$ and $c.$ While non-betweenness is NP-hard in general \cite{non_bet_np_complete}, for the type of instances produced by our reduction the problem will turn out to be polynomial. To show this, we begin with an approach to non-betweenness based on boolean satisfiability. In particular, we introduce for every pair of distinct entries $a, b \in \calU$ a boolean variable $x_{a, b}$ standing for whether $a$ comes before $b$ in the ordering. Antisymmetry can be captured by adding clauses requiring that $x_{a, b} = \overline{x_{b, a}},$ while for every triple $(a, b, c) \in \calC$ one can show that non-betweenness is enforced by adding clauses ensuring that $x_{a, b} = x_{c, b}.$ However, transitivity is substantially trickier to model without resorting to 3-literal clauses. To overcome this difficulty, taking a step back, we notice that except for transitivity all constraints are equalities. We take advantage of this fact by constructing the so-called \emph{formula graph}, which has a vertex $(a, b)$ for each variable $x_{a, b}$ in the boolean formula. Edges in the formula graph correspond to the equalities $x_{a, b} = x_{c, b}$ in the boolean formulation above, together with the inferred opposite equalities $x_{b, a} = x_{b, c}$. If some connected component contains both $x_{a, b}$ and $x_{b, a},$ then the instance is not satisfiable. Otherwise, we introduce yet another construct, the \emph{colorful graph}, which is an edge-colored directed graph with vertex set $G.$ One can show that at this point connected components in the formula graph come in complementary pairs $\{S, \overline{S}\},$ where $(a, b) \in S$ whenever $(b, a) \in \overline{S}.$ For each such pair we create a new color, the edges of this color corresponding to the ordered pairs in set $S.$ Some colors will only have one edge, which can be safely removed. Our two graph concepts are illustrated in Fig.~\ref{fig:example-for-intro}. We then show that satisfying assignments of the boolean formula, this time including transitivity, correspond to acyclic orientations of the colorful graph where edges of each color are either all reversed, or all left in the original orientation. Deciding if such orientations of the colorful graph exist is still NP-hard, but exhausting over all possible orientations already shows that the non-betweenness problem is FPT with respect to the number of colors, a fact which might be of independent interest. At this point, we start studying the structure of the formula and colorful graphs induced by approval elections, showing the surprising result that, unless there is a monochromatic cycle, an acyclic orientation exists and can be computed in polynomial time, from which the election is possibly single-crossing. Proving this result constitutes the most demanding part of our work, and relies on a mixture of combinatorial and computer-aided techniques. Most notably, along the way we prove that if some orientation of the colorful graph is not acyclic, then the induced cycle is either monochromatic or of length three and not repeating colors. With further work, we then prove a strong structural result about colors taking part in three-colored triangles, allowing them to be consistently oriented to complete the proof.

\section{Preliminaries}\label{sect:prelims}

For a non-negative integer $n$, we write $[n]$ for the set $\{1, 2, \ldots, n\}$.
Given a statement $S$, we write $[S]$ for the Iverson bracket:~$[S] = 1$ if $S$ holds, and 0 otherwise.

Given a finite set $A,$ a \emph{partial} order $\succ$ over $A$ is an irreflexive, antisymmetric, transitive relation over $A$. Two values $a, b \in A$ are \emph{comparable} under $\succ$ if either $a \succ b$ or $b \succ a$; otherwise, they are \emph{incomparable}, in which case we write $a \approx_\succ b.$ When any two distinct elements are comparable (i.e., the order is total), order $\succ$ is called a \emph{linear order}. A partial order $\succ$ is called a \emph{weak order} if $\approx_\succ$ forms an equivalence relation (i.e., it is transitive). One can think of weak orders as linear orders allowing for ties, corresponding to incomparable elements. For this reason, when $\succ$ is a weak order,
whenever $a \approx_\succ b$ we say that $\succ$ is \emph{indifferent} between $a$ and $b.$ We call \emph{indifference classes} the equivalence classes induced by $a \approx_\succ b$. We call a weak order with at most $k$ indifference classes a \emph{$k$-weak order}. Note that an order $\succ$ is $k$-weak if and only if $\succ$ can be induced by a \emph{scoring} function $s : A \to [k]$; i.e., $a \succ b$ if and only if $s(a) > s(b).$ Hence, $2$-weak orders are precisely the approval ballots. Given partial orders $\succ$ and $\succ'$, we say that $\succ'$ \emph{extends} $\succ$ if $a \succ' b$ whenever $a \succ b$.

In our setting, an assembly of voters $V = [n]$ expresses their preferences over a set of candidates (alternatives) $C = [m].$ The expressed preferences of voter $i \in V$ consist of a weak order $\succ_i$ over the set of candidates $C.$ The collection of all voter preferences $\calP = (\succ_i)_{i \in [n]}$ is known as the \emph{preference profile}. When $\calP$ consists solely of linear orders, we say that the electorate has \emph{fully-ranked} preferences. When $\calP$ consists solely of $2$-weak orders, we say the electorate has \emph{approval} preferences, in which case we will often see $\calP$ as an $m \times n$ \emph{approval matrix},
where $\calP[c, v] = 1$ if voter $v$ approves of candidate $c$, and $0$ otherwise; i.e., column $v$ contains the approval ballot cast by voter $v$. Given two profiles $\calP$ and $\calP',$ we say that $\calP'$ is a \emph{subprofile} of $\calP$ if $\calP'$ can be obtained from $\calP$ by removing voters and/or candidates. For approval ballots, subprofiles correspond to removing rows and columns from $\calP.$ If $\calP' \neq \calP$ in the previous, then $\calP$ is a \emph{proper} subprofile of $\calP$.

A profile of weak orders $\calP$ is said to be \emph{seemingly single-crossing} (SSC) with respect to a linear order $\triangleleft$ over the set of voters $V,$ called the \emph{axis}, if there are no three voters $i, j, k$ such that $i \triangleleft j \triangleleft k$ and candidates $a, b$ such that $a \succ_i b, b \succ_j a$ and $a \succ_k b$. If additionally profile $\calP$ consists solely of linear orders, then under the same conditions $\calP$ is \emph{single-crossing} (SC) with respect to $\triangleleft$. A profile $\calP$ of weak orders is said to be \emph{possibly single-crossing} (PSC) with respect to a linear order $\triangleleft$ of the voters if for all voters $i \in V$ the preference order $\succ_i$ can be extended to a linear order $\succ_i'$ such that $(\succ_i')_{i \in [n]}$ is single-crossing with respect to $\triangleleft.$ A profile $\calP$ is (seemingly/possibly) single-crossing if it is (seemingly/possibly) single-crossing with respect to some linear order $\triangleleft$ over $V$. For linear orders, the three notions coincide. More surprisingly, Elkind et al.~\cite{edith_incomplete_sc} show that for weak orders the notions of SSC and PSC coincide.\footnote{For general partial orders, they show that PSC implies SSC, but not conversely.}
Moreover, they show that given a linear order $\triangleleft$ witnessing SSC for profile $\calP$, one can compute in polynomial time for each voter $i \in V$ a linear extension $\succ_i'$ of $\succ_i$ such that $(\succ_i')_{i \in [n]}$ is single-crossing with respect to $\triangleleft.$ In other words, knowing a seemingly single-crossing axis is enough to compute
fully-ranked ballots that could have produced the cast ballots in polynomial time. For completeness, we give a streamlined version of their argument in the Appendix \ref{app:equiv-ssc-psc}. Therefore, our focus will be on computing, given a profile $\calP$ of weak orders, an SSC axis, or deciding the nonexistence thereof. For general weak orders, Elkind et al.~\cite{edith_incomplete_sc} show that deciding axis existence is NP-hard, but they leave it open for $k$-weak orders for $k \geq 2$, the case $k = 2$ corresponding with the approval voting case.

Given a finite ground set $\calU$, a \emph{non-betweenness} (NB) constraint over $\calU$ is a triple of distinct elements $(i, j, k) \in \calU^3.$ A linear order $\triangleleft$ over $\calU$ satisfies a collection $\calC$ of NB constraints if for every $(i, j, k) \in \calC$ it does not hold that $i \triangleleft j \triangleleft k$ or $k \triangleleft j \triangleleft i;$ i.e., $j$ is not between $i$ and $k$ in $\triangleleft$. The problem of deciding whether a linear order $\triangleleft$ satisfying $\calC$ exists is NP-hard \cite{non_bet_np_complete}.\footnote{ This is used to prove the hardness of our problem for general weak orders, but the argument requires an unbounded number of indifference classes, so it does not work for bounded $k.$} Checking for the SSC property reduces to verifying whether a set of NB constraints is satisfiable:~given a profile $\calP$, construct a set of NB constraints $\calC_\calP$ over $V$ such that $(i, j, k) \in \calC_\calP$ iff there are two alternatives $a, b \in C$ such that $a \succ_i b, b \succ_j a$ and $a \succ_k b$. Then, we have the following, which is intuitively enabled by the fact that SSC is invariant with respect to reversing the axis:

\begin{replemma}{lemma:reduction_to_nb} $\calP$ is SSC with respect to a linear order $\triangleleft$ 
iff $\calC_\calP$ is satisfied by $\triangleleft.$
\end{replemma}

\section{A Boolean Encoding of Non-Betweenness Constraints}\label{sect:bc-to-boolean-sat}

In this section, we develop a general technique for deciding the satisfiability of an arbitrary set $\calC \subseteq V^3$ of NB constraints over $V.$ Naturally, our approach does not lead to a polynomial time algorithm for checking satisfiability, as the problem is NP-hard, but analysis in later sections will show that when applied to NB constraint sets $\calC_\calP$ originating from approval ballots, it leads to a polynomial algorithm for deciding satisfiability. 

First, note that a NB constraint $(i, j, k)$ is satisfied by a linear order $\triangleleft$ if and only if $[i \triangleleft j] = [k \triangleleft j].$ Moreover, order $\triangleleft$ is uniquely determined by the values $[i \triangleleft j]$ for $i \neq j.$ These facts suggest a reformulation of the problem of satisfying NB constraints in terms of the values $[i \triangleleft j],$ which we do in the following. Given $\calC,$ we construct a boolean formula $\Phi_\calC$ with a variable $x_{i, j}$ for every ordered pair of distinct elements $(i, j) \in V^2.$ The clauses of $\Phi_\calC$ comprise of the following three constraint sets, corresponding to antisymmetry, agreement with $\calC$ and transitivity, respectively:

\begin{enumerate}[ref=\labelenumi]
    \item\label{bool_form_1} For every pair of distinct elements $(i, j) \in V^2$ add the constraint $x_{i, j} = \overline{x_{j, i}}.$ 
    \item\label{bool_form_2} For every triple $(i, j, k) \in \calC$, add the constraint $x_{i, j} = x_{k, j}$.
    \item\label{bool_form_3} For every triple of distinct $(i, j, k) \in V^3$ add the constraint ${(x_{i, j} \land x_{j, k}) \rightarrow x_{i, k}}.$
\end{enumerate}

In the above we used $=$ for the bi-implication operator $\leftrightarrow,$ and we wrote $\overline{x}$ for logical negation, also commonly denoted by $\lnot x$. Note that, under the presence of set \ref{bool_form_1}, set \ref{bool_form_2} can be reformulated as below since 
$x_{j, i} = \overline{x_{i, j}} = \overline{x_{k, j}} = x_{j, k}$:
\begin{enumerate}[label=$\langle2'\rangle$, ref=\labelenumi]
    \item\label{bool_form_2_reformulated} For every triple $(i, j, k) \in \calC$, add the constraints $x_{i, j} = x_{k, j}$ and $x_{j, i} = x_{j, k}.$
\end{enumerate}

Henceforth, we will mostly use \ref{bool_form_2_reformulated}, but \ref{bool_form_2} will still be more convenient in a few cases. The following lemma formally establishes the correspondence between linear orders satisfying $\calC$ and satisfying assignments of $\Phi_\calC$. The proof is essentially straightforward by following the definitions, so we leave it for the appendix.

\begin{replemma}{lemma_boolean_formula_encodes_ordering} Linear orders $\triangleleft$ satisfying the NB constraints in $\calC$ correspond bijectively to satisfying assignments of $\Phi_\calC$ by $x_{i, j} = [i \triangleleft j].$ Hence, $\calC$ is satisfiable
iff $\Phi_\calC$ is satisfiable. 
\end{replemma}

Is this the end of the story? Not quite, because constraint set \ref{bool_form_3} encodes transitivity as $(x_{i, j} \land x_{j, k}) \rightarrow x_{i, k} \equiv (\overline{x_{i, j}} \lor \overline{x_{j, k}} \lor x_{i, k}),$ which is a three-literal clause. However, constraint sets \ref{bool_form_1} and \ref{bool_form_2} only require two-literal clauses, so there is hope. The approach presented so far is similar in spirit to the approach of Magiera and Faliszewski \cite{top_monotonic} for recognizing top-monotonic elections. The main difficulty in their case also stems from enforcing transitivity, which similarly seems to require 3-literal clauses at a first glance. The essential observation that they make to progress is that whenever $(i, j, k) \in \calC$, then the six entries in \ref{bool_form_3} enforcing transitivity for the unordered triple $\{i, j, k\}$ are no longer required, a fact which also holds true in our case. If for every unordered triple we would have at least one NB constraint involving its elements, then none of the transitivity constraints would be required, and the problem could simply be solved by any polynomial 2-CNF solver. This is the main idea of the approach taken in \cite{top_monotonic}. In our case, on the other hand, it turns out that it is surprisingly tricky to completely eliminate the transitivity constraints, so we will need more insight to deal with them gracefully.

\section{Resolving Constraint Sets \ref{bool_form_1} and \ref{bool_form_2} Using Connected Components, The Formula Graph} \label{sect:formula-graph}

Before tackling transitivity, we first turn our attention to constraint sets \ref{bool_form_1} and \ref{bool_form_2} in isolation, providing a precise characterization of the satisfying assignments of $\Phi_\calC$ in the absence of constraint set \ref{bool_form_3}. This characterization will be instrumental later on when reasoning about the transitivity constraints. In contrast, here Magiera and Faliszewski \cite{top_monotonic} directly employ a 2-CNF solver in order to tell whether sets \ref{bool_form_1} and \ref{bool_form_2} are satisfiable in isolation, but doing so only produces one satisfying assignment, rather than a compact representation of all of them, as will be the case for us. Our approach will hinge on the observation that constraint set \ref{bool_form_2} only consists of equalities. In what follows, we will work with constraint set \ref{bool_form_2_reformulated} instead of \ref{bool_form_2}.

To begin, we define the \emph{formula graph} of $\calC$, denoted by $G_\calC.$ The vertex set $V(G_\calC)$ of $G_\calC$ consists of all ordered pairs $(i, j) \in V^2$ with $i \neq j$; i.e., one vertex per variable in the formula. For a vertex $(i, j)$ in $G_\calC$ we define the \emph{complementary} vertex $\overline{(i, j)}$ to be $(j, i).$ Note that this is a syntactic notation, and is not to be confused with $\overline{x_{i, j}}$, which signifies the logical negation of a variable.
For every constraint $x_u = x_v$ in \ref{bool_form_2_reformulated}, where $u$ and $v$ are pairs of distinct elements in $V$, we add an undirected edge $u \edge v$ to $G_\calC$. Intuitively, each edge in $G_\calC$ signifies an equality that needs to hold in the satisfying assignments of \ref{bool_form_1} and \ref{bool_form_2_reformulated}. We begin with two preliminary lemmas concerning the connected components of the formula graph.

\begin{replemma}{lemma_connected} An assignment $(x_u)_{u \in V(G_\calC)}$ satisfies constraint set \ref{bool_form_2_reformulated} if and only if for each connected component $S$ of $G_\calC$ it holds that $x_u = x_v$ for all $u, v \in S$.
\end{replemma}
\begin{replemma}{lemma_complement_is_cc} Let $S$ be a connected component in $G_\calC$, then $\overline{S} := \{\overline{s} : s \in S \}$ is also a connected component in $G_\calC$. Note moreover that $S = \overline{S}$ or $S \cap \overline{S} = \varnothing.$
\end{replemma}

Hence, in light of Lemma \ref{lemma_connected}, if for any connected component $S$ it holds that $S = \overline{S}$, then \ref{bool_form_1} and \ref{bool_form_2_reformulated} cannot be simultaneously satisfied. Otherwise, the connected components of $G_\calC$ can be paired up into ``complementary'' pairs of distinct connected components $\{S, \overline{S}\}$. In particular, $V(G_\calC) = \cup_{i = 1}^k\left(S_i \cup \overline{S_i}\right)$ is a \emph{complementary pairs partition} of the graph, where $(S_i)_{i \in [k]}$ are connected components of $G_\calC$, one per complementary pair. Note that this partition is unique up to changing the roles of $S_i$ and $\overline{S_i}$ for any $i,$ so we will for simplicity refer to it as ``the'' complementary pairs partition. With this in mind, one can notice that solutions to sets \ref{bool_form_1} and \ref{bool_form_2_reformulated} correspond to choosing for each complementary pair $\{S, \overline{S}\}$ one of the two components and setting to true variables corresponding to it and to false variables corresponding to the other component. This is formalized in the following proposition.

\begin{proposition}\label{thm:full_charact}
If $S = \overline{S}$ for any connected component of $G_\calC$, then there are no assignments satisfying \ref{bool_form_1} and \ref{bool_form_2_reformulated}. Otherwise, let $V(G_\calC) = \cup_{i = 1}^k\left(S_i \cup \overline{S_i}\right)$ be the complementary pairs partition. Then, an assignment satisfies \ref{bool_form_1} and \ref{bool_form_2_reformulated} iff it is obtained from a tuple $(s_1, \ldots, s_k) \in \{0, 1\}^k$ by setting all variables $x_u$ to $s_i$ for $u \in S_i$ and to $\overline{s_i}$ for $u \in \overline{S_i}.$
\end{proposition}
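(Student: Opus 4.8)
The plan is to prove the proposition by assembling the three preliminary lemmas that precede it, treating the two cases separately. In the first case, suppose some connected component satisfies $S = \overline{S}$. Then $S$ contains some pair $u = (i,j)$ together with its complement $\overline{u} = (j,i)$. By Lemma \ref{lemma_connected}, any assignment satisfying \ref{bool_form_2_reformulated} must set $x_u = x_{\overline{u}}$, since $u$ and $\overline{u}$ lie in the same component. But constraint set \ref{bool_form_1} demands $x_{i,j} = \overline{x_{j,i}}$, i.e.\ $x_u = \overline{x_{\overline{u}}}$, which is incompatible with $x_u = x_{\overline{u}}$. Hence no assignment can satisfy both sets simultaneously, establishing the first claim.

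For the second case, I assume no component equals its own complement, so by Lemma \ref{lemma_complement_is_cc} each component $S$ is disjoint from $\overline{S}$ and the complementary pairs partition $V(G_\calC) = \cup_{i=1}^k (S_i \cup \overline{S_i})$ is well-defined. I would prove the stated biconditional in two directions. For the ``if'' direction, take any tuple $(s_1, \ldots, s_k) \in \{0,1\}^k$ and define the assignment by setting $x_u = s_i$ for $u \in S_i$ and $x_u = \overline{s_i}$ for $u \in \overline{S_i}$. I must check this satisfies both \ref{bool_form_1} and \ref{bool_form_2_reformulated}. For \ref{bool_form_1}: given any pair $(i,j)$, the vertices $u = (i,j)$ and $\overline{u} = (j,i)$ are complementary, so by construction of the partition they lie in opposite members of the same complementary pair, whence the assignment gives them opposite values, i.e.\ $x_u = \overline{x_{\overline{u}}}$, exactly as required. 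For \ref{bool_form_2_reformulated}: the assignment is constant on each component by construction, so every equality edge of $G_\calC$ — which by definition of the formula graph connects two vertices in the same component — is respected; by Lemma \ref{lemma_connected} this is precisely the condition for \ref{bool_form_2_reformulated} to hold.

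For the ``only if'' direction, suppose an assignment $(x_u)$ satisfies both \ref{bool_form_1} and \ref{bool_form_2_reformulated}. By Lemma \ref{lemma_connected}, the assignment is constant on each connected component, so for each $i$ there is a value $s_i \in \{0,1\}$ with $x_u = s_i$ for all $u \in S_i$. It remains to argue that $x_u = \overline{s_i}$ for all $u \in \overline{S_i}$, which will exhibit the assignment as arising from the tuple $(s_1, \ldots, s_k)$ in the prescribed manner. Pick any $u \in S_i$; then $\overline{u} \in \overline{S_i}$, and constraint \ref{bool_form_1} forces $x_{\overline{u}} = \overline{x_u} = \overline{s_i}$. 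Since the assignment is constant on the component $\overline{S_i}$ as well, every vertex of $\overline{S_i}$ takes this same value $\overline{s_i}$, completing the argument.

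The reasoning here is essentially bookkeeping, so I do not anticipate a genuine obstacle; the only point demanding care is the interplay between the two constraint sets in the second case, namely confirming that the complementary-pairs structure guaranteed by Lemma \ref{lemma_complement_is_cc} aligns correctly with the negation demanded by \ref{bool_form_1}. Concretely, one must verify that assigning a single free bit $s_i$ per pair is simultaneously forced (by the combination of constancy on components and the antisymmetry flip across complements) and sufficient (leaving no residual constraint unaccounted for). Once the placement of $u$ and $\overline{u}$ in opposite halves of the same pair is made explicit, both directions close cleanly.
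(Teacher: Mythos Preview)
Your proof is correct and follows precisely the approach the paper intends: the paper states Proposition~\ref{thm:full_charact} as a formalization of the informal discussion immediately preceding it and does not spell out a separate proof, relying on Lemmas~\ref{lemma_connected} and~\ref{lemma_complement_is_cc} exactly as you do. Your careful handling of both directions of the biconditional, and of the interaction between constancy on components and the antisymmetry flip, fills in the details the paper leaves to the reader.
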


Note that the condition $S \neq \overline{S}$ for all connected components $S$ is equivalent to checking that for all $u \in V(G_\calC)$ vertices $u$ and $\overline{u}$ are in different connected components.
Hence, checking whether solutions to \ref{bool_form_1} and \ref{bool_form_2_reformulated} exist is equivalent to checking for the latter, which the reader might recognize as the condition used for checking the satisfiability of 
general 2-CNFs.\footnote{If we replace our undirected graph by the directed implications graph and the word ``connected'' by ``strongly-connected.''} However, for general 2-CNFs no characterization of the solution space is known since even counting solutions is \#P-complete (while in our case we know that there are exactly $2^k$ solutions).

Henceforth, we will assume that $V(G_\calC) = \cup_{i = 1}^k\left(S_i \cup \overline{S_i}\right)$ is the complementary pairs partition (otherwise, just report that constraints in $\calC$ cannot be satisfied). Then, Proposition \ref{thm:full_charact} gives that the assignments satisfying \ref{bool_form_1} and \ref{bool_form_2_reformulated} are precisely those for which a single value $x_{S_i}$ is assigned to each connected component $S_i$ and, similarly, for the complementary component $x_{\overline{S_i}} = \overline{x_{S_i}}$ holds. Note that this already shows that checking the satisfiability of a set of NB constraints is FPT with respect to the number $2k$ of connected components, since one could just try out all $2^k$ satisfying assignments for \ref{bool_form_1} and \ref{bool_form_2_reformulated} and check for transitivity in each case.

\section{Transitivity by Acyclic Orientations, the Colorful Graph}

In this section, we provide an equivalent view of the satisfying assignments of sets \ref{bool_form_1} and \ref{bool_form_2_reformulated} as edge-orientations of a certain edge-colored graph. In particular, all edges have a base orientation, and the orientations satisfying \ref{bool_form_1} and \ref{bool_form_2_reformulated} correspond to selecting for each color whether to leave all edges as they are, or reverse the direction of all edges of that color. Edge orientations additionally satisfying constraint set \ref{bool_form_3} will be those inducing a directed acyclic graph.

To begin, recall that $V(G_\calC) = \cup_{i = 1}^k\left(S_i \cup \overline{S_i}\right)$ is the complementary pairs partition. Consider some connected component $S_i.$ This component consists of a set of ordered pairs $S_i = \{(i_1, j_1), (i_2, j_2), \ldots, (i_\ell, j_\ell)\}$. To have a satisfying assignment of \ref{bool_form_1} and \ref{bool_form_2_reformulated}, it has to be the case that either $x_{i_1, j_1} = \ldots = x_{i_\ell, j_\ell} = 1,$ or $x_{i_1, j_1} = \ldots = x_{i_\ell, j_\ell} = 0.$ Both choices lead to satisfying assignments, independently of the choices made for the other components.
By Lemma \ref{lemma_boolean_formula_encodes_ordering}, the previous gives that 
either $[i_1 \triangleleft j_1] = \ldots = [i_\ell \triangleleft j_\ell] = 1,$ or $[i_1 \triangleleft j_1] = \ldots = [i_\ell \triangleleft j_\ell] = 0,$ and the choice is independent across $S_i$'s.

Any linear order $\triangleleft$ over $V$ can be seen as a directed tournament graph with vertex set $V$, where we draw a directed edge $i \rightarrow j$ if $i \triangleleft j$ and a directed edge $j \rightarrow i$ if $j \triangleleft i.$ Consequently, the constraint $[i_1 \triangleleft j_1] = \ldots = [i_\ell \triangleleft j_\ell]$ induced by some connected component $S_i$ of the formula graph is equivalent in the tournament of $\triangleleft$ to the fact that we either have edges $i_1 \rightarrow j_1, \ldots, i_\ell \rightarrow j_\ell,$ or edges $j_1 \rightarrow i_1, \ldots, j_\ell \rightarrow i_\ell.$ In other words, every set $S_i$ is a directed set of edges that in the tournament of $\triangleleft$ either has to have the given orientation, or exactly the reverse orientation. We call these two options for $S_i$ its two possible \emph{orientations}. We call a choice between the two options for each of the sets $S_1, \ldots, S_k$ an \emph{orientation} of all edge sets.
The following shows that out of those orientations, the ones that
additionally
satisfy set \ref{bool_form_3} are those where the resulting directed tournament graph is acyclic. For brevity, we call such orientations \emph{acyclic}.

\begin{proposition} \label{prop_feasible_iff_acyclic} A linear order $\triangleleft$ over $V$ satisfies the NB constraints in $\calC$ if and only if it is given by an acyclic orientation of the edge sets $S_1, \ldots, S_k.$
\end{proposition}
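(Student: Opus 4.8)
The plan is to chain together the two structural results already established, reducing the claim to a classical fact about tournaments. By Lemma \ref{lemma_boolean_formula_encodes_ordering}, the linear orders $\triangleleft$ over $V$ satisfying $\calC$ are in bijection with the satisfying assignments of $\Phi_\calC$ via $x_{i,j} = [i \triangleleft j]$; that is, with the assignments satisfying all three constraint sets \ref{bool_form_1}, \ref{bool_form_2_reformulated} and \ref{bool_form_3} simultaneously. On the other hand, Proposition \ref{thm:full_charact} together with the preceding discussion shows that the assignments satisfying \ref{bool_form_1} and \ref{bool_form_2_reformulated} are exactly the orientations of the edge sets $S_1, \ldots, S_k$. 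Hence it suffices to prove that an orientation — equivalently, an assignment already satisfying \ref{bool_form_1} and \ref{bool_form_2_reformulated} — additionally satisfies the transitivity set \ref{bool_form_3} if and only if the directed tournament graph it induces on $V$ is acyclic.

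For this last step I would first note that any assignment satisfying \ref{bool_form_1} picks, for each unordered pair $\{i,j\}$, exactly one of the directed edges $i \rightarrow j$ or $j \rightarrow i$, and therefore defines a tournament on $V$ (draw $i \rightarrow j$ precisely when $x_{i,j} = 1$). Read off this tournament, each clause $(x_{i,j} \land x_{j,k}) \rightarrow x_{i,k}$ of \ref{bool_form_3} asserts exactly that whenever $i \rightarrow j$ and $j \rightarrow k$ are present, so is $i \rightarrow k$; thus \ref{bool_form_3} holds if and only if the tournament is transitive. It then remains to invoke the standard equivalence that a tournament is transitive exactly when it is acyclic.

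Both directions of this equivalence are short. If the orientation is not acyclic, fix a simple directed cycle $v_1 \rightarrow v_2 \rightarrow \cdots \rightarrow v_t \rightarrow v_1$; were \ref{bool_form_3} satisfied, repeated single-step transitivity along the path $v_1 \rightarrow v_2 \rightarrow \cdots \rightarrow v_t$ (whose vertices are distinct) would force the edge $v_1 \rightarrow v_t$, contradicting the present edge $v_t \rightarrow v_1$ together with antisymmetry \ref{bool_form_1}. Hence a satisfied \ref{bool_form_3} implies acyclicity. Conversely, if \ref{bool_form_3} is violated there are distinct $i, j, k$ with $i \rightarrow j$ and $j \rightarrow k$ but not $i \rightarrow k$; by antisymmetry the latter means $k \rightarrow i$, so $i \rightarrow j \rightarrow k \rightarrow i$ is a directed triangle and the orientation is not acyclic. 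Composing this equivalence with the two earlier correspondences yields the proposition. I expect no genuine obstacle here: the statement is essentially the composition of Lemma \ref{lemma_boolean_formula_encodes_ordering}, Proposition \ref{thm:full_charact}, and the classical tournament fact, and the only point requiring care is verifying that the orientation-to-tournament dictionary aligns \ref{bool_form_3} with transitivity, which is immediate from the definitions.
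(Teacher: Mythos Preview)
Your proposal is correct and spells out precisely the argument the paper leaves implicit: the paper states Proposition~\ref{prop_feasible_iff_acyclic} without proof, relying on the surrounding discussion that orientations of $S_1,\ldots,S_k$ are exactly the assignments satisfying \ref{bool_form_1} and \ref{bool_form_2_reformulated}, and that the remaining transitivity set \ref{bool_form_3} is equivalent to acyclicity of the induced tournament. Your chaining of Lemma~\ref{lemma_boolean_formula_encodes_ordering}, Proposition~\ref{thm:full_charact}, and the standard transitive-iff-acyclic fact for tournaments is exactly the intended route, and the two directions you give for the last equivalence are the natural ones.
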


Some edge sets (connected components in the formula graph) $S_i$ will be singletons; i.e., $|S_i| = 1$. Such sets are not relevant for deciding whether an acyclic orientation exists, because singletons can just be oriented freely after everything else has been oriented. More formally, without loss of generality, assume that edge sets $S_1, \ldots, S_\ell$ are non-singletons; i.e., $|S_i| > 1$ for $i \in [\ell]$; and that edge sets $S_{\ell + 1}, \ldots, S_k$ are singletons; i.e., $|S_i| = 1$ for $\ell < i \leq k$. Then, we have the following:

\begin{lemma} \label{lemma_acyclic_and_non_singleton} An acyclic orientation of edge sets $S_1, \ldots, S_k$ exists if and only if an acyclic orientation of edge sets $S_1, \ldots, S_\ell$ exists.
\end{lemma}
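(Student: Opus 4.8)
The plan is to prove both directions of the biconditional, with the forward direction being essentially trivial and the backward direction carrying all the content. For the forward direction, suppose an acyclic orientation of all edge sets $S_1, \ldots, S_k$ exists. Restricting attention to $S_1, \ldots, S_\ell$ and using the very same orientation choices, we obtain an orientation of these sets; it must be acyclic because any directed cycle among the edges of $S_1, \ldots, S_\ell$ would already be a directed cycle in the full oriented graph, contradicting acyclicity of the original orientation. So the restriction of an acyclic orientation is acyclic.

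The backward direction is where the real argument lies, and the key idea is the one flagged in the surrounding text: singletons can be oriented greedily \emph{after} everything else has been fixed. Suppose we are given an acyclic orientation of the non-singleton edge sets $S_1, \ldots, S_\ell$. This fixes the direction of every edge belonging to a non-singleton component; call the resulting directed graph $D$ on vertex set $V$, and note $D$ is acyclic by hypothesis. The remaining edges are exactly the singleton sets $S_{\ell+1}, \ldots, S_k$, each of which is a single ordered pair $(i, j)$ that we are free to orient either as $i \to j$ or $j \to i$. I would first observe that, since $\triangleleft$ is ultimately a tournament on $V$, and since by Lemma \ref{lemma_complement_is_cc} the pairs $(i, j)$ and $(j, i)$ lie in complementary components $S$ and $\overline{S}$, each unordered pair $\{i, j\}$ of $V$ is represented by exactly one edge set; the singleton edge sets therefore correspond to a disjoint collection of vertex pairs not already oriented by $D$, and orienting all of them completes $D$ into a full tournament on $V$.

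The crux is then to show that one can extend the acyclic $D$ to a full tournament that remains acyclic. Because $D$ is a directed acyclic graph, it admits a topological order $\pi$ of its vertices consistent with all edges already present (every edge of $D$ goes forward in $\pi$). I would orient each singleton edge $\{i, j\}$ forward along $\pi$ as well, i.e.\ direct it from the $\pi$-earlier endpoint to the $\pi$-later one. After orienting every singleton this way, \emph{all} edges of the resulting tournament respect the single linear order $\pi$, so the tournament is precisely the transitive tournament of $\pi$ and hence acyclic. This yields an acyclic orientation of $S_1, \ldots, S_k$, completing the backward direction.

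The main obstacle I anticipate is the bookkeeping needed to confirm that the singleton edges genuinely constitute the ``missing'' pairs of the tournament and that orienting them along a topological order of $D$ is actually a \emph{valid} choice of orientation for each singleton set — that is, that choosing the $\pi$-forward direction is always one of the two permitted options for that set. This is immediate for singletons, since a singleton $\{(i,j)\}$ may be oriented either as $i \to j$ or $j \to i$ with no further constraint, so whichever of the two agrees with $\pi$ is admissible. One subtlety worth checking is that distinct singleton sets correspond to distinct unordered vertex pairs and that no unordered pair is split between a singleton set and a non-singleton set; this follows from the complementary pairs partition, since each unordered pair $\{i,j\}$ occurs in exactly one complementary pair $\{S, \overline{S}\}$ and thus contributes a single edge to the union of edge sets. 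Once this is in place, the topological-order argument goes through cleanly and the proof concludes.
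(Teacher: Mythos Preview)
Your proof is correct and follows exactly the approach the paper sketches in the sentence preceding the lemma: restrict for the forward direction, and for the backward direction take a topological order of the acyclic graph on $S_1,\ldots,S_\ell$ and orient each singleton edge forward along it. The paper does not spell out a formal proof beyond that one-line intuition, so your elaboration (in particular the topological-order argument and the observation that each unordered pair lies in exactly one complementary pair) is a faithful and complete expansion of what the paper leaves implicit.
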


Armed as such, we now introduce the \emph{colorful graph} corresponding to the NB constraint set $\calC.$ The colorful graph is a directed graph with vertex set $V$ and edges colored in $\ell$ colors, identified by the numbers in $[\ell].$ For each $i \in [\ell]$ the edges colored with color $i$ are those in edge set $S_i.$

Note how for any $u, v \in V$ with $u \neq v$ at most one of the edges $u \rightarrow v$ and $v \rightarrow u$ can appear in $S_1, \ldots, S_\ell$. Moreover, an edge cannot occur in more than one of $S_1, \ldots, S_\ell$. As a result, the colorful graph has neither parallel nor anti-parallel edges, and no self-loops. When either edge $a \rightarrow b$ or $b \rightarrow a$ exists in the colorful graph, we say that $a \edge b$ exists in the colorful graph. In this case, we write $c(a, b) = c(b, a)$ to denote the color of the respective oriented edge. 
When the ``undirected'' edge $a \edge b$ exists in the graph, and the corresponding directed edge is $a \rightarrow b$, we say that $a \edge b$ is \emph{oriented} from $a$ to $b.$ Finally, when edge $a \rightarrow b$ ($a\edge b$) bears color $c$, we write $a \xrightarrow[]{c} b$ ($a \overset{c}{\edge } b$). Note that the colorful graph is only defined for NB constraint sets $\calC$ for which the complementary pairs partition 
exists.

We can now reformulate our task in terms of the colorful graph. In particular, we want to check whether the edges of the colorful graph can be reoriented such that the resulting graph is acyclic, under the constraint that for each color we choose between flipping (reversing) the direction of all edges of that color or doing nothing. When flipping the direction of edges of a certain color, we say that we ``flipped'' that color.

\newcommand{\matrixcolumnwidth}{5pt}
\newcommand{\hcr}[1]{{\textcolor{red}{#1}}}
\newcommand{\hcb}[1]{{\textcolor{blue}{#1}}}
\newcommand{\hco}[1]{{\textcolor{orange}{#1}}}
\newcommand{\hc}[1]{{\hco{#1}}}
\begin{figure}[t]
\centering
\begin{subfigure}{.2\textwidth}
    \centering
    \begin{tabular}{*{5}{p{\matrixcolumnwidth}}}
        \textbf{1} & \textbf{2} & \textbf{3} & \textbf{4} & \textbf{5} \\
        \hline
        0 & \textbf{1} & 0 & \textbf{1} & 0 \\
        0 & 0 & 0 & 0 & \textbf{1} \\
        0 & 0 & \textbf{1} & \textbf{1} & 0 \\
        \textbf{1} & \textbf{1} & 0 & \textbf{1} & 0 \\ 
        \textbf{1} & 0 & 0 & 0 & 0 
    \end{tabular}
    \vspace{0.45cm}
    \caption{Profile $\calP_1.$}
    \label{fig:example-colorful-graph-for-approval-a}
\end{subfigure}
\begin{subfigure}{.24\textwidth}
    \centering
    \includegraphics[width=.85\linewidth]{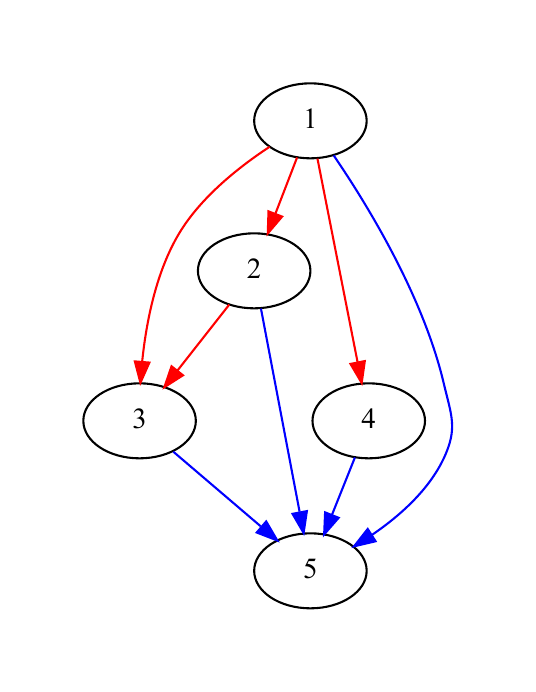}
    \caption{Col.~graph of $\calP_1.$}
    \label{fig:example-colorful-graph-for-approval-b}
\end{subfigure}
\begin{subfigure}{.25\textwidth}
    \centering
    \begin{tabular}{*{7}{p{\matrixcolumnwidth}}}
        \textbf{1} & \textbf{2} & \textbf{3} & \textbf{4} & \textbf{5} & \textbf{6} & \textbf{7} \\
        \hline
        0 & 0 & 0 & 0 & 0 & \textbf{1} & 0 \\
        \textbf{1} & 0 & 0 & \textbf{1} & 0 & 0 & 0 \\
        0 & \textbf{1} & 0 & 0 & 0 & 0 & 0 \\
        0 & 0 & \textbf{1} & 0 & \textbf{1} & 0 & 0 \\
        \textbf{1} & 0 & 0 & 0 & 0 & 0 & 0 \\
        0 & 0 & \textbf{1} & 0 & 0 & \textbf{1} & 0 \\
        \textbf{1} & 0 & 0 & 0 & 0 & 0 & \textbf{1}
    \end{tabular}
    \caption{Profile $\calP_2.$}
    \label{fig:example-colorful-graph-for-approval-c}
\end{subfigure}
\begin{subfigure}{.28\textwidth}
    \centering
    \includegraphics[width=.96\linewidth]{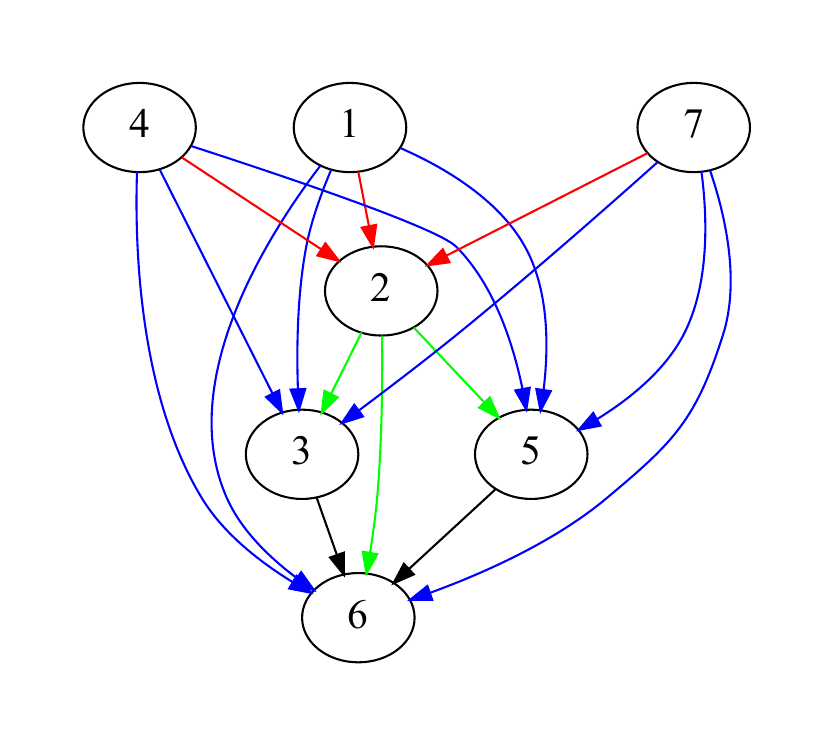}
    \caption{Colorful graph of $\calP_2.$}
    \label{fig:example-colorful-graph-for-approval-d}
\end{subfigure}
\caption{Two example preference profiles $\calP_1, \calP_2,$ together with the colorful graphs they induce. In Fig.~\ref{fig:example-colorful-graph-for-approval-d} all colors are biclique colors. Namely, the red edges are given by $\{1, 4, 7\} \times \{2\},$ the green edges by $\{2\} \times \{3, 5, 6\},$ the black edges by $\{3, 5\} \times \{6\}$ and the blue edges by $\{1, 4, 7\} \times \{3, 5, 6\}.$ Out of these, red, green and black are star biclique. In general, not all colors will be biclique, e.g., in Fig.~\ref{fig:example-colorful-graph-for-approval-b}, only the blue color is biclique.}
\label{fig:example-colorful-graph-for-approval}
\end{figure}

\section{The Case of Approval Ballots}

We now switch our attention to colorful graphs induced by approval preference profiles $\calP;$ i.e., corresponding to the set of NB constraints $\calC_\calP.$ The structure of these graphs is still combinatorially rich, making reasoning about acyclic orientations challenging, but at the same time significantly less general than for arbitrary sets of NB constraints. Examples of such induced colorful graphs can be found in Fig.~\ref{fig:example-colorful-graph-for-approval}. For brevity, we henceforth assume that we have already checked in polynomial time whether there are any monochromatic cycles and rejected the input if so. In this section we will prove that if the colorful graph has passed this test, then an acyclic orientation is guaranteed to exist and can be determined in polynomial time. To show this, we prove a number of structural results about inducible colorful graphs, centering around the concept of \emph{biclique colors}, which are colors whose corresponding edges can be described by a cartesian product $A \times B$ for two disjoint sets $A, B \subseteq V.$ A color is \emph{star biclique} if it is a biclique color with either $|A| = 1$ or $|B| = 1.$ The two concepts are illustrated in Fig.~\ref{fig:example-colorful-graph-for-approval}. Not all colors in the inducible colorful graphs are biclique, as shown in Fig.~\ref{fig:example-colorful-graph-for-approval-b}. However, our result concerning them will be that whenever edges $a \edge b \edge c \edge a$ exist in the colorful graph and bear different colors, these three colors are biclique, and, moreover, the three cartesian products describing them are of the form $A \times B,$ $B \times C$ and $C \times A.$ In Fig.~\ref{fig:example-colorful-graph-for-approval-d} this can be observed for the triangle $1 \edge 2 \edge 3 \edge 1.$ Using this result, together with an analogue for star biclique colors, we then prove that if some orientation of the colorful graph has a cycle, then the smallest such cycle is of length three, showing that, intuitively, triangles are all one needs to worry about. To complete the proof of the main result, we then show that non-monochromatic triangles neccesarily consist of three different colors, implying that we can ignore all colors taking part in no three-colored triangles and focus on the rest. Afterward, a direct argument reusing our theorem about three-colored triangles can be used to construct an acyclic orientation of the colorful graph in polynomial time, proving the main claim of the paper.

\subsection{Biclique Colors} \label{sect:biclique}

From now on, all colorful graphs we consider are induced by approval preferences, unless stated otherwise. In this section, the goal is the prove that whenever $a \edge b \edge c \edge a$ is a three-colored triangle in the colorful graph, then these three colors are biclique and their edges are given by $A \times B,$ $B \times C$ and $C \times A$ for some disjoint sets $A, B, C \subseteq V$ where $a \in A, b \in B, c \in C.$ Following up, we then also give an instrumental analog for star biclique colors, namely that if $a \edge b \edge c$ are different-colored edges in the colorful graph, and edge $a \edge c$ does not exist in the colorful graph, then there is a set $B \subseteq V \setminus \{a, c\}$ with $b \in B$ such that the two colors consist of edges $\{a\} \times B$ and $B \times \{c\}$ respectively. This can be intuitively thought of as the case $|A| = |C| = 1$ of the result for three-colored triangles. Henceforth, unless stated otherwise, the preference profiles $\calP$ that we consider consist of approval ballots, and are represented through approval matrices.

Our high-level argument will hinge on a number of lower-level results concerning the effect that small local structures in the approval matrix have on the formula and colorful graphs. Below is the first such result that we will need. Its proof is a combination of direct reasoning with case analysis over $4 \times 6$ partially filled-in approval matrices. While we have considerably simplified the proof over naive case exhaustion, we found the details inessential, so left them for the appendix.


\begin{figure}[t]
    \centering
    \begin{subfigure}{.25\textwidth}
    \centering
    \includegraphics[width=.95\linewidth]{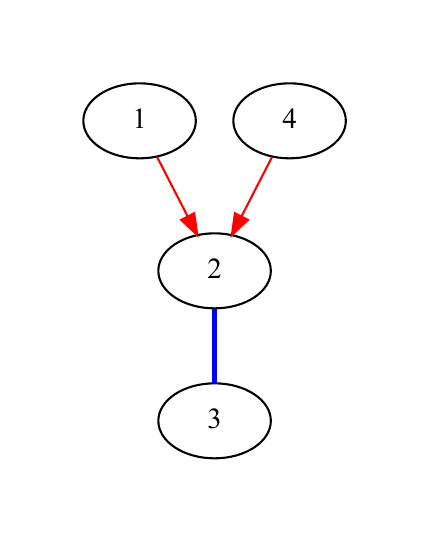}
    \caption{Assumed colors.}
    \label{fig:lemma-middle-a}
    \end{subfigure}
    \begin{subfigure}{.25\textwidth}
    \centering
    \includegraphics[width=.95\linewidth]{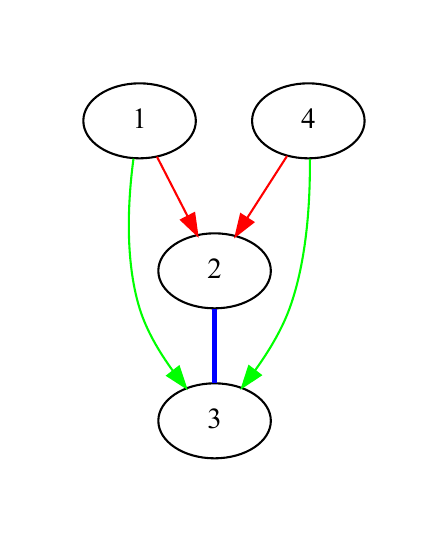}
    \caption{Implied colors.}
    \label{fig:lemma-middle-b}
    \end{subfigure}
    \caption{Illustration of Lemma \ref{lemma-middle}. Fig.~\ref{fig:lemma-middle-a} depicts the assumptions:~edge $(1, 2) \edge (4, 2)$ exists in the formula graph (red) and edge $(2, 3)$ exists in the colorful graph and bears a different color (blue). Additionally, edge $(1, 3)$ is neither red nor blue. Fig.~\ref{fig:lemma-middle-b} shows the edge colors implied by the setup, where green is a color different from red and blue.
    }
    \label{fig:lemma-middle}
\end{figure}

\begin{replemma}{lemma-middle} Consider four voters, say $1, 2, 3, 4,$ such that the formula graph contains the edge $(1, 2) \edge (4, 2)$ and that edge $(2, 3)$ exists in the colorful graph and has a different color than $(1, 2).$ If edge $(1, 3)$ is either not present in the colorful graph, or it has a different color than $(1, 2)$ and $(2, 3)$, then edge $(1, 3) \edge (4, 3)$ is guaranteed to be in the formula graph. The scenario is illustrated in Fig.~\ref{fig:lemma-middle}.
\end{replemma}

It is crucial to notice at this point that Lemma \ref{lemma-middle}, as well as all other results of a similar flavor that we will prove, continue to hold if some of the involved voters coincide. This can be seen by imagining cloning the recurring voters and applying the result with distinct voters. Armed with the previous lower-level result, we are now ready to prove the main result of the section, concerning the biclique structure of colors involved in three-colored triangles.

\begin{figure}[t]
    \centering
    \begin{subfigure}{.35\textwidth}
        \centering
        \includegraphics[width=.8\linewidth]{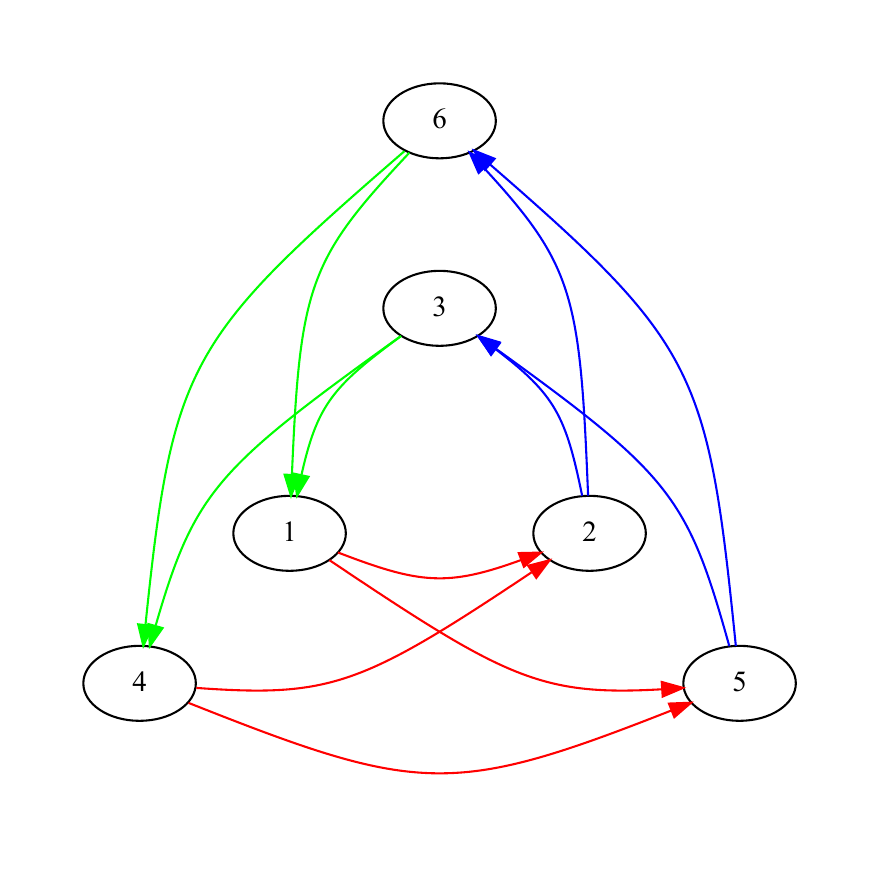}
        \caption{Col.~graph for $(S_1, S_2, S_3).$}
        \label{fig:lemma-triangle-cliques-a}
    \end{subfigure}
    \begin{subfigure}{.31\textwidth}
        \centering
        \includegraphics[width=.8\linewidth]{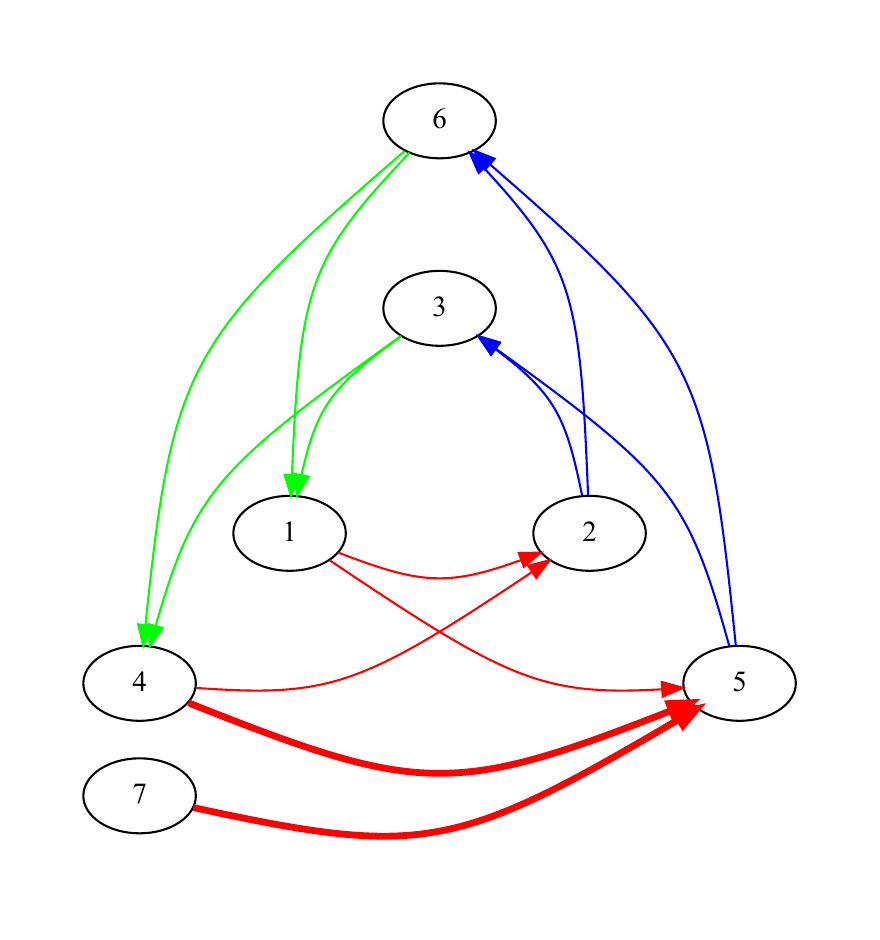}
        \caption{Assume $(4, 5, 7) \in \calC_\calP.$}
        \label{fig:lemma-triangle-cliques-b}
    \end{subfigure}
    \begin{subfigure}{.32\textwidth}
        \centering
        \includegraphics[width=.8\linewidth]{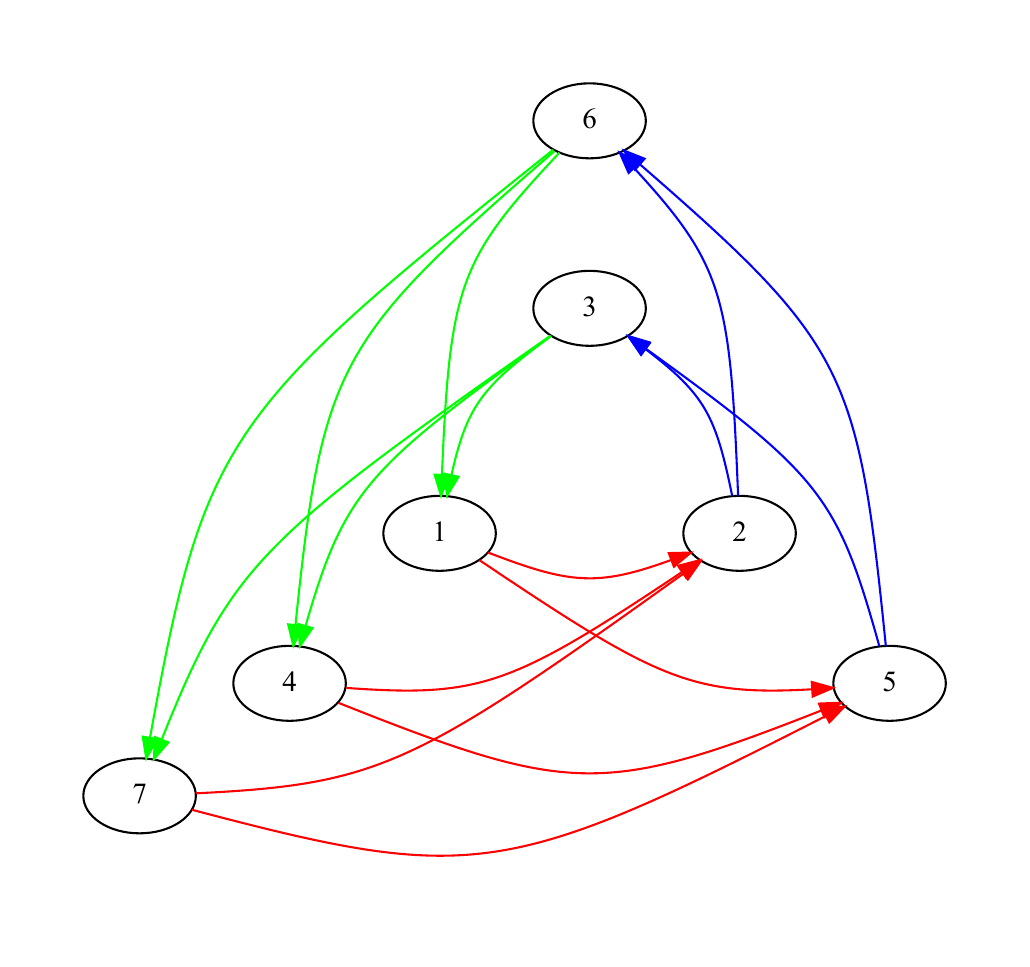}
        \caption{Contradicts maximality.}
        \label{fig:lemma-triangle-cliques-c}
    \end{subfigure}
    \caption{Illustration of the proof of Theorem \ref{lemma-triangle-cliques}. 
    We consider an inclusion-maximal triple $(S_1, S_2, S_3)$ such that $1 \in S_1, 2 \in S_2, 3 \in S_3$ and $S_1 \times S_2,$ $S_2 \times S_3$ and $S_3 \times S_1$ are connected in the formula graph, in this case $S_1 = \{1, 4\}, S_2 = \{2, 5\}$ and $S_3 = \{3, 6\}$ (Fig.~\ref{fig:lemma-triangle-cliques-a}). Then, we assume for a contradiction that these connected sets are not connected components, in this case because of NB constraint $(4, 5, 7)$ (Fig.~\ref{fig:lemma-triangle-cliques-b}). 
    Finally, this implies that $(S_1 \cup \{7\}, S_2, S_3)$ also has the property, contradicting 
    maximality (Fig.~\ref{fig:lemma-triangle-cliques-c}).
    }
    \label{fig:lemma-triangle-cliques}
\end{figure}

\begin{theorem} \label{lemma-triangle-cliques} Consider three voters, say $1, 2$ and $3,$ such that edges $(1, 2),$ $(2, 3)$ and $(3, 1)$ exist in the colorful graph and have different colors, then there exist disjoint sets of voters $S_1,$ $S_2$ and $S_3$ such that $1 \in S_1$, $2 \in S_2$ and $3 \in S_3$ and the connected components of $(1, 2),$ $(2, 3)$ and $(3, 1)$ in the formula graph are $S_1 \times S_2,$ $S_2 \times S_3$ and $S_3 \times S_1$ respectively.
\end{theorem}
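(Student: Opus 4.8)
The plan is to produce the three sets by a maximality argument, following the illustration in Fig.~\ref{fig:lemma-triangle-cliques}. Write $C_{12}, C_{23}, C_{31}$ for the connected components of $(1,2), (2,3), (3,1)$ in the formula graph; by hypothesis these are three distinct colors, say red, green and blue respectively. Call a triple of pairwise disjoint sets $(S_1, S_2, S_3)$ with $1 \in S_1, 2 \in S_2, 3 \in S_3$ \emph{good} if $S_1 \times S_2 \subseteq C_{12}$, $S_2 \times S_3 \subseteq C_{23}$ and $S_3 \times S_1 \subseteq C_{31}$. The triple $(\{1\}, \{2\}, \{3\})$ is good and goodness is bounded by $|V|$, so I would take a good triple $(S_1, S_2, S_3)$ of maximum total size and argue that for it the three inclusions are equalities, which is exactly the statement.

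Suppose not. The three components play cyclically symmetric roles (under $1 \mapsto 2 \mapsto 3 \mapsto 1$), and reversing the societal axis --- equivalently applying the formula-graph automorphism $(i,j) \mapsto (j,i)$, which swaps each component with its complement and hence the two sides of each product --- turns growth in the first coordinate of $C_{12}$ into growth in the second. Thus I may assume $S_1 \times S_2 \subsetneq C_{12}$ and that the witnessing boundary edge of the connected set $C_{12}$ leaves $S_1 \times S_2$ through a shared second coordinate: there are $a \in S_1$, $b \in S_2$ and $a' \notin S_1$ with $(a,b,a') \in \calC$, so that $(a,b) \edge (a',b)$ lies in the formula graph and $(a',b) \in C_{12}$. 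The goal becomes to show that $(S_1 \cup \{a'\}, S_2, S_3)$ is again good, contradicting maximality.

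The only new products to control are $\{a'\} \times S_2$ and $S_3 \times \{a'\}$. The crux is that one cannot transport the constraint $(a,b,a')$ directly along $S_2$, since there need be no colorful edge between two elements of $S_2$; so I would route every application of Lemma \ref{lemma-middle} through the third set $S_3$, whose edges to $S_1$ and $S_2$ are guaranteed to exist and to carry the two other colors. Concretely, for $\gamma \in S_3$ I apply Lemma \ref{lemma-middle} to the voters $a, b, \gamma, a'$: the edge $(b,\gamma)$ lies in $C_{23}$ (green) and $(a,\gamma)$ in $\overline{C_{31}}$ (blue), both different from red, so the lemma yields $(a,\gamma) \edge (a',\gamma)$, i.e.\ $(\gamma, a') \in C_{31}$, giving $S_3 \times \{a'\} \subseteq C_{31}$. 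Feeding these new constraints back in, for $\beta \in S_2$ I apply Lemma \ref{lemma-middle} to $a, \gamma, \beta, a'$: now $(\gamma,\beta) \in \overline{C_{23}}$ (green) and $(a,\beta) \in C_{12}$ (red) are both different from the blue color of $(a,\gamma)$, so the lemma yields $(a,\beta) \edge (a',\beta)$, i.e.\ $(a',\beta) \in C_{12}$, giving $\{a'\} \times S_2 \subseteq C_{12}$. Disjointness of $a'$ from $S_2$ and $S_3$ then drops out: were $a' \in S_2$ (resp.\ $a' \in S_3$), one of the pairs just placed in $C_{31}$ (resp.\ $C_{12}$) would simultaneously lie in $\overline{C_{23}}$, contradicting that these are different colors. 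Throughout I rely on the remark after Lemma \ref{lemma-middle} that it remains valid when some of the four voters coincide, since before disjointness is established $a'$ could a priori equal an element of $S_2 \cup S_3$.

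The main obstacle is precisely this transport step: the naive one-step propagation fails exactly when source and target lie in the same set $S_i$, and the insight that saves the argument is to detour through the third set, where the color hypotheses of Lemma \ref{lemma-middle} are automatically met because the three triangle colors are distinct. The remaining bookkeeping --- setting up the maximal good triple, invoking the cyclic and axis-reversal symmetries to collapse the four boundary-edge cases to the single one treated above, and the disjointness check --- is routine once this routing idea is in place.
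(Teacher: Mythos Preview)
Your proof is correct and follows essentially the same route as the paper: take a maximal triple, find a boundary edge of the form $(a,b)\edge(a',b)$, apply Lemma~\ref{lemma-middle} first through every $\gamma\in S_3$ and then back through every $\beta\in S_2$, and finish with the disjointness check. The only cosmetic difference is that you define ``good'' via containment $S_1\times S_2\subseteq C_{12}$ etc., whereas the paper additionally asks the induced subgraphs to be connected; your weaker invariant suffices and slightly streamlines the extension step, but the argument is otherwise identical.
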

\begin{proof} Our proof strategy is summarized in Fig.~\ref{fig:lemma-triangle-cliques}. Consider an inclusion-maximal triple of disjoint sets $(S_1, S_2, S_3)$ such that $1 \in S_1$, $2 \in S_2,$ $3 \in S_3$ and the subgraphs of the formula graph induced by $S_1 \times S_2,$ $S_2 \times S_3$ and $S_3 \times S_1$ are connected. By inclusion-maximal we mean that for every $v \notin (S_1 \cup S_2 \cup S_3)$ the triples $(S_1 \cup \{v\}, S_2, S_3),$ $(S_1, S_2 \cup \{v\}, S_3)$ and $(S_1, S_2, S_3 \cup \{v\})$ do not satisfy the property. Note that the connectivity assumption implies that in the colorful graph edges in $S_1 \times S_2$ have the color of $(1, 2)$, edges in $S_2 \times S_3$ have the color of $(2, 3)$ and edges in $S_3 \times S_1$ have the color of $(3, 1)$. Denote by $C_{12}, C_{23}$ and $C_{31}$ the connected components of $(1, 2),$ $(2, 3)$ and $(3, 1)$ in the formula graph. We will now show that $C_{12} = S_1 \times S_2,$ $C_{23} = S_2 \times S_3$ and $C_{31} = S_3 \times S_1$. Assume for a contradiction that this was not the case, then, without loss of generality, $C_{12} \neq S_1 \times S_2$. Together with the above, this means that $S_1 \times S_2 \subsetneq C_{12}.$ Because $C_{12}$ induces a connected subgraph in the formula graph, and $\varnothing \neq S_1 \times S_2 \subsetneq C_{12},$ it follows that there is an edge in the formula graph crossing the cut $(S_1 \times S_2, C_{12} \setminus (S_1 \times S_2))$. Without loss of generality, this edge is of the form $(v_1, v_2) \edge (x, v_2),$ where $v_1 \in S_1, v_2 \in S_2$ and $x \notin S_1$.

Now, we show that for every $v_3 \in S_3$ there is an edge $(v_1, v_3) \edge (x, v_3)$ in the formula graph. Note that this means that all edges of the form $(x, v_3)$ for $v_3 \in S_3$ have the color of $(1, 3)$. Consider an arbitrary $v_3 \in S_3$, and note that we can instantiate Lemma \ref{lemma-middle} with $1 \mapsto v_1, 2 \mapsto v_2, 3 \mapsto v_3$ and $4 \mapsto x$ because $(v_1, v_2) \edge (x, v_2)$ is in the formula graph and $(v_1, v_2),$ $(v_2, v_3)$ and $(v_3, v_1)$ bear the colors of $(1, 2),$ $(2, 3)$ and $(3, 1)$, which are different. The instantiation gives us that there is an edge $(v_1, v_3) \edge (x, v_3)$ in the formula graph, as required.

From the above, we know that the edge $(v_1, 3) \edge (x, 3)$ exists in the formula graph. Similarly, we now show that for every $v_2' \in S_2$ there is an edge $(v_1, v_2') \edge (x, v_2')$ in the formula graph. This is done analogously, by invoking Lemma \ref{lemma-middle} with $1 \mapsto v_1, 2 \mapsto 3, 3 \mapsto v_2'$ and $4 \mapsto x$, which is sound because $(v_1, 3) \edge (x, 3)$ is in the formula graph and $(v_1, 3),$ $(3, v_2')$ and $(v_2', v_1)$ bear the colors of $(1, 3),$ $(3, 2)$ and $(2, 1)$, which are different. 

At this point, we have shown that $(S_1 \cup \{x\}) \times S_2$ and $S_3 \times (S_1 \cup \{x\})$ both induce connected subgraphs in the formula graph; in particular, all edges from $x$ to $S_2$ have the color of $(1, 2)$ and all edges from $x$ to $S_3$ have the color of $(1, 3)$. We will now additionally show that $x \notin (S_2 \cup S_3)$, from which the triple $(S_1 \cup \{x\}, S_2, S_3)$ also satisfies the property we started with, contradicting the hypothesis that $(S_1, S_2, S_3)$ was inclusion-maximal. First, if $x \in S_3,$ then $(1, 2)$ and $(2, 3)$ immediately have the same color, which cannot be the case. Otherwise, if $x \in S_2,$ then from our argument $(x, 3)$ has the same color as $(1, 3)$, while the assumptions tell us that it has the same color as $(2, 3)$, meaning that $(1, 3)$ and $(2, 3)$ share the same color, which is again not possible.
\end{proof}

The following result can be seen as a star biclique analogue of Theorem \ref{lemma-triangle-cliques}. Intuitively, one can think of this as the orthogonal case where the edge $1 \edge 3$ does not exist in the colorful graph and $S_1 = \{1\}, S_3 = \{3\}.$ While the two proofs are very similar in spirit, we could not find a transparent way of unifying the two arguments, partly because of less symmetry in the scenario. In particular, this lack of symmetry will require a second low-level result in the spirit of Lemma \ref{lemma-middle}, stated and proven in the appendix. See the appendix for the relevant proofs.

\begin{reptheorem}{lemma-no-edge-cliques} Consider three voters, say $1, 2$ and $3,$ such that edges $(1, 2)$ and $(2, 3)$ exist in the colorful graph and have different colors, while edge $(1, 3)$ does not exist in the colorful graph. Then, there exists a set of voters $S$ such that $2 \in S$ and $1, 3 \notin S$ with the property that the connected components of $(1, 2)$ and $(2, 3)$ in the formula graph are $\{1\} \times S$ and $S \times \{3\}$ respectively.
\end{reptheorem}

\subsection{Only Three-Colored Triangles Matter, the Main Result}\label{sec:only-triangles-matter}

In this section we prove that if some orientation of the colorful graph has a cycle, then the smallest such cycle is a triangle. Then, we show that non-monochromatic triangles neccesarily consist of three different colors. Finally, we prove that any colorful graph with no monochromatic cycles has an acyclic orientation, hinging on Theorem \ref{lemma-triangle-cliques}.

\begin{lemma} \label{lemma-only-triangles-matter-for-orientation} An orientation of the colors in the colorful graph is acyclic if and only if it induces no directed cycles of length three.
\end{lemma}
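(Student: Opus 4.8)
The plan is to handle the trivial direction first and then reduce the real content to a statement about shortest cycles. An acyclic orientation has no directed cycles at all, so in particular none of length three, giving the forward implication for free. For the converse I would prove the contrapositive in the sharp form ``if some orientation admits a directed cycle, then it admits one of length three.'' Since the colorful graph has no self-loops and no anti-parallel edges, every directed cycle has length at least three, so it suffices to rule out a \emph{shortest} directed cycle of length $L \ge 4$. Fix such a cycle $v_1 \to v_2 \to \cdots \to v_L \to v_1$ in the current orientation. Minimality forces it to be \emph{induced}: no edge of the colorful graph joins two non-consecutive cycle vertices $v_a \edge v_b$, for either orientation of such a chord would splice the cycle into a strictly shorter directed cycle (of length between $3$ and $L-1$), contradicting minimality.

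The main engine exploits that Theorem \ref{lemma-no-edge-cliques} applies at every ``corner'' $v_{i-1} \to v_i \to v_{i+1}$: both incident edges exist, and, by the induced property together with $L \ge 4$, the edge $v_{i-1} \edge v_{i+1}$ is absent. Hence, \emph{provided the two corner edges carry different colors}, the theorem pins down the formula-graph connected component of $(v_{i-1}, v_i)$ as $\{v_{i-1}\} \times S$ and that of $(v_i, v_{i+1})$ as $S \times \{v_{i+1}\}$ for a suitable $S$. I would apply this at two overlapping corners. From $v_{i-1} \to v_i \to v_{i+1}$ the component of $(v_i, v_{i+1})$ equals $S \times \{v_{i+1}\}$ (every pair shares the second coordinate $v_{i+1}$); from $v_i \to v_{i+1} \to v_{i+2}$ the same component equals $\{v_i\} \times T$ (every pair shares the first coordinate $v_i$). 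A set of ordered pairs with both a fixed first and a fixed second coordinate is the singleton $\{(v_i, v_{i+1})\}$. But singleton components carry no color and are absent from the colorful graph, whereas $(v_i, v_{i+1})$ is one of its edges, a contradiction.

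I expect the delicate step to be exactly the proviso of the previous paragraph, namely that consecutive cycle edges never share a color, since a single color need not form a biclique and could a priori contain a monochromatic directed path of length two. I would settle this before invoking the main engine. The standing assumption of the section excludes monochromatic cycles, so the cycle is not monochromatic and every maximal run of equally-colored consecutive edges is proper, hence preceded by an edge of a different color. If two consecutive edges shared a color $c$, I would take the maximal color-$c$ run containing them, say beginning with $e_p = (v_p, v_{p+1})$ and preceded by the differently-colored $e_{p-1} = (v_{p-1}, v_p)$; applying Theorem \ref{lemma-no-edge-cliques} to the corner $v_{p-1} \to v_p \to v_{p+1}$ shows the component of $(v_p, v_{p+1})$ is $S \times \{v_{p+1}\}$, forcing every pair in this color-$c$ component to end in $v_{p+1}$. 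Yet the next run edge $e_{p+1} = (v_{p+1}, v_{p+2})$ lies in the same component and ends in $v_{p+2} \ne v_{p+1}$, a contradiction. Thus consecutive edges differ in color, the main engine fires at an arbitrary corner, and no shortest cycle of length $L \ge 4$ survives, completing the proof.
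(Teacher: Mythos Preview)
Your proof is correct and follows the same overall architecture as the paper: take a shortest directed cycle of length $L\ge 4$, observe it is induced, and then apply Theorem~\ref{lemma-no-edge-cliques} at two overlapping corners to force the middle edge's formula-graph component to be simultaneously of the form $S\times\{v_{i+1}\}$ and $\{v_i\}\times T$, hence a singleton, contradicting its presence in the colorful graph.

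The one place where you deviate is in establishing that consecutive cycle edges carry different colors. The paper first proves that every color occurring on the cycle is star biclique (by applying Theorem~\ref{lemma-no-edge-cliques} at a color-change corner), then argues that a star biclique color can appear at most once on a simple directed cycle (since all its edges share an endpoint on the same side), and only then concludes that adjacent edges differ in color. Your boundary-of-run argument is a neat shortcut: you go straight to a color-change corner at the start of a maximal monochromatic run, read off from Theorem~\ref{lemma-no-edge-cliques} that the run's first edge lies in a component of shape $S\times\{v_{p+1}\}$, and observe that the run's second edge $(v_{p+1},v_{p+2})$---which lies in the same component since both edges point the same way after the global color flip---cannot have this shape. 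This bypasses the intermediate ``star biclique'' and ``each color at most once'' claims entirely, making your argument slightly leaner while using the same ingredients.
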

\begin{proof} The ``only if'' direction is straightforward. In what follows we prove the ``if'' direction. Consider an arbitrary orientation of the colors in the colorful graph. 
Cycles of length one and two are not possible because the colorful graph lacks self-loops and parallel edges, so assume for a contradiction that the length of the shortest cycle induced by our orientation is at least four. Let such a shortest cycle be $C = v_1 \rightarrow v_2 \rightarrow \ldots \rightarrow v_k \rightarrow v_1,$ where $k \geq 4.$ Observe that the minimum-length assumption implies that $C$ is an induced cycle, because any extra edge in the subgraph of the colorful graph induced by $\{v_1, \ldots, v_k\}$ would imply a shorter cycle. If all edges in $C$ are of the same color, this would contradict the fact that all colors are acyclic when seen individually, so $C$ consists of at least two distinct colors, so at least two distinct colors occur along $C.$

We now prove that all colors $c$ occurring along $C$ are star biclique colors. To do so, for any such $c$ note that there exists an index $1 \leq i \leq k$ such that $v_i \xrightarrow[]{c} v_{i + 1}$ and $v_{i + 1} \xrightarrow[]{c'} v_{i + 2},$ where $c' \neq c$ is some color other than $c$ and index arithmetic is performed modulo $k$ such that $v_{k + 1} = v_1$, etc. Because $C$ is an induced cycle, it follows that edge $v_i \edge v_{i + 2}$ does not appear in the colorful graph. Since additionally $c \neq c',$ instantiating Theorem \ref{lemma-no-edge-cliques} with $1 \mapsto v_k, 2 \mapsto v_{k + 1}, 3 \mapsto v_{k + 2}$ gives that $c$ and $c'$ are star biclique colors. 

Since all colors occurring along $C$ are star biclique, every color can only occur in $C$ at most once, as any second occurrence would imply that the common node of the edges in a star is reused, meaning $C$ is not a simple cycle, and hence not minimal. Therefore, let $c_1, c_2, \ldots, c_k$ be the pairwise distinct colors of the edges in $C$, in order.

Now, applying Theorem \ref{lemma-no-edge-cliques} with $1 \mapsto v_1, 2 \mapsto v_2, 3 \mapsto v_3$ gives that the connected component of $(v_1, v_2)$ in the formula graph is $\{v_1\} \times S$ and that of $(v_2, v_3)$ is $S \times \{v_3\},$ for some $S$ such that $v_1, v_3 \notin S.$ Similarly, applying Theorem \ref{lemma-no-edge-cliques} with $1 \mapsto v_2, 2 \mapsto v_3, 3 \mapsto v_4$ gives that the connected component of $(v_2, v_3)$ in the formula graph is $\{v_2\} \times S'$ and that of $(v_3, v_4)$ is $S' \times \{v_4\},$ for some $S'$ such that $v_2, v_4 \notin S'.$ As a result, the connected component of $(v_2, v_3)$ is at the same time $S \times \{v_3\}$ and $\{v_2\} \times S'.$ Setting the two equal yields $S \times \{v_3\} = \{v_2\} \times S',$ from which the connected component is actually exactly $\{(v_2, v_3)\},$ contradicting the fact that $(v_2, v_3)$ occurs in the colorful graph.
\end{proof}

Next, we prove that only three-colored triangles need to be considered; e.g., the case in Fig.~\ref{fig:example-for-intro-colorful-graph} cannot occur. For this we 
will
need the following stronger form of Lemma \ref{lemma-middle}:

\begin{replemma}{lemma-stronger-middle}[Strengthened Lemma \ref{lemma-middle}] Consider four voters, say $1, 2, 3, 4,$ such that the formula graph contains the edge $(1, 2) \edge (4, 2)$ and that edge $(2, 3)$ exists in the colorful graph. If the formula graph does not contain edges $(3, 2) \edge (1, 2)$ and $(3, 2) \edge (4, 2),$ then the formula graph is guaranteed to contain the edge $(1, 3) \edge (4, 3).$ 
\end{replemma}

The proof closely follows that of Lemma \ref{lemma-middle}, but the weaker assumptions make it significantly more difficult to proceed as before in a principled manner. Instead, we resort to a computer program to perform the case analysis (see appendix for details).

\begin{lemma} \label{lemma-no-bicolored-cycle-triangles} Consider three voters, say $1, 2, 3$ such that edges $1 \edge 2$ and $2 \edge 3$ are the same color and oriented $1 \rightarrow 2$ and $2 \rightarrow 3$ in the colorful graph. If edge $1 \edge 3$ appears in the colorful graph, then it is of the same color as $1 \edge 2 \edge 3.$
\end{lemma}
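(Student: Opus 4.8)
The plan is to argue by contradiction. Suppose edge $1 \edge 3$ is present but carries a color $c' \ne c$, where $c$ is the common color of $1 \to 2$ and $2 \to 3$. By definition of the colorful graph this means that the ordered pairs $(1,2)$ and $(2,3)$ lie in one connected component $S_c$ of the formula graph, while $(1,3)$ (if the base orientation of $1 \edge 3$ is $1 \to 3$) or $(3,1)$ (if it is $3 \to 1$) lies in a different component $S_{c'}$. Since $(2,3) \in S_c$, it would suffice to prove that $(1,3)$ and $(2,3)$ lie in the same component of the formula graph: this places $(1,3) \in S_c$ and hence forces the color of $1 \edge 3$ to be $c$, the desired contradiction. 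The orientation $3 \to 1$ is handled symmetrically, working with the complementary component $\overline{S_c}$ via Lemma \ref{lemma_complement_is_cc}.

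The engine of the argument is the strengthened middle lemma (Lemma \ref{lemma-stronger-middle}). First I would record that its two ``absence'' hypotheses come for free here: the reversed pair $\overline{(2,3)} = (3,2)$ lies in $\overline{S_c}$, which is disjoint from $S_c$ because we are in the case where the complementary-pairs partition exists (Proposition \ref{thm:full_charact}, Lemma \ref{lemma_complement_is_cc}); consequently $(3,2)$ can share no formula-graph edge with any vertex of $S_c$, so the conditions forbidding edges $(3,2) \edge (1,2)$ and $(3,2) \edge (4,2)$ are automatically satisfied. With this in hand, a single application of Lemma \ref{lemma-stronger-middle} transfers a ``co-pointing at $2$'' relation into a ``co-pointing at $3$'' relation: whenever $(1,2) \edge (x,2)$ is a formula-graph edge, the lemma yields the formula-graph edge $(1,3) \edge (x,3)$, using the colorful edge $2 \to 3$. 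An entirely symmetric instantiation, obtained by passing to the complementary component, transfers relations across the colorful edge $1 \to 2$ and handles formula edges that share a first coordinate rather than a second.

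Next I would propagate these transfers along a path connecting $(1,2)$ to $(2,3)$ inside $S_c$; note that any such path has length at least two, since $(1,2)$ and $(2,3)$ share no coordinate and hence are never adjacent in the formula graph. Traversing the path, type-2 edges (shared second coordinate) are handled by the application above and type-1 edges (shared first coordinate) by its complementary instantiation, aiming to carry the co-pointing relation step by step until it links $(1,3)$ to $(2,3)$. Structurally this mirrors the inclusion-maximal argument used for Theorem \ref{lemma-triangle-cliques}: an equivalent way to organize it is to take an inclusion-maximal set of voters $v$ for which $(v,2) \in S_c$ is connected to $(1,2)$ and show, via Lemma \ref{lemma-stronger-middle}, that it must exhaust the relevant part of the component, thereby dragging $(1,3)$ into $S_c$.

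The step I expect to be the main obstacle is exactly this propagation, because the pivot voter $2$ is special: a lone application of Lemma \ref{lemma-stronger-middle} merely relabels the offending configuration, replacing voter $1$ by a neighbour $x$ satisfying the identical pattern $x \to 2 \to 3$ in color $c$ together with $x \edge 3$ in color $c'$, without visibly making progress toward $(2,3)$. Moreover, the transfers degenerate whenever one of the two voters being related coincides with the current pivot (an $(a,a)$ ``vertex'' appears), which is precisely the kind of coincidence the paper permits by cloning voters. Closing the gap therefore hinges on a disciplined case analysis over the two formula-graph edge types along the connecting path, a consistent use of the complementary-component symmetry to alternate pivots between $1$, $2$, and $3$, and the harmless-absence observation above; the orientation case split for $1 \edge 3$ then adds only a routine second layer.
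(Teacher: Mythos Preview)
Your plan is on the right track --- path in $S_c$ from $(1,2)$ to $(2,3)$, repeated use of Lemma~\ref{lemma-stronger-middle}, case split by edge type --- and it matches the paper's approach. But you have correctly identified, and not resolved, the crux: a single application of the lemma only relabels the picture. The missing idea is the precise invariant to carry along the path $(1,2)=(a_1,b_1)\edge\cdots\edge(a_k,b_k)=(2,3)$. The paper maintains the \emph{pair} of facts $a_\ell \xrightarrow{c'} 3$ and $b_\ell \xrightarrow{c} 3$ simultaneously. At $\ell=1$ this is the hypothesis; at $\ell=k$ the second fact reads $3 \xrightarrow{c} 3$, a self-loop, giving the contradiction.

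This dual invariant is exactly what makes the induction step go through and what you were missing when you tried to verify the absence hypotheses. In the case $b_{\ell+1}=b_\ell$, one instantiates Lemma~\ref{lemma-stronger-middle} with $1\mapsto a_\ell,\ 2\mapsto b_\ell,\ 3\mapsto 3,\ 4\mapsto a_{\ell+1}$; the forbidden edges are $(3,b_\ell)\edge(a_\ell,b_\ell)$ and $(3,b_\ell)\edge(a_{\ell+1},b_\ell)$, and these are excluded because $b_\ell\xrightarrow{c}3$ places $(b_\ell,3)$ in $S_c$, so either edge would put $(3,b_\ell)$ in $S_c$ as well, contradicting $S_c\cap\overline{S_c}=\varnothing$. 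The conclusion gives $(a_\ell,3)\edge(a_{\ell+1},3)$, hence $a_{\ell+1}\xrightarrow{c'}3$. In the case $a_{\ell+1}=a_\ell$, instantiate with $1\mapsto b_\ell,\ 2\mapsto a_\ell,\ 3\mapsto 3,\ 4\mapsto b_{\ell+1}$; now the absence of $(3,a_\ell)\edge(b_\ell,a_\ell)$ and $(3,a_\ell)\edge(b_{\ell+1},a_\ell)$ follows because $(3,a_\ell)$ has color $c'$ while $(b_\ell,a_\ell),(b_{\ell+1},a_\ell)\in\overline{S_c}$. The conclusion gives $(b_\ell,3)\edge(b_{\ell+1},3)$, hence $b_{\ell+1}\xrightarrow{c}3$. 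Note there is no need to pass to complementary components or ``alternate pivots'': both cases use Lemma~\ref{lemma-stronger-middle} directly with target $3$, and the two halves of the invariant feed each other. Your proposed target ``$(1,3)$ lands in $S_c$'' is not what is proved; the contradiction is the self-loop at $b_k=3$.
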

\begin{proof} Without loss of generality, assume that $1 \edge 3$ is oriented $1 \rightarrow 3$ in the colorful graph. Assume for a contradiction that the color of $1 \edge 2 \edge 3$ is $c$ and the color of $1 \edge 3$ is $c' \neq c.$ Because $1 \xrightarrow{c} 2$ and $2 \xrightarrow{c} 3$ are the same color, it follows that there is a simple path in the formula graph $(1, 2) = (a_1, b_1) \edge (a_2, b_2) \edge \ldots \edge (a_k, b_k) = (2, 3).$ We will prove by induction on $1 \leq \ell \leq k$ that $a_\ell \xrightarrow{c'} 3$ and $b_\ell \xrightarrow{c} 3,$ which is a contradiction since $b_k = 3$ and there are no self-loops in the colorful graph. The base case $\ell = 1$ is true by assumption. Assume that $a_\ell \xrightarrow{c'} 3$ and $b_\ell \xrightarrow{c} 3$ for some $\ell < k$, we then want to prove that $a_{\ell + 1} \xrightarrow{c'} 3$ and $b_{\ell + 1} \xrightarrow{c} 3.$ By construction of the formula graph, distinguish two cases:

If $a_{\ell + 1} = a_{\ell},$ then we instantiate Lemma \ref{lemma-stronger-middle} with $1 \mapsto b_{\ell}, 2 \mapsto a_{\ell}, 3 \mapsto 3, 4 \mapsto b_{\ell + 1}.$ This is sound because edge $(b_{\ell}, a_{\ell}) \edge (b_{\ell + 1}, a_{\ell})$ is in the formula graph and edges $(3, a_{\ell}) \edge (b_{\ell}, a_{\ell})$ and $(3, a_{\ell}) \edge (b_{\ell + 1}, a_{\ell})$ are not, since $3 \edge a_{\ell}$ is color $c'$, while the other two are color $c$ by assumption.
The instantiation gives that edge $(b_\ell, 3) \edge (b_{\ell + 1}, 3)$ is in the formula graph, from which $b_{\ell + 1} \xrightarrow{c} 3.$

If $b_{\ell + 1} = b_{\ell},$ then we instantiate Lemma \ref{lemma-stronger-middle} with $1 \mapsto a_{\ell}, 2 \mapsto b_{\ell}, 3 \mapsto 3, 4 \mapsto a_{\ell + 1}.$ This is sound because edge $(a_{\ell}, b_{\ell}) \edge (a_{\ell + 1}, b_{\ell})$ is in the formula graph and edges $(3, b_{\ell}) \edge (a_{\ell}, b_{\ell})$ and $(3, b_{\ell}) \edge (a_{\ell + 1}, b_{\ell})$ are not, since the colorful graph has edges $a_{\ell} \xrightarrow{c} b_\ell,$ $a_{\ell + 1} \xrightarrow{c} b_\ell$ and $b_{\ell} \xrightarrow{c} 3,$ which together with any of $(3, b_{\ell}) \edge (a_{\ell}, b_{\ell})$ or $(3, b_{\ell}) \edge (a_{\ell + 1}, b_{\ell})$ in the formula graph would lead to a path between $(b_{\ell}, 3)$ and $(3, b_{\ell})$ in the formula graph, which cannot be the case. The instantiation gives us that the edge $(a_\ell, 3) \edge (a_{\ell + 1}, 3)$ exists in the formula graph, from which $a_{\ell + 1} \xrightarrow{c} 3.$
\end{proof}


\begin{corollary} \label{corollary-only-tricolored-triangles-matter} An orientation of the colors in the colorful graph is acyclic if and only if it induces no directed cycles of length three consisting of three different colors.
\end{corollary}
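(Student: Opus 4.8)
The plan is to obtain the corollary as a refinement of Lemma \ref{lemma-only-triangles-matter-for-orientation}, which already reduces acyclicity of an orientation to the absence of directed triangles. It therefore suffices to upgrade ``no directed triangles'' to ``no \emph{three-colored} directed triangles,'' i.e.\ to show that under any valid orientation the only directed triangles that could possibly occur are three-colored ones. The forward (``only if'') direction is immediate: an acyclic orientation contains no directed cycle whatsoever, so in particular no three-colored directed triangle.

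For the reverse direction I would argue by contradiction. Suppose an orientation induces no three-colored directed triangle yet fails to be acyclic; by Lemma \ref{lemma-only-triangles-matter-for-orientation} it must then induce some directed triangle $v_1 \rightarrow v_2 \rightarrow v_3 \rightarrow v_1$. By hypothesis this triangle is not three-colored. It also cannot be monochromatic: since an orientation flips entire color classes in lockstep, a monochromatic directed triangle in the current orientation would translate (possibly after a global reversal of that color) into a monochromatic directed cycle already present in the colorful graph, which we have excluded by rejecting inputs with monochromatic cycles up front. Hence the triangle uses exactly two colors, and by pigeonhole two of its three edges must share a color.

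The key structural point is that in a directed triangle every vertex has in-degree and out-degree exactly one, so the two equally-colored edges meet head-to-tail, forming a single-colored directed path $a \rightarrow b \rightarrow c$, with the third edge being $c \rightarrow a$; in particular $a \edge c$ exists in the colorful graph. Because same-colored edges are flipped together, this monochromatic directed path is also a monochromatic directed path in the base colorful graph (traversed in the same or the opposite direction), so Lemma \ref{lemma-no-bicolored-cycle-triangles} applies to it. The lemma then forces the closing edge $a \edge c$ to carry the same color as the path, making the triangle monochromatic after all and contradicting the two-color conclusion of the previous paragraph.

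The main obstacle is the bookkeeping between the fixed coloring and the working orientation: Lemma \ref{lemma-no-bicolored-cycle-triangles} is stated for the base orientation of the colorful graph, whereas our triangle lives in a flipped one. The clean resolution is to observe that a monochromatic directed path in the working orientation is, up to global reversal of its color, exactly a monochromatic directed path in the base graph, after which the lemma applies verbatim; the only delicate combinatorial check is confirming that the two equally-colored edges are genuinely consecutive (head-to-tail) rather than sharing a vertex as two in-edges or two out-edges, which is precisely what the degree condition in a directed triangle guarantees.
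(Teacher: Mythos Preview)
Your argument is correct and is exactly the intended derivation: the paper states the result as an immediate corollary of Lemma~\ref{lemma-only-triangles-matter-for-orientation} together with Lemma~\ref{lemma-no-bicolored-cycle-triangles} under the standing assumption that monochromatic cycles have already been excluded. Your explicit handling of the transfer between the working orientation and the base colorful graph (via the observation that flipping a color class preserves monochromatic directed paths up to global reversal) is precisely the bookkeeping the paper leaves implicit.
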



\begin{theorem} \label{theorem-acyclic-orientation-exists} An acyclic orientation of the colorful graph exists and can be computed in poly-time.
\end{theorem}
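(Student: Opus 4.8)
The plan is to reduce the task to breaking three-colored directed triangles, and then to settle that by a potential (topological-order) argument on the vertex-sets exposed by Theorem \ref{lemma-triangle-cliques}. First I would invoke Corollary \ref{corollary-only-tricolored-triangles-matter}: since inputs with monochromatic cycles have already been rejected, an orientation is acyclic precisely when it induces no directed triangle whose three edges carry three distinct colors. Hence it suffices to exhibit, in polynomial time, an orientation with no three-colored directed triangle.

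The key structural input is the biclique form. By Theorem \ref{lemma-triangle-cliques}, every color $c$ occurring in some three-colored triangle is a biclique color: its edge set is exactly $\mathrm{src}(c)\times\mathrm{snk}(c)$ for two disjoint vertex-sets, all edges pointing (in the base orientation) from $\mathrm{src}(c)$ to $\mathrm{snk}(c)$; moreover the three colors of any three-colored triangle have edge-sets of the cyclic form $A\times B$, $B\times C$, $C\times A$ for disjoint $A,B,C$. The decisive move is to orient each such color as a single indivisible block, governed by a potential placed on these sets rather than on individual vertices. Concretely, I would assign a distinct real number $p(X)$ to every set $X$ arising as the source or sink of a biclique color (any injective assignment works, e.g.\ by fixing a canonical ordering of the sets), orient each biclique color from its lower-potential side to its higher-potential side (equivalently, flip it iff $p(\mathrm{src}(c))>p(\mathrm{snk}(c))$), and leave every remaining color in its base orientation.

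It then remains to verify that no three-colored directed triangle survives. For a triangle with sets $A,B,C$ and colors of edge-sets $A\times B$, $B\times C$, $C\times A$, one checks directly that it is a directed cycle precisely when all three of its colors are flipped or all three are left unflipped (the base orientation gives the cycle $a\to b\to c\to a$, reversing all three gives the opposite cycle, while any mixed choice creates a source or a sink among $a,b,c$). Under my rule the three flip-decisions are $[p(A)>p(B)]$, $[p(B)>p(C)]$ and $[p(C)>p(A)]$, which cannot all coincide, since all being $0$ would force $p(A)<p(B)<p(C)<p(A)$ and all being $1$ would force $p(A)>p(B)>p(C)>p(A)$, both impossible as $p$ is a strict total order. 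Thus every such triangle is broken. Every three-colored triangle consists of biclique colors and is therefore oriented by $p$, whereas colors kept in base orientation lie in no three-colored triangle (the contrapositive of Theorem \ref{lemma-triangle-cliques}); hence no three-colored directed triangle remains, and by Corollary \ref{corollary-only-tricolored-triangles-matter} the orientation is acyclic. All ingredients---building the colorful graph, the monochromatic-cycle test, extracting $\mathrm{src}(c)$ and $\mathrm{snk}(c)$ and testing the biclique property per color, and assigning and comparing potentials---run in polynomial time.

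I expect the main obstacle to be exactly the insight of the second paragraph: that Theorem \ref{lemma-triangle-cliques} allows each color to be treated as one block whose orientation is decided by comparing two set-level potentials. A naive attempt to orient edges by per-vertex potentials fails, because a biclique $A\times B$ need not have all of $A$ ranked below all of $B$; placing the potential on the sets themselves is precisely what keeps each color consistently orientable while simultaneously breaking all three-colored triangles. The no-monochromatic-cycle hypothesis is what lets Corollary \ref{corollary-only-tricolored-triangles-matter} upgrade ``no bad triangle'' to full acyclicity, and so is genuinely needed even though the triangle-breaking construction itself never appeals to it.
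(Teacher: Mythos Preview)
Your proposal is correct and follows essentially the same route as the paper. The paper's proof instantiates your abstract injective potential $p(X)$ by the concrete choice $p(X)=\min X$ (orienting each biclique color $A_i\times B_i$ as $A_i\to B_i$ iff $\min A_i<\min B_i$), and then derives the same contradiction $\min A<\min B<\min C<\min A$ from a surviving three-colored triangle; your set-level potential argument is a clean abstraction of that idea.
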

\begin{proof} By Corollary \ref{corollary-only-tricolored-triangles-matter}, only colors participating in three-colored triangles have to be considered (all other colors we can orient arbitrarily). Let $C = \{c_1, \ldots, c_k\}$ be the set of such colors. For each color $c_i \in C$, applying Theorem \ref{lemma-triangle-cliques} to an arbitrary three-colored triangle involving $c_i$ gives that $c_i$ is a biclique color, so the set of edges colored with color $c_i$ in the colored graph is $A_i \times B_i$ for some disjoint sets of vertices $A_i, B_i$.
For a finite set of numbers $S$, we write $\min S$ for the minimum number in $S$. We now construct an orientation of the colors in $C$ as follows:~for color $c_i \in C$ we orient the edges of color $c_i$ to go $A_i \rightarrow B_i$ if $\min A_i < \min B_i$ and $B_i \rightarrow A_i$ otherwise. Note that $\min A_i \neq \min B_i$ as $A_i \cap B_i = \varnothing.$ We now prove that this orientation is acyclic to complete the proof. Assume for a contradiction that the orientation induced a cycle $a \rightarrow b \rightarrow c \rightarrow a$. Then, by Corollary \ref{corollary-only-tricolored-triangles-matter}, the three edges involved in the cycle bear different colors in the colorful graph, so, by Theorem \ref{lemma-triangle-cliques}, there exist three disjoint vertex sets $A, B, C$ such that $a \in A, b \in B, c \in C$ and the connected components of $a \rightarrow b,$ $b \rightarrow c$ and $c \rightarrow a$ in the formula graph are given by $A \times B,$ $B \times C$ and $C \times A$. However, by construction of the orientation, this would mean that $\min A < \min B < \min C < \min A$, a contradiction.
\end{proof}

Armed as such, we can now prove the main result of our paper.

\begin{theorem}\label{th:main} There is a polynomial time algorithm taking as input a collection of approval ballots that outputs a profile of single-crossing preferences that could have lead to the given approval votes. If this is not possible, the algorithm reports accordingly.
\end{theorem}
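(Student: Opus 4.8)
The plan is to assemble the main theorem by chaining together the reductions and structural results established throughout the paper, so that the proof becomes essentially a bookkeeping argument tying the pieces together. First I would recall that by Lemma \ref{lemma:reduction_to_nb}, checking whether the input approval profile $\calP$ is SSC with respect to some axis $\triangleleft$ is equivalent to checking whether the derived non-betweenness constraint set $\calC_\calP$ is satisfiable, and that an SSC axis is all we need, since the equivalence of SSC and PSC for approval (2-weak) orders together with the polynomial-time extension procedure of Elkind et al.~\cite{edith_incomplete_sc} (reproved in Appendix \ref{app:equiv-ssc-psc}) lets us recover the fully-ranked single-crossing ballots from $\triangleleft$ in polynomial time. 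Thus it suffices to exhibit a polynomial-time algorithm that either computes a linear order satisfying $\calC_\calP$ or correctly reports that none exists.

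Next I would describe the algorithm itself, following the constructions of the preceding sections. Given $\calP$, compute $\calC_\calP$ (which has size polynomial in $n$ and $m$), then build the formula graph $G_{\calC_\calP}$ by adding the edges prescribed by constraint set \ref{bool_form_2_reformulated}. Using Proposition \ref{thm:full_charact}, check in polynomial time whether some connected component $S$ satisfies $S = \overline{S}$; if so, no assignment satisfies \ref{bool_form_1} and \ref{bool_form_2_reformulated}, hence by Lemma \ref{lemma_boolean_formula_encodes_ordering} the constraint set $\calC_\calP$ is unsatisfiable and we report accordingly. Otherwise the complementary pairs partition exists and we form the colorful graph, discarding singleton color classes as justified by Lemma \ref{lemma_acyclic_and_non_singleton}. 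We then test in polynomial time whether any color class forms a monochromatic cycle; if so, that color admits no valid orientation, so again no acyclic orientation exists, and by Proposition \ref{prop_feasible_iff_acyclic} the instance is a negative one.

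The crucial step is that, once these two easy rejection tests are passed, Theorem \ref{theorem-acyclic-orientation-exists} guarantees that an acyclic orientation of the colorful graph exists and can be computed in polynomial time. By Proposition \ref{prop_feasible_iff_acyclic} such an orientation yields a linear order $\triangleleft$ over $V$ satisfying all the NB constraints in $\calC_\calP$, and by Lemma \ref{lemma:reduction_to_nb} this $\triangleleft$ is an SSC axis for $\calP$. Feeding $\triangleleft$ into the extension procedure produces the desired profile of single-crossing fully-ranked ballots. Conversely, if either rejection test fires, we have argued that no satisfying linear order can exist, so reporting failure is correct. I do not expect a genuine obstacle here, since all the heavy lifting---in particular the surprising fact that passing the monochromatic-cycle test already forces existence of an acyclic orientation---has been discharged by Theorem \ref{theorem-acyclic-orientation-exists}; the only remaining care is to verify that every stage (constructing $\calC_\calP$ and both graphs, the two tests, the orientation, and the final extension) runs in time polynomial in the input size $nm$, which follows because all objects constructed have size polynomial in $n$ and $m$.
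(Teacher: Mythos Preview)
Your proposal is correct and follows essentially the same approach as the paper's proof: build the formula graph and reject if any component contains a complementary pair, build the colorful graph and reject on a monochromatic cycle, otherwise invoke Theorem~\ref{theorem-acyclic-orientation-exists} to obtain an acyclic orientation, extend it to a linear order $\triangleleft$, and finish with the SSC-to-PSC extension procedure of Elkind et al. Your write-up is more explicit than the paper's (you spell out the roles of Lemma~\ref{lemma:reduction_to_nb}, Proposition~\ref{thm:full_charact}, Lemma~\ref{lemma_acyclic_and_non_singleton}, and Proposition~\ref{prop_feasible_iff_acyclic}, and you justify why each rejection is sound), but the underlying pipeline is identical.
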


\begin{proof} Start with the approval ballots and build the formula graph. If there are pairs of complementary nodes in the same connected component, report impossibility. Otherwise, build the colorful graph. If a monochromatic cycle exists, report impossibility. Otherwise, use Theorem \ref{theorem-acyclic-orientation-exists} to compute an acyclic orientation, and extend it to a linear order $\triangleleft.$ Finally, use the approach of Elkind et al.~\cite{edith_incomplete_sc}, explained in Appendix \ref{app:equiv-ssc-psc}, to compute a single-crossing profile with respect to the known axis $\triangleleft.$
\end{proof}

\section{Conclusions and Future Work}

We gave a polynomial time algorithm computing a collection of fully-ranked ballots that could have been induced by the given approval ballots. Furthermore, we showed that there is no finite substructures characterization of possibly single-crossing elections (deferred to Appendix \ref{sect:impossibility}). To prove our main result, we developed a new algorithm for the non-betweenness problem, showing that it is FPT with respect to the number of colors in a certain colorful graph. We expect this FPT algorithm to lead to polynomial algorithms for other ordering problems reducing to the non-betweenness problem by combinatorial analysis of the induced colorful graphs. 
We note that betweenness reduces to non-betweenness since the constraint ``$b$ has to be between $a$ and $c$'' can be modeled as the conjunction of two non-betweenness constraints, namely: ``$a$ should not be between $b$ and $c$'' and ``$c$ should not be between $a$ and $b$'', so our framework can also be used for betweenness.

As steps for future research, the most natural next question to ask is the complexity of recognizing possibly single-crossing for multi-valued approval, the case of three indifference classes being a natural first candidate to consider. We have empirically found that our proof of correctness no longer applies in this case, since for instance Lemma \ref{lemma-middle} fails to hold for 3-valued approval. However, we could not find instances where the algorithm as a whole fails the recognition task. We leave it open to settle this question. Another promising avenue for further investigation stands in the case when our algorithm reports that voters' innate preferences are not single-crossing. In this case, one might hope that they are at least somewhat close to being single-crossing, so it would be interesting to investigate how to recover preferences that are nearly single-crossing from approval ballots (see \cite[Section 4.8]{survey_restricted} for a discussion of related results).

\subsubsection{Acknowledgements} We thank Finn Harbeke and Edith Elkind for the many fruitful discussions regarding this work. We moreover thank Finn Harbeke for checking whether our proposed approach works for 3-valued approval and making suggestions to improve the paper. We thank the anonymous reviewers for their constructive feedback and useful suggestions contributing to improving the paper.

{\footnotesize
\bibliography{arxiv}}

\newpage

\appendix

\begin{figure}[t]
    \centering
    \begin{subfigure}{.45\textwidth}
    \centering
    \begin{tabular}{p{\matrixcolumnwidth} | *{5}{p{\matrixcolumnwidth}}}
          & \hcr{1} & \hcr{2} & \hcr{3} & \hcr{4} & \hcr{5} \\
          \hline
        \hcb{1} & \textbf{1} & \textbf{1} & 0 & 0 & 0 \\
        \hcb{2} & 0 & \textbf{1} & \textbf{1} & 0 & 0 \\
        \hcb{3} & 0 & 0 & \textbf{1} & \textbf{1} & 0 \\
        \hcb{4} & 0 & 0 & 0 & \textbf{1} & \textbf{1} \\
        \hcb{5} & \textbf{1} & 0 & 0 & 0 & \textbf{1} \\
    \end{tabular}
    \vspace{.56cm} 
    \caption{Approval matrix of $\calP_5.$}
    \label{fig:matrix-impossible-profile}
    \end{subfigure}
    \begin{subfigure}{.5\textwidth}
        \centering
        \includegraphics[width=.6\linewidth]{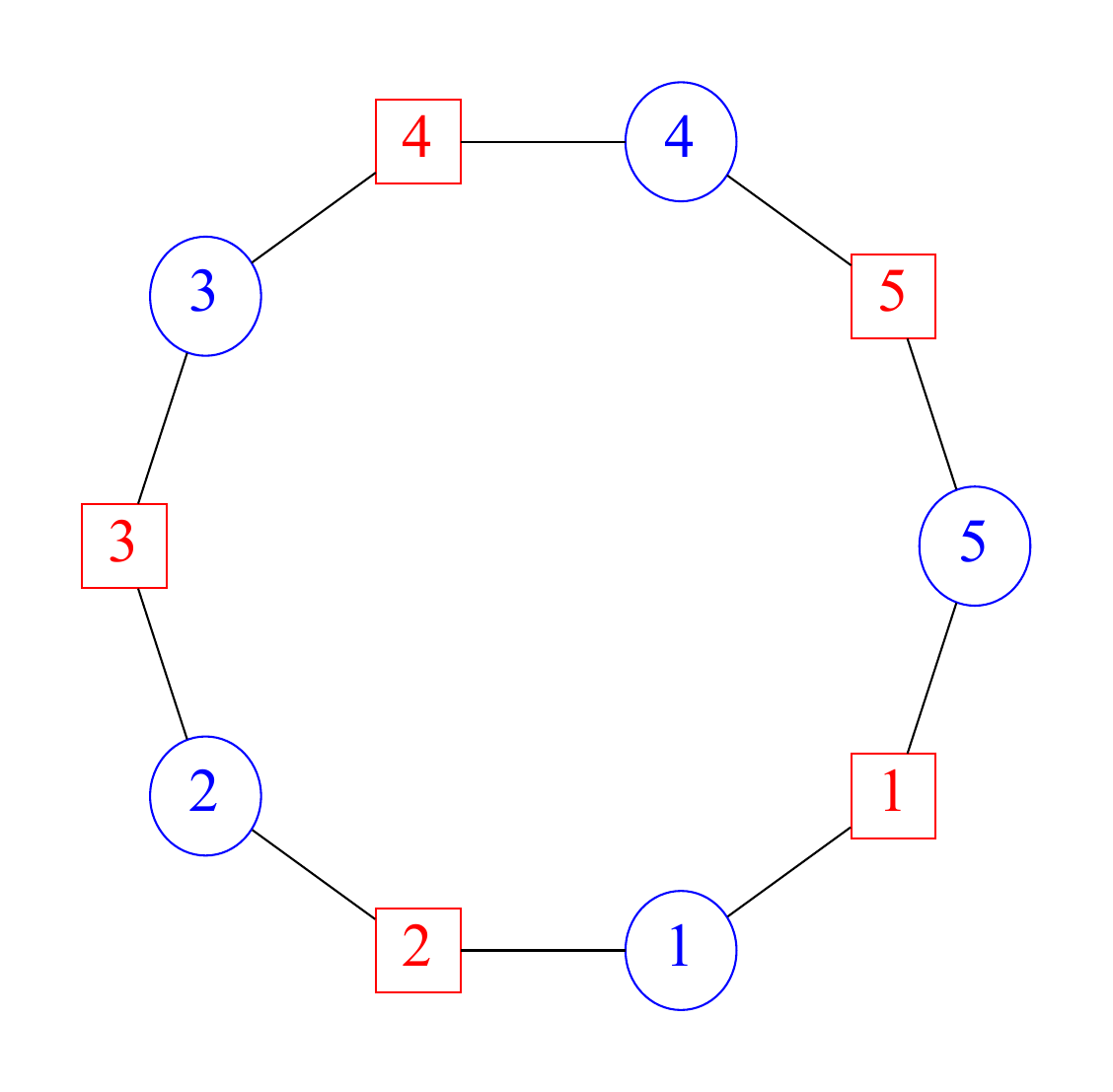}
        \caption{Bipartite graph representation of $\calP_5$.}
        \label{fig:bipartite-graph-of-impossible-profile}
    \end{subfigure}
    \caption{Construction used in the impossibility proof for $n = 5$ voters/alternatives.
    Fig.~\ref{fig:matrix-impossible-profile} shows the approval matrix. Fig.~\ref{fig:bipartite-graph-of-impossible-profile} shows the corresponding bipartite preference graph, with voters represented by red squares and candidates by blue circles.}
    \label{fig:impossible-profile}
\end{figure}

\section{Impossibility of Finite Forbidden Substructures Characterization} \label{sect:impossibility}

Single-crossing elections can be characterized as those elections not containing as subprofiles two specific elections consisting of three voters and six alternatives, and four voters and four alternatives \cite{characterization_minors_single_crossing}.
In this appendix, we show that, in contrast,
possibly single-crossing elections do not admit such a characterization in terms of finitely many forbidden subprofiles. To do so, for $n \geq 2$ define an approval profile $\calP_n$ where $n$ voters cast approval ballots over $m = n$ alternatives, voter $i \in [n]$ approving of candidates $i$ and $i - 1$ and disapproving of the rest (Fig.~\ref{fig:matrix-impossible-profile}).\footnote{Here we assume that addition and subtraction are performed modulo $n,$ using the fact that $0 \equiv n \pmod n$} A useful notion for understanding these profiles is that of the \emph{preference graph} of an approval profile, obtained by introducing a vertex for each voter and a vertex for each candidate, together with undirected edges drawn between voters and the candidates that they approve of.\footnote{This is the same as interpreting the approval matrix as a bipartite adjacency matrix.} Subprofiles of a profile then correspond to induced subgraphs of the preference graph respecting node types. In the case of $\calP_n,$ the preference graph is a cycle (Fig.~\ref{fig:bipartite-graph-of-impossible-profile}). We call a profile that is not seemingly single-crossing \emph{minimal} if it has no non-seemingly-single-crossing proper subprofiles. The following shows that $(\calP_n)_{n \geq 4}$ is an infinite family of minimal not seemingly single-crossing profiles, implying that this family has to be part of any forbidden substructures characterization, and hence the conclusion.

\begin{replemma}{impossible-profile-is-not-ssc} For $n \geq 4$ the profile $\calP_n$ is not seemingly single-crossing. Moreover, any profile formed from $\calP_n$ by removing voters and/or candidates is seemingly single-crossing.
\end{replemma}
\begin{proof} To show the first part, fix some $n \geq 4$ and for brevity write $\calP = \calP_n$. We proceed by considering the set of NB constraints $\calC_{\calP}$ induced by $\calP$. We first look at voter 1. Consider another voter $i \notin \{1, 2, n\}.$ Note that voter 1 approves of candidate 1, but disapproves of candidate $i,$ while voters $i$ and $i + 1$ disapprove of candidate 1, but approve of candidate $i.$ Consequently, we get the NB constraints $(i, 1, i + 1)$ for $i \notin \{1, 2, n\}.$ Additionally, voter 1 approves of candidate $n,$ but disapproves of candidate $2,$ while voters 2 and 3 do the opposite,\footnote{No longer true for $n \leq 3,$ from which the need for $n \geq 4$ becomes apparent. In fact, $\calP_2$ and $\calP_3$ are actually SSC.} so we also get the NB constraint $(2, 1, 3)$. Together, so far we have the NB constraints $(i, 1, i + 1)$ for $i \notin \{1, n\}$. In other words, voter 1 cannot be between any other two voters, so has to go either first or last in order for $\calC_{\calP}$ to be satisfied. By symmetry of the construction; i.e., see Fig.~\ref{fig:bipartite-graph-of-impossible-profile}; the same argument also applies to the other voters, but this cannot be the case, as at most two voters can go first/last in the order. To show the second part, by symmetry (see Fig.~\ref{fig:bipartite-graph-of-impossible-profile}) it is enough to consider the cases of removing voter 1 or candidate $n$. In either case, one can check that the linear order $1 \triangleleft 2 \triangleleft \ldots \triangleleft n$ witnesses the seemingly single-crossing property.
\end{proof}

\section{Equivalence of SSC and PSC for Weak Orders}\label{app:equiv-ssc-psc}

In this appendix, we show that a profile of weak orders is seemingly single-crossing (SSC) with respect to an axis $\triangleleft$ if and only if it is possibly single-crossing (PSC) with respect to $\triangleleft.$ The latter implies the former straightforwardly, so we focus on the non-trivial direction, namely that SSC with respect to $\triangleleft$ implies PSC with respect to $\triangleleft.$ To this end, we show that under the assumption that the profile is SSC with respect to $\triangleleft,$ we can extend each weak order into a linear order such that the resulting profile is single-crossing with respect to $\triangleleft.$ As is required by our main algorithm (Theorem \ref{th:main}), this can be done in polynomial time. The proof that we give is a streamlined version of the argument of Elkind et al.~\cite{edith_incomplete_sc}.

\begin{theorem} Given a preference profile of weak orders $\calP = (\succ_i)_{i \in [n]}$ and a linear order $\triangleleft$ over $[n]$ such that $\calP$ is seemingly single-crossing with respect to $\triangleleft,$ one can compute in polynomial time for each voter $i \in [n]$ a linear extension $\succ_i'$ of $\succ_i$ such that $(\succ_i')_{i \in [n]}$ is single-crossing with respect to $\triangleleft.$
\end{theorem}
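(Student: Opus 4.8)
The plan is to extend each voter's weak order $\succ_i$ into a linear order $\succ_i'$ by breaking ties in a way that is globally consistent across voters, using the axis $\triangleleft$ to dictate the direction in which each tie should be broken. The key observation is that single-crossing failures for a pair of candidates $a, b$ arise precisely when the pattern $a \succ_i' b$, then $b \succ_j' a$, then $a \succ_k' b$ occurs for $i \triangleleft j \triangleleft k$; since the profile is already SSC, such a pattern cannot occur when all three voters strictly rank $a$ versus $b$, so the only danger is introduced by ties, and we must resolve the ties of each voter on each candidate pair coherently.

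First I would focus on a single unordered candidate pair $\{a, b\}$ in isolation. Among the voters, some strictly prefer $a$ to $b$, some strictly prefer $b$ to $a$, and some are indifferent. By the SSC property, the voters strictly preferring $a$ (respectively $b$) occupy at most one ``switch'' as we sweep along $\triangleleft$: that is, the sequence of strict preferences along the axis changes from favoring one candidate to the other at most once (reading ties as neutral). The goal is to fill in each indifferent voter's preference with either $a \succ' b$ or $b \succ' a$ so that, along $\triangleleft$, the resulting sequence of (now strict) preferences on $\{a, b\}$ changes at most once. I would argue that this is always achievable: since the strict preferences already form an at-most-one-crossing pattern, there is a threshold position on the axis, and each indifferent voter can be assigned the preference matching whichever side of the threshold it falls on. Concretely, for each pair I would pick a split point of $\triangleleft$ consistent with the existing strict votes and orient every tie according to that split, guaranteeing the single-crossing condition for that pair.

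The main obstacle, and the crux of the argument, is \emph{consistency}: the per-pair tie-breaks must simultaneously yield, for each voter $i$, a relation on all candidates that is a genuine linear extension of $\succ_i$ — in particular transitive and acyclic. Breaking ties pairwise is not automatically transitive, so I would need to show the chosen tie-breaks on the different pairs within a single indifference class of $\succ_i$ can be made mutually compatible. To handle this I would process the candidates globally rather than purely pair-by-pair: within each voter's indifference class, I would impose a common auxiliary ordering that is compatible with the axis-driven thresholds across all relevant pairs. The argument here is that the thresholds induced by $\triangleleft$ give, for voter $i$, a preference direction on each tied pair, and one shows these directions are themselves consistent with a single linear order on the tied candidates (no directed cycle $a \to b \to c \to a$ among tied candidates), because any such cycle would, when traced back through the SSC structure, produce a forbidden $a,b$-alternation along the axis. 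Establishing this acyclicity is the heart of the matter.

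Finally I would assemble the pieces: having defined for each voter a consistent linear extension $\succ_i'$, I would verify that the full profile $(\succ_i')_{i \in [n]}$ is single-crossing with respect to $\triangleleft$ by checking the condition pair-by-pair, which holds by construction since each pair was arranged to cross at most once. The polynomial-time claim follows because the whole construction is explicit: there are $O(m^2)$ candidate pairs and $n$ voters, the threshold for each pair is found by a single sweep along $\triangleleft$, and assembling and verifying the linear extensions is a direct computation over the $O(nm^2)$ relevant comparisons.
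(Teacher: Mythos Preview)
Your overall plan---resolve ties pair-by-pair using the axis, then check that the resulting per-voter relations are transitive---matches the paper's approach, and your single-crossing argument for each fixed pair is fine. The gap is in the transitivity step. You say ``pick a split point of $\triangleleft$ consistent with the existing strict votes'' but do not commit to a specific one, and then claim that a cycle $a \to b \to c \to a$ inside one voter's indifference class would force ``a forbidden $a,b$-alternation along the axis.'' That inference is not valid: arbitrary (individually valid) split points can create cycles without any SSC violation. For a concrete failure, take candidates $a,b,c$, voters $1,2,3,4$ along $\triangleleft$, with $\succ_1$ ranking $a \succ b \succ c$, $\succ_4$ ranking $c \succ b \succ a$, and voters $2,3$ indifferent among all three. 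This profile is SSC. Choosing the $(a,b)$ and $(b,c)$ thresholds after voter $2$ but the $(a,c)$ threshold before voter $2$ gives voter $2$ the relations $a \succ' b$, $b \succ' c$, $c \succ' a$---a cycle---yet every pairwise sequence along the axis still crosses at most once.

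The paper avoids this by fixing a \emph{specific} tie-breaking rule: for voter $i$ and pair $(a,b)$, copy the strict preference of the nearest voter to the left of $i$ that is not indifferent on $(a,b)$, falling back to the nearest such voter to the right only if none exists to the left. The transitivity proof then does not appeal to SSC at all; it uses that each $\succ_j$ is a weak order. Given a hypothetical cycle $a \succ_i' b \succ_i' c \succ_i' a$, one looks at the three witness voters $j_{ab}, j_{bc}, j_{ca}$ supplying these comparisons, takes the one closest to $i$ under the rule's search order, and uses transitivity of indifference at that witness to force all three pairs to be strictly compared there---hence linearly ordered, contradicting the cycle. Your write-up needs both the concrete rule and this weak-order argument; the ``trace back to an alternation'' heuristic does not go through.
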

\begin{proof} Recall that $C = [m]$ is the set of candidates/alternatives. First, if there exists a pair of distinct candidates $(a, b) \in C^2$ such that $a \approx_i b$ for all voters $i \in [n],$ then remove $b$ and apply the theorem recursively to the reduced instance. After getting an answer for the reduced instance, add back candidate $b$ into all voters' rankings right next to $a$ such that $a \succ_i' b$ holds for all voters $i \in [n].$ Hence, we may now assume that no such pair $(a, b)$ exists.\footnote{Note that if the input is presented in matrix form, the recursive steps that are performed for a given instance correspond to identifying and removing equal rows in the preference matrix, and then adding back the removed candidates at the end.} Without loss of generality, further assume that $1 \triangleleft \ldots \triangleleft n.$

We construct the linear extensions $(\succ_i')_{i \in [n]}$ of $(\succ_i)_{i \in [n]}$ by specifying for each pair of alternatives $(a, b) \in C^2$ with $1 \leq a < b \leq m$ and each voter $i \in [n]$ whether $a \succ_i' b$ or $b \succ_i' a.$ Consider such a pair $(a, b).$ For each voter $i \in [n]$ proceed as follows:~compute the highest $j$ such that $1 \leq j \leq i$ and $a \not\approx_j b.$ If no such $j$ exists, then take $j$ to be the smallest $j$ such that $i < j \leq n$ and $a \not\approx_j b.$ Note that such a $j$ has to exist since we assumed that for any pair of distinct alternatives, at least one voter is not indifferent between them. If $a \succ_j b,$ then we make $\succ_i'$ compare $a$ and $b$ as $a \succ_i' b.$ Conversely, if $b \succ_j a,$ then we make it compare $a$ and $b$ as $b \succ_i' a.$ In Iverson bracket notation, the two cases can be more succinctly expressed as setting $[a \succ_i' b] = [a \succ_j b].$ Implicitly, we also set $[b \succ_i' a] = [b \succ_j a].$ This completes the construction, which can be seen to only require polynomial time to compute.

Next, we show the correctness of the construction. First, observe that orders $(\succ_i')_{i \in [n]}$ are extensions of $(\succ_i)_{i \in [n]}.$ This is because if $a \not\approx_i b$ then $j = i$ will be chosen in the construction, implying the required. Next, let us show that $(\succ_i')_{i \in [n]}$ is single-crossing with respect to $\triangleleft.$ This amounts to showing that for any two distinct candidates $(a, b) \in C^2$ there are no three voters $i \triangleleft j \triangleleft k$ such that $a \succ_i' b,$ $b \succ_j' a$ and $a \succ_k' b.$ Equivalently, for any two candidates $(a, b) \in C^2$ with $a < b$ there are no three voters $i \triangleleft j \triangleleft k$ such that $a \succ_i' b,$ $b \succ_j' a,$ $a \succ_k' b,$ or $b \succ_i' a,$ $a \succ_j' b,$ $b \succ_k' a.$ To show this, consider the sequence $(s_i')_{i \in [n]}$ where $s_i' = [a \succ'_i b]$; i.e., $s_i' = 1$ if $a \succ'_i b$ and 0 if $b \succ'_i a.$ The statement then amounts to there not existing indices $1 \leq i < j < k \leq n$ such that $(s_i', s_j', s_k') \in \{(0, 1, 0), (1, 0, 1)\},$ which is in turn equivalent to $s'$ being either monotonically increasing or decreasing. We now show that this monotonicity property is indeed satisfied:~consider additionally the sequence $(s_i)_{i \in [n]}$ where $s_i \in \{\bot, 0, 1\}$ satisfies $s_i = \bot$ if $a \approx_i b$ and otherwise $s_i = [a \succ_i b]$ holds. Our construction of $(\succ'_i)_{i \in [n]}$ from $(\succ_i)_{i \in [n]}$ can then be reformulated as follows: for every pair of distinct candidates $(a, b) \in C^2$ with $a < b$ we construct $s'$ from $s$ by replacing $s_i = \bot$ entries by either 0 or 1. In particular, if $s_i = \bot,$ then we compute the last non-$\bot$ element to the left of $i$ inclusively, or the first non-$\bot$ element to the right of $i$ if none of the previous kind were found, and then setting $s_i'$ to that element. Because $\calP$ is SSC with respect to $\triangleleft,$ it follows that $s$ is either monotonically increasing or decreasing if we ignore $\bot$ elements, so this procedure completes $s$ into a sequence $s'$ without $\bot$ elements that is also monotonically increasing/decreasing, as required.

It remains to show that the relations  $(\succ'_i)_{i \in [n]}$ produced from $(\succ_i)_{i \in [n]}$ are indeed linear orders. By construction, they are already total, irreflexive, and antisymmetric, so it remains to show that they are also transitive. Assume for a contradiction that for some voter $i \in [n]$ order $\succ'_i$ is not transitive. Hence, there exist three candidates $a, b, c \in C$ such that $a \succ'_i b \succ'_i c \succ'_i a.$ We will show that this cannot be the case. To do so, consider how the values $[a \succ'_i b], [b \succ'_i c]$ and $[c \succ'_i a]$ were constructed. For ease of writing, define a linear $\sqsubset$ order on $[n]$ such that $i \sqsubset \ldots \sqsubset 1 \sqsubset i + 1 \sqsubset \ldots \sqsubset n.$ Then:

\vspace{3pt}

\noindent \textbullet\ $[a \succ'_i b] = [a \succ_{j_{\mathit{ab}}} b]$ where $j_{\mathit{ab}} = \min_{\sqsubset}\{j \mid a \not\approx_{j} b\}$; 

\noindent \textbullet\ $[b \succ'_i c] = [b \succ_{j_{\mathit{bc}}} c]$ where $j_{\mathit{bc}} = \min_{\sqsubset}\{j \mid b \not\approx_{j} c\}$;

\noindent \textbullet\ $[c \succ'_i a] = [c \succ_{j_{\mathit{ca}}} a]$ where $j_{\mathit{ca}} = \min_{\sqsubset}\{j \mid c \not\approx_{j} a\}$.

\vspace{3pt}

\noindent Without loss of generality, assume $\min_{\sqsubset}(j_{\mathit{ab}}, j_{\mathit{bc}}, j_{\mathit{ca}}) = j_{\mathit{ab}}.$ Recall that $\succ_{j_{\mathit{ab}}}$ is a weak order and note that $a \succ_{j_{\mathit{ab}}} b.$

If $a \approx_{j_{\mathit{ab}}} c,$ then $a \succ_{j_{\mathit{ab}}} b$ implies that $c \succ_{j_{\mathit{ab}}} b.$ Consequently, $j_{\mathit{bc}} = j_{\mathit{ab}},$ from which $c \succ'_i b,$ contradicting the assumption that $b \succ'_i c.$ As a result, $a \not\approx_{j_{\mathit{ab}}} c$ must hold.

If $b \approx_{j_{\mathit{ab}}} c,$ then $a \succ_{j_{\mathit{ab}}} b$ implies that $a \succ_{j_{\mathit{ab}}} c.$ Consequently, $j_{\mathit{ca}} = j_{\mathit{ab}},$ from which $a \succ'_i c,$ contradicting the assumption that $c \succ'_i a.$ As a result, $b \not\approx_{j_{\mathit{ab}}} c,$ must hold.

Since $a \not\approx_{j_{\mathit{ab}}} c$ and $b \not\approx_{j_{\mathit{ab}}} c$ both hold, it follows that $j_{\mathit{ab}} = j_{\mathit{bc}} = j_{\mathit{ca}},$ and so $\succ_{j_{\mathit{ab}}}$ restricted to $a, b, c$ is a total order, from which  $\succ_i'$ also is, contradicting the assumed violation of transitivity $a \succ'_i b \succ'_i c \succ'_i a.$
\end{proof}

\section{Proofs Omitted From the Main Body}

In this appendix, we provide the proofs missing from the main text. 
In some cases, the proofs will require stating and proving additional intermediate results.

\subsection{Proof of Lemma \ref{lemma:reduction_to_nb}}

\repeatlemma{lemma:reduction_to_nb}
\begin{proof} Assume $\calP$ is not SSC with respect to $\triangleleft.$ Then, there are voters $i \triangleleft j \triangleleft k$ and candidates $a, b \in C$ such that $a \succ_i b, b \succ_j a$ and $a \succ_k b$. By construction of $\calC_\calP,$ this means that $(i, j, k) \in \calC_\calP$, so $\calC_\calP$ is not satisfied by $\triangleleft.$ Conversely, assume $\calC_\calP$ is not satisfied by $\triangleleft.$ Then, there is $(i, j, k) \in \calC_\calP$ such that either $i \triangleleft j \triangleleft k$ or $k \triangleleft j \triangleleft i.$ By construction of $\calC_\calP,$ this means that $a \succ_i b, b \succ_j a, a \succ_k b$ and either $i \triangleleft j \triangleleft k$ or $k \triangleleft j \triangleleft i.$ Either way, this means that $\calP$ is not SSC with respect to $\triangleleft$.
\end{proof}

\subsection{Proof of Lemma \ref{lemma_boolean_formula_encodes_ordering}}

Recall the following basic fact:

\begin{proposition}\label{prop-iverson} A NB constraint $(i, j, k)$ is satisfied by a linear order $\triangleleft$ iff $[i \triangleleft j] = [k \triangleleft j].$
\end{proposition}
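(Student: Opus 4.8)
The plan is to argue by a direct case analysis on the two Boolean values $[i \triangleleft j]$ and $[k \triangleleft j]$, exploiting that $\triangleleft$ is a linear order over the three \emph{distinct} elements $i, j, k$. First I would record the basic observation underlying the whole argument: since $\triangleleft$ is total and antisymmetric, for any two distinct elements $a, b$ exactly one of $a \triangleleft b$ and $b \triangleleft a$ holds, so $[a \triangleleft b] = 1 - [b \triangleleft a]$. In particular each of $[i \triangleleft j]$ and $[k \triangleleft j]$ takes a well-defined value in $\{0, 1\}$, and together they pin down the position of $j$ relative to $i$ and $k$. The constraint $(i, j, k)$ is, by definition, \emph{violated} exactly when $i \triangleleft j \triangleleft k$ or $k \triangleleft j \triangleleft i$; i.e., when $j$ sits strictly between $i$ and $k$.

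Next I would split into the four combinations of the two values. If $[i \triangleleft j] = [k \triangleleft j] = 1$, then $i \triangleleft j$ and $k \triangleleft j$, so $j$ follows both $i$ and $k$ and is therefore last among the three; hence neither $i \triangleleft j \triangleleft k$ nor $k \triangleleft j \triangleleft i$ can hold, and the constraint is satisfied. Symmetrically, if $[i \triangleleft j] = [k \triangleleft j] = 0$, then $j \triangleleft i$ and $j \triangleleft k$, so $j$ precedes both and is first among the three, again satisfying the constraint. If instead the two values differ, then $j$ is sandwiched: the case $[i \triangleleft j] = 1,\ [k \triangleleft j] = 0$ yields $i \triangleleft j \triangleleft k$, and the case $[i \triangleleft j] = 0,\ [k \triangleleft j] = 1$ yields $k \triangleleft j \triangleleft i$; in both subcases $j$ lies strictly between $i$ and $k$, so the constraint is violated.

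Combining the four cases shows that the constraint $(i, j, k)$ is satisfied precisely when $[i \triangleleft j] = [k \triangleleft j]$, which is exactly the claim. I do not expect any genuine obstacle here, as the statement is essentially an unfolding of the definition of betweenness for a total order; the only points to be careful about are invoking totality (to rule out incomparabilities, so that both Iverson brackets are genuinely Boolean) and irreflexivity/antisymmetry (to guarantee that $i, j, k$ occupy three distinct positions), both of which are granted by the hypothesis that $\triangleleft$ is a linear order.
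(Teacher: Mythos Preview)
Your proof is correct. The paper does not actually supply a proof of this proposition; it simply records it as a basic fact (``First, note that a NB constraint $(i,j,k)$ is satisfied by a linear order $\triangleleft$ if and only if $[i\triangleleft j]=[k\triangleleft j]$''). Your four-case analysis is a perfectly sound expansion of that one-line claim, and there is nothing to add.
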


\repeatlemma{lemma_boolean_formula_encodes_ordering}

\begin{proof} Assume linear order $\triangleleft$ satisfies the constraints in $\calC$, then let us set $x_{i, j} = [i \triangleleft j]$. As a result, constraints in \ref{bool_form_2} are satisfied due to Proposition \ref{prop-iverson}, while constraints in \ref{bool_form_1} and \ref{bool_form_3} are satisfied because $\triangleleft$ is antisymmetric and transitive. Conversely, assume values $x_{i, j}$ are a satisfying assignment of $\Phi_\calC$. Then, one can define $\triangleleft$ by setting $[i \triangleleft j] = x_{i, j},$ which is a linear order because \ref{bool_form_1} and \ref{bool_form_3} enforce antisymmetry and transitivity. Moreover, $\triangleleft$ satisfies the constraints in $\calC$ because $\Phi_\calC$ satisfies constraint set \ref{bool_form_2}, by using Proposition \ref{prop-iverson}.
\end{proof}

\subsection{Proof of Lemmas \ref{lemma_connected} and \ref{lemma_complement_is_cc}}

\repeatlemma{lemma_connected}
\begin{proof} Assume $x$ is an assignment satisfying set \ref{bool_form_2_reformulated}. Let $S$ be a connected component of $G_\calC$ and $u, v \in S$ be arbitrary. Since $S$ is a connected component, consider a path $u = t_0\edge t_1\edge \ldots\edge t_k = v$ in $G_\calC$. By construction of $G_\calC$, this path corresponds to constraints in set \ref{bool_form_2_reformulated} stipulating that $x_{t_0} = \ldots = x_{t_k},$ from which we get $x_u = x_v$ as $x$ satisfies constraint set \ref{bool_form_2_reformulated}. Conversely, assume $x$ is an assignment and that for each connected component $S$ of $G_\calC$ it holds that $x_u = x_v$ for all $u, v \in S$. Consider a constraint $x_u = x_v$ in set \ref{bool_form_2_reformulated}. By construction of $G_\calC$, the edge $u \edge v$ exists in $G_\calC$, so $u$ and $v$ are in the same connected component of $G_\calC,$ from which our assumption gives that $x_u = x_v$, so the constraint in \ref{bool_form_2_reformulated} is satisfied.
\end{proof}

\repeatlemma{lemma_complement_is_cc}
\begin{proof} By construction of $G_\calC$, an edge $a \edge b$ exists in $G_\calC$ if and only if the edge $\overline{a} \edge \overline{b}$ also exists in $G_\calC$. By induction, $a$ and $b$ are connected by a path in $G_\calC$ if and only if $\overline{a}$ and $\overline{b}$ are connected by a path in $G_\calC$. This means that vertex set $\overline{S}$ is connected in $G_\calC$. To show that $\overline{S}$ is a connected component, assume for a contradiction that $\overline{S}$ was contained in a larger connected component $\overline{S} \subsetneq S' \subseteq V(G_\calC)$. Note that this also means that $\overline{\overline{S}} = S \subsetneq \overline{S'}$. By the same argument as before applied to $S'$ instead of $S$, we get that $\overline{S'}$ is connected in $G_\calC$. However, $S \subsetneq \overline{S'}$, contradicting the fact that $S$ is a connected component.
\end{proof}

\subsection{Proof of Lemma \ref{lemma-middle}}

In this section, we prove Lemma \ref{lemma-middle}. Before commencing, we make an elementary observation:

\begin{proposition} \label{prop:unchanged_input} Appending to an approval matrix $\calP$ a copy of one of the rows of $\calP$ leaves the set of NB constraints $\calC_\calP$ unchanged. Naturally, the formula and colorful graphs also remain unchanged.
\end{proposition}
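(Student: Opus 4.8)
The plan is to reduce everything to a single claim: appending a duplicate row leaves the constraint set $\calC_\calP$ \emph{literally} unchanged. This suffices because the formula graph $G_\calC$ and the colorful graph are both defined purely in terms of $\calC_\calP$ together with the voter set $V$; appending a \emph{row} adds a candidate but does not touch the set of voters, so once $\calC_\calP$ is shown to be invariant, the statement about the two graphs follows immediately. Thus the entire argument lives at the level of the NB constraints.

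So suppose we append to $\calP$ a copy of the row indexed by candidate $c$, obtaining a fresh candidate $c'$ that agrees with $c$ on every ballot; i.e.\ for each voter $v$, voter $v$ approves $c'$ exactly when it approves $c$. First, no constraint is lost: every original candidate is still present, so any witnessing pair $(a,b)$ producing a constraint $(i,j,k)$ in the old profile still produces it. The substance is to show that no \emph{new} constraint can appear. Recall that $(i,j,k) \in \calC_\calP$ exactly when there are candidates $a,b$ with $a \succ_i b$, $b \succ_j a$, and $a \succ_k b$, where for approval ballots $a \succ_v b$ means that $v$ approves $a$ but not $b$. Any genuinely new constraint would have to be witnessed by a pair $\{a,b\}$ in which at least one element is the fresh candidate $c'$, so it is enough to rule out both such possibilities.

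I would then split according to the partner of $c'$. If $c'$ is paired with a candidate $d \notin \{c, c'\}$, then because $c'$ and $c$ induce the same approval status for every voter, replacing $c'$ by $c$ in the pair preserves all three relations $a \succ_i b$, $b \succ_j a$, $a \succ_k b$ verbatim; hence the pair $\{c, d\}$ already witnessed exactly this constraint in the original profile, so it is not new. The remaining case is the pair $\{c', c\}$ itself: since $c'$ and $c$ agree on every ballot, every voter $v$ satisfies $c' \approx_v c$, so neither $c' \succ_v c$ nor $c \succ_v c'$ ever holds, and this pair can witness no crossing pattern at all.

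The only point that requires care is this degenerate pair $\{c', c\}$: the key observation is that duplicated candidates are permanently tied for every voter, which is precisely what prevents them from generating a constraint against each other. I expect this to be the sole nontrivial step; the rest is a direct substitution argument. Combining the two cases, $\calC_\calP$ is identical before and after appending the duplicate row, and therefore so are the formula and colorful graphs, as claimed.
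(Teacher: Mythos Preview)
Your argument is correct and is precisely the natural one; the paper in fact treats this proposition as an ``elementary observation'' and gives no proof at all, so there is nothing to compare against. Your case split on whether the witnessing pair involves the fresh candidate $c'$, together with the observation that $c$ and $c'$ are indifferent for every voter, is exactly what is needed.
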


This fact will be useful in our proofs because it allows for assuming without loss of generality that any collection of NB constraints in $\calC_\calP$ is generated by pairwise disjoint unordered pairs of rows/candidates; e.g., if $(1, 2, 3)$ and $(4, 5, 6)$ are NB constraints witnessed by candidate pairs $\{a, b\}$ and $\{c, d\}$, then without loss of generality $\{a, b\} \cap \{c, d\} = \varnothing$. This is because had this not been the case, one could have added copies of the rows which are mentioned multiple times. 

To prove Lemma \ref{lemma-middle}, we will need two intermediate results that together will give the conclusion.


\begin{figure}[t]
    \begin{subfigure}{.3\textwidth}
        \centering
        \includegraphics[width=.8\linewidth]{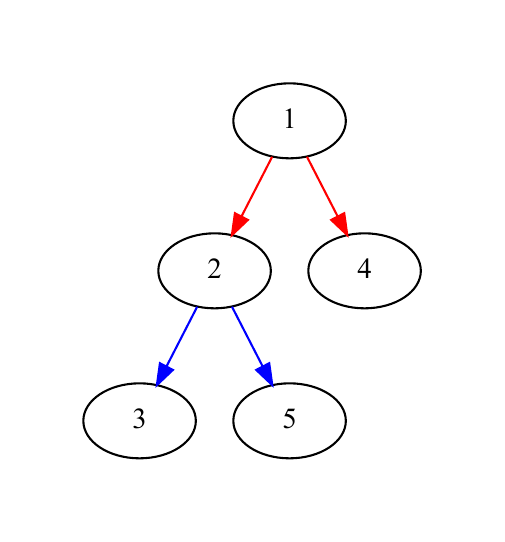}
        \caption{Assumed edge colors.}
        \label{fig:one-common-one-opposite-a}
    \end{subfigure}
    \begin{subfigure}{.3\textwidth}
        \centering
        \includegraphics[width=.8\linewidth]{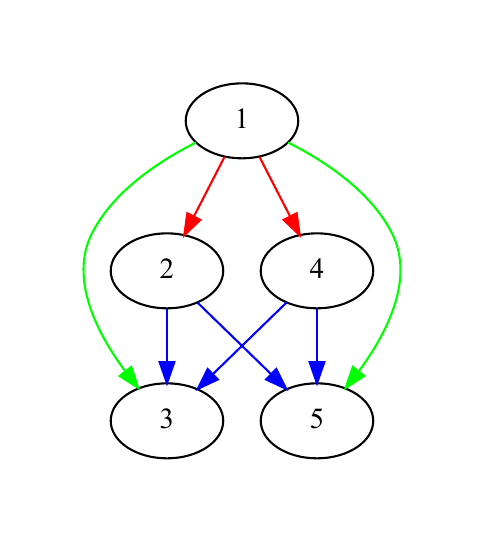}
        \caption{Implied edge colors.}
        \label{fig:one-common-one-opposite-b}
    \end{subfigure}
    \begin{subfigure}{.38\textwidth}
        \centering
        \includegraphics[width=.5\linewidth]{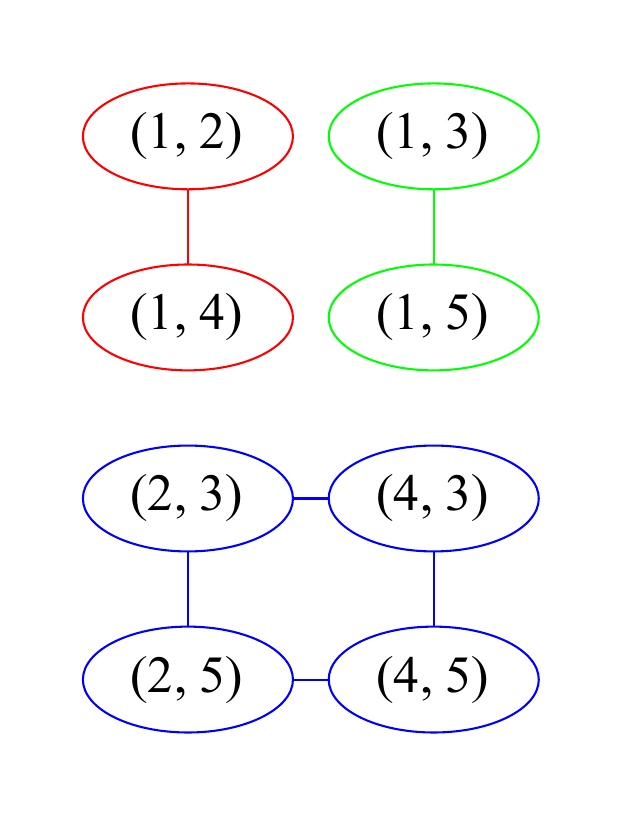}
        \vspace{0.25cm}
        \caption{Impl.~edges in formula graph.}
        \label{fig:one-common-one-opposite-c}
    \end{subfigure}
    \caption{Illustration of Lemma \ref{lemma-one-common-one-opposite}. Fig.~\ref{fig:one-common-one-opposite-a} depicts the assumptions:~edges $(1, 2) \edge (1, 4)$ and $(2, 3) \edge (2, 5)$ exist in the formula graph, with the first two colored red and the last two blue. Additionally, edge $(1, 3)$ is neither red nor blue. Fig.~\ref{fig:one-common-one-opposite-b} shows the edge colors implied by the setup, where green is some arbitrary color other than red and blue. Fig.~\ref{fig:one-common-one-opposite-c} shows the implied edges in the formula graph, whose existence lead to the color configuration in the second figure.}
    \label{fig:one-common-one-opposite}
\end{figure}
\begin{figure}[t]
    \centering
    \begin{subfigure}{.19\textwidth}
    \centering
    \begin{tabular}{*{5}{p{\matrixcolumnwidth}}}
        \hcr{0} & \hcr{1} & x & \hcr{1} & z \\
        \hcr{1} & \hcr{0} & y & \hcr{0} & t \\
        u & \hcb{0} & \hcb{1} & p & \hcb{1} \\
        v & \hcb{1} & \hcb{0} & q & \hcb{0} 
    \end{tabular}
    \caption{Notation.}
    \label{fig:lemma-one-common-one-opposite-matrices-a}
    \end{subfigure}
    \begin{subfigure}{.21\textwidth}
    \centering
    \begin{tabular}{*{5}{p{\matrixcolumnwidth}}}
        0 & 1 & \hc{0} & 1 & z \\
        1 & 0 & \hc{0} & 0 & t \\
        \hc{0} & 0 & 1 & p & 1 \\
        \hc{0} & 1 & 0 & q & 0 
    \end{tabular}
    \caption{$x, y, u, v$ are $0$.}
    \label{fig:lemma-one-common-one-opposite-matrices-b}
    \end{subfigure}
    \begin{subfigure}{.17\textwidth}
    \centering
    \begin{tabular}{*{5}{p{\matrixcolumnwidth}}}
        0 & 1 & 0 & 1 & z \\
        1 & 0 & 0 & 0 & \hc{0} \\
        0 & 0 & 1 & p & 1 \\
        0 & 1 & 0 & q & 0 
    \end{tabular}
    \caption{$t$ is $0$.}
    \label{fig:lemma-one-common-one-opposite-matrices-c}
    \end{subfigure}
    \begin{subfigure}{.19\textwidth}
    \centering
    \begin{tabular}{*{5}{p{\matrixcolumnwidth}}}
        0 & 1 & 0 & 1 & \hc{0} \\
        1 & 0 & 0 & 0 & 0 \\
        0 & 0 & 1 & p & 1 \\
        0 & 1 & 0 & q & 0 
    \end{tabular}
    \caption{$z$ is $0$.}
    \label{fig:lemma-one-common-one-opposite-matrices-d}
    \end{subfigure}
    \begin{subfigure}{.19\textwidth}
    \centering
    \begin{tabular}{*{5}{p{\matrixcolumnwidth}}}
        0 & 1 & 0 & 1 & 0 \\
        1 & 0 & 0 & 0 & 0 \\
        0 & 0 & 1 & \hc{0} & 1 \\
        0 & 1 & 0 & q & 0 
    \end{tabular}
    \caption{$p$ is $0$.}
    \label{fig:lemma-one-common-one-opposite-matrices-e}
    \end{subfigure}
    \caption{Steps in the proof of Lemma \ref{lemma-one-common-one-opposite}. The final matrix reached implies the conclusions regardless of the value of $q \in \{0, 1\}$.}
    \label{fig:lemma-one-common-one-opposite-matrices}
\end{figure}

\begin{lemma} \label{lemma-one-common-one-opposite} Consider five voters, say $1, 2, 3, 4$ and $5,$ such that the formula graph has edges $(1, 2) \edge (1, 4)$ and $(2, 3) \edge (2, 5)$, and that, moreover, $(1, 2)$ and $(2, 3)$ have different colors in the colorful graph. If edge $(1, 3)$ is either not present in the colorful graph, or it has a different color than $(1, 2)$ and $(2, 3),$ then the formula graphs is guaranteed to also contain edges $(1, 3) \edge (1, 5),$ $(4, 3) \edge (4, 5),$ $(2, 3) \edge (4, 3)$ and $(2, 5) \edge (4, 5)$. The scenario is illustrated in Fig.~\ref{fig:one-common-one-opposite}.
\end{lemma}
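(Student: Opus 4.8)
The plan is to reduce the statement to a purely local question about the partially-filled approval matrix on the four candidates and five voters involved, and then to force all but one of its unknown entries using the two hypotheses. First I would invoke Proposition~\ref{prop:unchanged_input} to assume, without loss of generality, that the formula-graph edge $(1,2)\edge(1,4)$ is witnessed by an NB constraint $(2,1,4)$ coming from one candidate pair and the edge $(2,3)\edge(2,5)$ by an NB constraint $(3,2,5)$ coming from a \emph{disjoint} candidate pair. Calling the first pair $\{1,2\}$ (the ``red'' candidates) and the second $\{3,4\}$ (the ``blue'' candidates) pins down all coloured entries of the $4\times 5$ sub-matrix of Fig.~\ref{fig:lemma-one-common-one-opposite-matrices-a} and leaves exactly the eight unknowns $x,y,z,t,u,v,p,q$.

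The engine of the argument is that the hypotheses forbid two phenomena among candidates $1$--$4$. First, no candidate pair may witness an NB constraint whose formula-graph edges attach $(1,3)$ (or its complement $(3,1)$) to the red or blue complementary pair, since that would colour the colorful-graph edge $1\edge 3$ red or blue, against the hypothesis on $(1,3)$. Second, no candidate pair may witness an NB constraint merging the red component with the blue one (equivalently $\overline{(1,2)}$ with $(2,3)$, and so on), since by the complementary-pairs structure (Lemma~\ref{lemma_complement_is_cc} and Proposition~\ref{thm:full_charact}) this would force $(1,2)$ and $(2,3)$ to share a colour, contradicting that they differ. I would then run the finite case analysis forcing the unknowns to the values of Fig.~\ref{fig:lemma-one-common-one-opposite-matrices}. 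For example, $x=1,y=0$ makes pair $\{1,2\}$ witness $(2,1,3)$, whose edge $(1,2)\edge(1,3)$ recolours $(1,3)$ red (violating the first phenomenon); $x=0,y=1$ makes $\{1,2\}$ witness $(1,2,3)$, whose edge $(2,1)\edge(2,3)$ merges $\overline{(1,2)}$ with $(2,3)$ (violating the second); the symmetric constraints on pair $\{3,4\}$ force $u=v=0$; and $z,t,p$ fall to analogous arguments, while $q$ remains free.

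Once the matrix is forced to its canonical form, the four conclusions can be read off directly. Candidate pair $\{1,3\}$ then witnesses the NB constraints $(2,3,4)$, $(2,5,4)$ and $(3,4,5)$, yielding the formula-graph edges $(2,3)\edge(4,3)$, $(2,5)\edge(4,5)$ and $(4,3)\edge(4,5)$; and candidate pair $\{2,3\}$ witnesses $(3,1,5)$, yielding $(1,3)\edge(1,5)$. Crucially, none of these four edges depends on the value of $q$, matching the caption of Fig.~\ref{fig:lemma-one-common-one-opposite-matrices}.

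The hard part will be the interlocking nature of the case analysis. The ``strict'' sub-cases, where a voter separates a candidate pair, give an immediate violation of one of the two phenomena from a single candidate pair; but the ``indifferent'' sub-cases, where the extra approvals make a voter approve or disapprove \emph{both} candidates of a pair (e.g. $x=y=1$ or $u=v=1$), do not contradict anything through one pair in isolation. There one must combine several still-unforced entries, typically bringing in a third candidate pair such as $\{2,4\}$ or $\{1,3\}$ together with an already-established zero, to expose a merge (one checks, for instance, that $x=y=1$ together with $v=0$ makes $\{2,4\}$ witness $(1,2,3)$ and hence merges the colours). Organising these dependencies so that the forcing terminates, rather than blindly exhausting all $2^{8}$ fillings, is the crux, which is precisely why the detailed bookkeeping is deferred to the appendix.
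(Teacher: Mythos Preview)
Your proposal is correct and follows essentially the same route as the paper: reduce via Proposition~\ref{prop:unchanged_input} to the $4\times 5$ partially-filled matrix of Fig.~\ref{fig:lemma-one-common-one-opposite-matrices-a}, force the unknowns by repeatedly exhibiting NB constraints that would violate the hypotheses, and then read off the four desired formula-graph edges from the forced matrix. Your framing of the hypotheses as two forbidden phenomena (recolouring $1\edge 3$ red/blue, and merging the red and blue components via the complementary-pairs structure) is a clean way to organise the contradictions, and your explicit identification of the witnessing candidate pairs $\{1,3\}$ and $\{2,3\}$ for the final conclusions is exactly what ``looking only at the first three rows'' amounts to in the paper.

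One organisational device the paper uses that you do not mention: after establishing $x=y$, $u=v$, and $x=u$ (the last via the cross-pairs $\{1,3\}$ and $\{2,4\}$, as you anticipate), the paper invokes the symmetry $\calP\leftrightarrow\overline{\calP}$ together with swapping the row pairs $(1,2)\leftrightarrow(3,4)$ to assume without loss of generality that $x=0$. This is what actually pins down $x=y=u=v=0$ in one stroke and avoids having to chase the indifferent sub-cases $x=y=1$ and $u=v=1$ individually. Your middle paragraph asserts ``the symmetric constraints on pair $\{3,4\}$ force $u=v=0$'', which overshoots---pair $\{3,4\}$ alone only gives $u=v$---but your final paragraph shows you are aware of this and of the need for cross-pair arguments; the complementation symmetry is simply the tidiest way to close that loop.
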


\begin{proof} Matrix $\calP$ induces NB constraints $(2, 1, 4)$ and $(3, 2, 5)$. By Proposition \ref{prop:unchanged_input} and the discussion afterward, we can assume that matrix $\calP$ has the form in Fig.~\ref{fig:lemma-one-common-one-opposite-matrices-a}. First, we prove that $x = y.$ Assume the contrary, and make a case distinction:

If $xy = 01,$ then $\calP$ induces the NB constraint $(1, 2, 3),$ so edge $(1, 2) \edge (3, 2)$ is in the formula graph, so $(1, 2)$ and $(3, 2)$ are in the same component, meaning that they bear the same color, contradicting the hypothesis.

If $xy = 10,$ then $\calP$ induces the NB constraint $(2, 1, 3),$ so edge $(2, 1) \edge (3, 1)$ is in the formula graph, so the component of edge $(3, 1)$ has size at least two, meaning that it occurs in the colored graph, contradicting the hypothesis.

Subsequently, we prove that $u = v.$ Assume the contrary. If $uv = 10,$ then $\calP$ induces the NB constraint $(1, 2, 3),$ and a contradiction follows similarly to the first case above. If $uv = 01,$ then $\calP$ induces the NB constraint $(2, 3, 1),$ from which the edge $(2, 3) \edge (1, 3)$ is in the formula graph, so the component of edge $(3, 1)$ has size at least two, meaning that it occurs in the colored graph, contradicting the hypothesis.

Now, let us prove that $x = u.$ Assume the contrary. If $xu = 01,$ then $\calP$ induces the NB constraint $(1, 2, 3),$ and a contradiction follows as before. If $xu = 10,$ then $\calP$ again induces the NB constraint $(1, 2, 3),$ again a contradiction.

Since $\calP$ and $\overline{\calP}$, which is the profile obtained from $\calP$ by exchanging zeros and ones, induce the same graphs, note that the cases with $x = 1$ and $x = 0$ are isomorphic with respect to complementing the matrix and permuting the row pairs $(1, 2)$ and $(3, 4)$. Therefore, it is enough to consider the case $x = 0,$ so our matrix has the form in Fig.~\ref{fig:lemma-one-common-one-opposite-matrices-b}.

Now, note that if $t = 1,$ then $\calP$ induces the NB constraint $(1, 2, 5)$, from which $(1, 2)$ and $(5, 2)$ have the same color in the colored graph, contradicting our hypothesis, so $t = 0$. Therefore, matrix $\calP$ has the form in Fig.~\ref{fig:lemma-one-common-one-opposite-matrices-c}. Notice that this induces the NB constraint $(3, 1, 5)$.

Moreover, observe that if $z = 1,$ then $\calP$ induces the NB constraint $(2, 1, 5),$ which, together with the presence of constraint $(3, 1, 5)$ means that $(1, 2)$ and $(1, 3)$ are colored with the same color, contradicting the hypothesis, so $z = 0$ holds. The matrix right now looks like in Fig.~\ref{fig:lemma-one-common-one-opposite-matrices-d}.

Finally, observe that if $p = 1,$ then $\calP$ induces the NB constraint $(4, 1, 5),$ which, together with the presence of constraints $(3, 1, 5)$ and $(2, 1, 4),$ means that $(1, 2)$ and $(1, 3)$ are colored with the same color, contradicting the hypothesis, so $p = 0$ holds. The matrix now looks like in Fig.~\ref{fig:lemma-one-common-one-opposite-matrices-e}. Looking only at the first three rows, we already get the conclusion. It can also be checked that both options for $q$ lead to no contradictions, but this is not necessary.
\end{proof}

\begin{figure}[t]
    \centering
\begin{subfigure}{.3\textwidth}
    \centering
    \includegraphics[width=.8\linewidth]{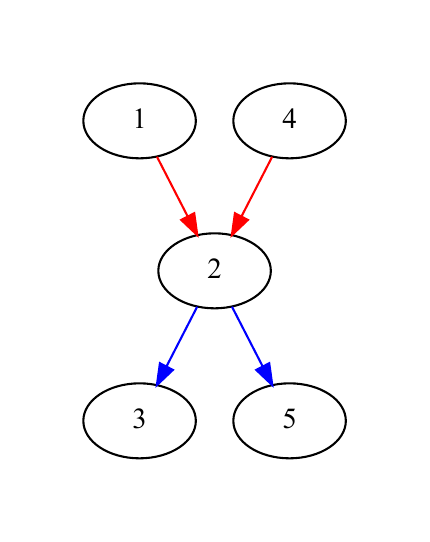}
    \caption{Assumed edge colors.}
    \label{fig:lemma-two-common-a}
\end{subfigure}
\begin{subfigure}{.3\textwidth}
    \centering
    \includegraphics[width=.92\linewidth]{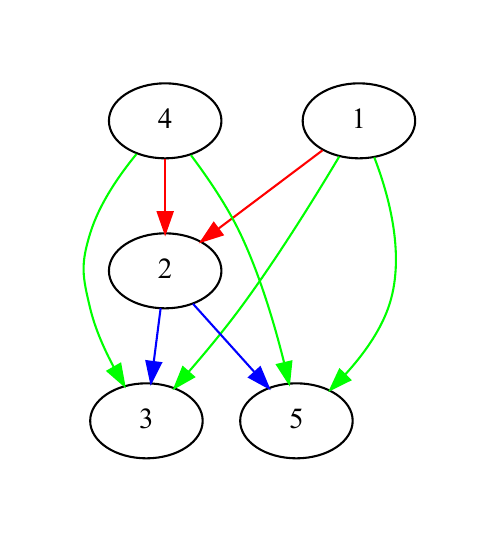}
    \caption{Implied edge colors.}
    \label{fig:lemma-two-common-b}
\end{subfigure}
\begin{subfigure}{.38\textwidth}
    \centering
    \includegraphics[width=.5\linewidth]{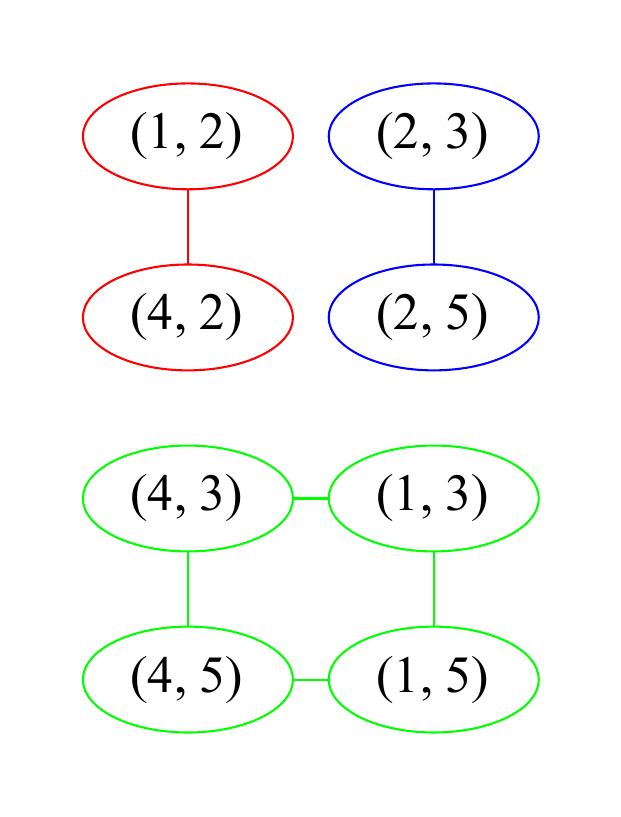}
    \vspace{0.35cm}
    \caption{Impl.~edges in formula graph.}
    \label{fig:lemma-two-common-c}
\end{subfigure}
\caption{Illustration of Lemma \ref{lemma-two-common}. Fig.~\ref{fig:lemma-two-common-a} depicts the assumptions:~edges $(1, 2) \edge (4, 2)$ and $(2, 3) \edge (2, 5)$ exist in the formula graph, with the first two colored red and the last two blue. Additionally, edge $(1, 3)$ is neither red nor blue. Fig.~\ref{fig:lemma-two-common-b} shows the edge colors implied by the setup, where green is some arbitrary color other than red and blue. Fig.~\ref{fig:lemma-two-common-c} shows the implied edges in the formula graph, whose existence lead to the color configuration in the second figure.}
\label{fig:lemma-two-common}
\end{figure}

Recall that a critical point to note is that Lemma \ref{lemma-one-common-one-opposite} continues to hold when some of the five voters happen to be the same, as can be seen  by creating distinct copies of such voters and then invoking the lemma. Similar considerations hold for all our results of this sort.

\begin{figure}[t]
    \centering
    \begin{subfigure}{.22\textwidth}
    \centering
    \begin{tabular}{*{5}{p{\matrixcolumnwidth}}}
        \hcr{0} & \hcr{1} & x & \hcr{0} & z \\
        \hcr{1} & \hcr{0} & y & \hcr{1} & t \\
        u & \hcb{0} & \hcb{1} & p & \hcb{1} \\
        v & \hcb{1} & \hcb{0} & q & \hcb{0} 
    \end{tabular}
    \caption{Notation.}
    \label{fig:lemma-two-common-matrices-a}
    \end{subfigure}
    \begin{subfigure}{.3\textwidth}
    \centering
    \begin{tabular}{*{5}{p{\matrixcolumnwidth}}}
        0 & 1 & \hc{0} & 0 & \hc{0} \\
        1 & 0 & \hc{0} & 1 & \hc{0} \\
        \hc{0} & 0 & 1 & p & 1 \\
        \hc{0} & 1 & 0 & q & 0 
    \end{tabular}
    \caption{$x, y, u, v, z, t$ are $0$.}
    \label{fig:lemma-two-common-matrices-b}
    \end{subfigure}
    \begin{subfigure}{.22\textwidth}
    \centering
    \begin{tabular}{*{5}{p{\matrixcolumnwidth}}}
        \hcb{0} & \hcb{1} & v & \hcb{0} & q \\
        \hcb{1} & \hcb{0} & u & \hcb{1} & p \\
        y & \hcr{0} & \hcr{1} & t & \hcr{1} \\
        x & \hcr{1} & \hcr{0} & z & \hcr{0} 
    \end{tabular}
    \caption{Self-symmetry.}
    \label{fig:lemma-two-common-matrices-c}
    \end{subfigure}
    \begin{subfigure}{.22\textwidth}
    \centering
    \begin{tabular}{*{5}{p{\matrixcolumnwidth}}}
        0 & 1 & 0 & 0 & 0 \\
        1 & 0 & 0 & 1 & 0 \\
        0 & 0 & 1 & \hc{0} & 1 \\
        0 & 1 & 0 & \hc{0} & 0 
    \end{tabular}
    \caption{$p, q$ are $0$.}
    \label{fig:lemma-two-common-matrices-d}
    \end{subfigure}
    
    \caption{Steps in the proof of Lemma \ref{lemma-two-common}.}
    \label{fig:lemma-two-common-matrices}
\end{figure}

\begin{lemma}\label{lemma-two-common} Consider five voters, say $1, 2, 3, 4$ and $5,$ such that the formula graph has edges $(1, 2) \edge (4, 2)$ and $(2, 3) \edge (2, 5)$, and that, moreover, $(1, 2)$ and $(2, 3)$ have different colors in the colorful graph. If edge $(1, 3)$ is either not present in the colorful graph, or it has a different color than $(1, 2)$ and $(2, 3),$ then the formula graphs is guaranteed to also contain edges $(4, 3) \edge (1, 3),$ $(4, 3) \edge (4, 5),$ $(1, 3) \edge (1, 5)$ and $(4, 5) \edge (1, 5)$. The scenario is illustrated in Fig.~\ref{fig:lemma-two-common}.
\end{lemma}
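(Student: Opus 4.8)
The plan is to mirror the proof of Lemma~\ref{lemma-one-common-one-opposite}, the only structural change being that the two edges of the first pair now share their second coordinate, so the inducing constraint is $(1, 2, 4)$ rather than the $(2, 1, 4)$ of the sibling lemma. Together with the edge $(2, 3) \edge (2, 5)$, which still encodes $(3, 2, 5)$, the hypotheses give that $\calP$ induces the NB constraints $(1, 2, 4)$ and $(3, 2, 5)$. By Proposition~\ref{prop:unchanged_input} and the disjointness remark following it, I would assume these two constraints are witnessed by disjoint candidate pairs, candidates $1, 2$ and $3, 4$, pinning the approval matrix to the template of Fig.~\ref{fig:lemma-two-common-matrices-a} with the eight unknowns $x, y, z, t$ (voters $3, 5$ on candidates $1, 2$) and $u, v, p, q$ (voters $1, 4$ on candidates $3, 4$). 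The goal is to force every unknown to $0$, after which the conclusion is read off directly.

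Since this template agrees with that of Lemma~\ref{lemma-one-common-one-opposite} in every cell except voter $4$'s entries on candidates $1, 2$, and since voter $4$ is never read in the arguments deriving $x = y$, $u = v$, $x = u$, $t = 0$ and $z = 0$ there, those derivations carry over unchanged: each wrong value of an unknown makes $\calP$ induce an extra NB constraint — for instance $(1, 2, 3)$ or $(2, 1, 3)$ (for $x = y$), $(1, 2, 5)$ (for $t = 0$), or the pair $(2, 1, 5)$ and $(3, 1, 5)$ (for $z = 0$) — that forces either $c(1, 2) = c(2, 3)$ or $c(1, 3) = c(1, 2)$, contradicting the colour hypotheses. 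Pinning the common value $x = y = u = v$ down to $0$ is again done via a hypothesis-preserving symmetry, here complementing the matrix and swapping candidates $1 \leftrightarrow 2$ and $3 \leftrightarrow 4$: this leaves voters untouched, so the colorful graph and all three colour hypotheses are preserved, while it flips $x$. Together with the direct facts $t = 0$ and $z = 0$, this yields $x = y = u = v = z = t = 0$, reaching the matrix of Fig.~\ref{fig:lemma-two-common-matrices-b}.

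It remains to settle $p, q$, where voter $4$'s flipped column blocks the direct route used in the sibling lemma. Here I would invoke the self-symmetry of Fig.~\ref{fig:lemma-two-common-matrices-c}: the relabeling that swaps candidates $1 \leftrightarrow 4$, $2 \leftrightarrow 3$ and voters $1 \leftrightarrow 3$, $4 \leftrightarrow 5$ (fixing voter $2$) carries the instance to one of the same form, since it permutes the inducing constraints $(1, 2, 4)$ and $(3, 2, 5)$ among themselves, while exchanging the unknowns $z \leftrightarrow q$ and $t \leftrightarrow p$. Crucially this relabeling merely interchanges the colorful-graph edges $(1, 2)$ and $(2, 3)$ and fixes $(1, 3)$, so it preserves all three colour hypotheses; applying it to the established facts $z = 0$ and $t = 0$ therefore yields $q = 0$ and $p = 0$, the all-zero matrix of Fig.~\ref{fig:lemma-two-common-matrices-d}. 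In that matrix candidates $2$ and $3$ alone realize, across voters $1, \ldots, 5$, the preferences $2 \succ 3$, indifference, $3 \succ 2$, $2 \succ 3$, $3 \succ 2$, which induce exactly the constraints $(3, 1, 5)$, $(3, 4, 5)$, $(1, 3, 4)$ and $(1, 5, 4)$; these are precisely the four claimed formula-graph edges $(1, 3) \edge (1, 5)$, $(4, 3) \edge (4, 5)$, $(4, 3) \edge (1, 3)$ and $(4, 5) \edge (1, 5)$ (and, as in the sibling lemma, the value of $q$ never enters this pattern).

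The main obstacle is the case bookkeeping of the middle step: for each unknown and each wrong value one must identify the induced NB constraint and verify it collapses one of the two colour coincidences that the hypotheses forbid. The genuinely new difficulty relative to Lemma~\ref{lemma-one-common-one-opposite} is that the flipped voter-$4$ column breaks the direct argument for $p = 0$, so the cleanest remedy is to recognise the self-symmetry of Fig.~\ref{fig:lemma-two-common-matrices-c} and transport the $z, t$ arguments to $q, p$; a direct argument is also available — using the given edge $(1, 2) \edge (4, 2)$ to equate $c(4, 2) = c(1, 2)$ inside the constraint $(3, 2, 4)$ — but it is less uniform.
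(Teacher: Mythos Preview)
Your proof is correct and follows the same approach as the paper: reuse the column-4-free portion of the argument from Lemma~\ref{lemma-one-common-one-opposite} to obtain $x=y=u=v=z=t=0$, then invoke the self-symmetry swapping voters $1\leftrightarrow 3$, $4\leftrightarrow 5$ and candidates $1\leftrightarrow 4$, $2\leftrightarrow 3$ to transport $z=t=0$ to $q=p=0$, and finally read off the four claimed formula-graph edges from the resulting all-zero matrix. Your explicit identification of the symmetry and the check that it preserves all three colour hypotheses are in fact cleaner than the paper's somewhat terse ``interchange the roles of candidate pairs $(1,3)$ and $(4,5)$'' (which really means voter/column pairs), and your observation that candidates $2$ and $3$ alone suffice for the final step is a slight sharpening of the paper's ``first three rows''.
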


\begin{proof} Matrix $\calP$ induces NB constraints $(1, 2, 4)$ and $(3, 2, 5)$. By Proposition \ref{prop:unchanged_input} and the discussion afterwards, we can assume that matrix $\calP$ has the form in Fig.~\ref{fig:lemma-two-common-matrices-a}. Notice that matrices in Fig.~\ref{fig:lemma-two-common-matrices-a} and Fig.~\ref{fig:lemma-one-common-one-opposite-matrices-a} differ only in the two red entries in column 4 being swapped, so the proof of Lemma \ref{lemma-one-common-one-opposite} works unmodified until the first time column 4 is mentioned, so we get that it is enough to consider the case $x = y = u = v = z = t = 0$, depicted in Fig.~\ref{fig:lemma-two-common-matrices-b}. Moreover, notice that, in contrast to the previous lemma, the scenario in Fig.~\ref{fig:lemma-two-common-a} is symmetric with respect to interchanging the red/blue components. In particular, if we interchange the roles of candidate pairs $(1, 3)$ and $(4, 5)$ and additionally permute row pairs $(1, 4)$ and $(2, 3)$ in Fig.~\ref{fig:lemma-two-common-matrices-a} we get the matrix in Fig.~\ref{fig:lemma-two-common-matrices-c}, which is identical up to relabelling of the variables $x, y,$ etc. Consequently, reapplying our argument so far to the matrix in Fig.~\ref{fig:lemma-two-common-matrices-c} gives us that $v = u = y = x = q = p = 0.$ Altogether, we now know that $x = y = u = v = z = t = p = q = 0$, fact depicted in Fig.~\ref{fig:lemma-two-common-matrices-d}. Once again, looking only at the first three rows, we get our conclusion.
\end{proof}

We can now prove Lemma \ref{lemma-middle} by combining Lemmas \ref{lemma-one-common-one-opposite} and \ref{lemma-two-common}, as follows:

\repeatlemma{lemma-middle}

\begin{proof} Since edge $(2, 3)$ is in the colorful graph, an edge $(2, 3) \edge (x, y)$ has to exist in the formula graph. By construction of the formula graph, either $x = 2$ or $y = 3$. If $x = 2,$ then Lemma \ref{lemma-two-common} gives the conclusion. If $y = 3,$ then Lemma \ref{lemma-one-common-one-opposite} gives the conclusion.
\end{proof}

\subsection{Proof of Theorem \ref{lemma-no-edge-cliques}}

In this section, we prove Theorem \ref{lemma-no-edge-cliques}. In order to do so, we will need yet another intermediate result similar in spirit to Lemmas \ref{lemma-one-common-one-opposite} and \ref{lemma-two-common}, as follows:

\begin{figure}[t]
    \centering
\begin{subfigure}{.3\textwidth}
    \centering
    \includegraphics[width=.9\linewidth]{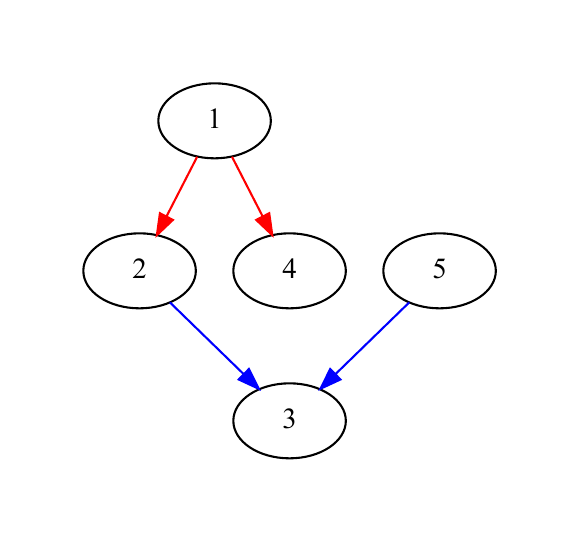}
    \caption{Assumed edge colors.}
    \label{fig:lemma-two-opposite-a}
\end{subfigure}
\begin{subfigure}{.3\textwidth}
    \centering
    \includegraphics[width=.9\linewidth]{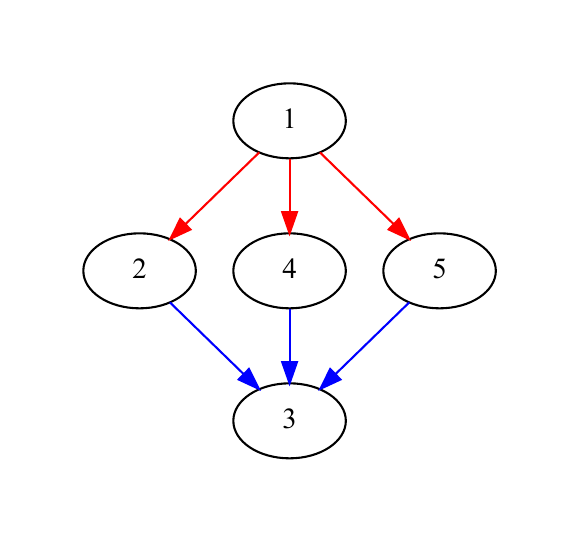}
    \caption{Implied edge colors.}
    \label{fig:lemma-two-opposite-b}
\end{subfigure}
\begin{subfigure}{.38\textwidth}
    \centering
    \includegraphics[width=.5\linewidth]{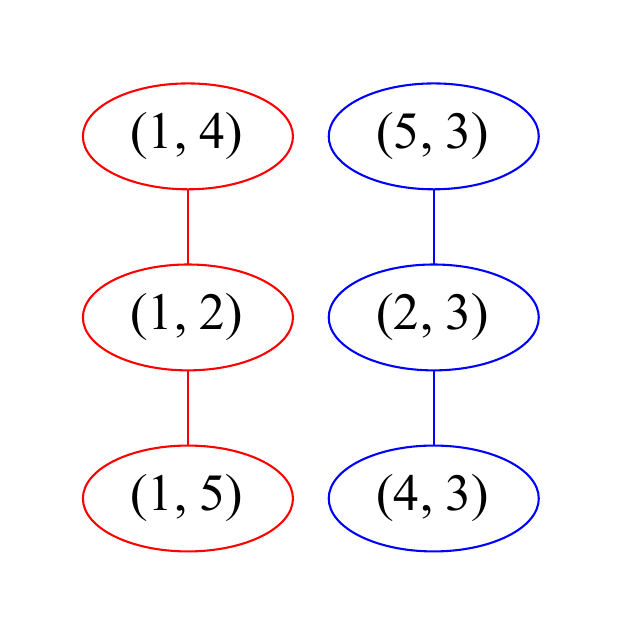}
    \vspace{0.52cm}
    \caption{Impl.~edges in formula graph.}
    \label{fig:lemma-two-opposite-c}
\end{subfigure}
\caption{Illustration of Lemma \ref{lemma-two-opposite}. Fig.~\ref{fig:lemma-two-opposite-a} depicts the assumptions:~edges $(1, 2) \edge (1, 4)$ and $(2, 3) \edge (5, 3)$ exist in the formula graph, with the first two colored red and the last two blue. Additionally, edge $(1, 3)$ is neither red nor blue. Fig.~\ref{fig:lemma-two-opposite-b} shows the edge colors implied by the setup. Fig.~\ref{fig:lemma-two-opposite-c} shows the implied edges in the formula graph, whose existence lead to the color configuration in the second figure.}
\label{fig:lemma-two-opposite}
\end{figure}
\begin{figure}[t]
    \centering
    \begin{subfigure}{.22\textwidth}
    \centering
    \begin{tabular}{*{5}{p{\matrixcolumnwidth}}}
        \hcr{0} & \hcr{1} & x & \hcr{1} & z \\
        \hcr{1} & \hcr{0} & y & \hcr{0} & t \\
        u & \hcb{0} & \hcb{1} & p & \hcb{0} \\
        v & \hcb{1} & \hcb{0} & q & \hcb{1} 
    \end{tabular}
    \caption{Notation.}
    \label{fig:lemma-two-opposite-matrices-a}
    \end{subfigure}
    \begin{subfigure}{.22\textwidth}
    \centering
    \begin{tabular}{*{5}{p{\matrixcolumnwidth}}}
        0 & 1 & \hc{0} & 1 & z \\
        1 & 0 & \hc{0} & 0 & t \\
        \hc{0} & 0 & 1 & p & 0 \\
        \hc{0} & 1 & 0 & q & 1 
    \end{tabular}
    \caption{$x, y, u, v$ are $0$.}
    \label{fig:lemma-two-opposite-matrices-b}
    \end{subfigure}    
    \begin{subfigure}{.22\textwidth}
    \centering
    \begin{tabular}{*{5}{p{\matrixcolumnwidth}}}
        0 & 1 & 0 & 1 & z \\
        1 & 0 & 0 & 0 & \hc{0} \\
        0 & 0 & 1 & p & 0 \\
        0 & 1 & 0 & q & 1 
    \end{tabular}
    \caption{$t$ is $0$.}
    \label{fig:lemma-two-opposite-matrices-c}
    \end{subfigure}
    \begin{subfigure}{.22\textwidth}
    \centering
    \begin{tabular}{*{5}{p{\matrixcolumnwidth}}}
        0 & 1 & 0 & 1 & z \\
        1 & 0 & 0 & 0 & 0\\
        0 & 0 & 1 & \hc{0} & 0 \\
        0 & 1 & 0 & q & 1 
    \end{tabular}
    \caption{$p$ is $0$.}
    \label{fig:lemma-two-opposite-matrices-d}
    \end{subfigure}
    \caption{Steps in the proof of Lemma \ref{lemma-two-opposite}. The final matrix reached implies the conclusions regardless of the values of $(q, z) \in \{0, 1\}^2$.}
    \label{fig:lemma-two-opposite-matrices}
\end{figure}
\begin{lemma}\label{lemma-two-opposite} Consider five voters, say $1, 2, 3, 4$ and $5,$ such that the formula graph has edges $(1, 2) \edge (1, 4)$ and $(2, 3) \edge (5, 3)$, and that, moreover, $(1, 2)$ and $(2, 3)$ have different colors in the colorful graph. If edge $(1, 3)$ is either not present in the colorful graph, or it has a different color than $(1, 2)$ and $(2, 3),$ then the formula graphs is guaranteed to also contain edges $(1, 2) \edge (1, 5)$ and $(2, 3) \edge (4, 3)$. The scenario is illustrated in Fig.~\ref{fig:lemma-two-opposite}.
\end{lemma}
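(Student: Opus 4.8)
The plan is to mirror the proofs of Lemmas~\ref{lemma-one-common-one-opposite} and~\ref{lemma-two-common}. Using Proposition~\ref{prop:unchanged_input}, I would first assume the two hypothesised formula-graph edges are witnessed by disjoint candidate pairs, which reduces everything to reasoning about the $4 \times 5$ partially-filled approval matrix of Fig.~\ref{fig:lemma-two-opposite-matrices-a} over voters $1,\dots,5$. There the edge $(1,2) \edge (1,4)$ records the NB constraint $(2,1,4)$, witnessed by the first two rows, and $(2,3) \edge (5,3)$ records $(2,3,5)$, witnessed by the last two. The free entries are the variables $x,y,z,t,u,v,p,q$, and the whole strategy is to use the colour hypotheses (that $(1,2)$ and $(2,3)$ differ, and that $(1,3)$ is absent or differs from both) to force these entries, after which the pinned-down matrix will exhibit the NB constraints corresponding to the two claimed edges.

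The point that makes reuse possible is that Fig.~\ref{fig:lemma-two-opposite-matrices-a} coincides with the matrix of Lemma~\ref{lemma-one-common-one-opposite} in columns~$1$ through~$4$, differing only in column~$5$. I would check that the sub-arguments proving $x=y$, $u=v$ and $x=u$ there invoke only columns~$1,2,3$, and that the complementation-and-row-swap symmetry used to reduce to $x=0$ still maps this configuration to itself; both hold, so those steps transfer unchanged and give $x=y=u=v=0$ (Fig.~\ref{fig:lemma-two-opposite-matrices-b}).

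What remains is column~$5$. For $t=0$: if $t=1$, then comparing the second matrix row against the third yields the voter-pattern $({+},0,{-},\cdot,{+})$, hence the NB constraint $(1,3,5)$ and the formula-graph edge $(1,3) \edge (5,3)$; combined with the assumed $(2,3) \edge (5,3)$, this places $(1,3)$ in the component of $(2,3)$, so they share a colour -- contradiction. For $p=0$: if $p=1$, the same two rows give the pattern $({+},0,{-},{-},0)$, hence the NB constraint $(3,1,4)$ and the edge $(1,3) \edge (1,4)$; since $(1,4)$ carries the colour of $(1,2)$ by the assumed $(1,2) \edge (1,4)$, this forces $(1,3)$ to share $(1,2)$'s colour -- again a contradiction. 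Crucially, neither argument touches the entries $z$ or $q$, so both go through for all their values.

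With $x=y=u=v=t=p=0$ (Fig.~\ref{fig:lemma-two-opposite-matrices-d}) the conclusion is immediate: the fourth and second rows induce $(2,1,5)$, giving $(1,2) \edge (1,5)$, and the first and third rows induce $(2,3,4)$, giving $(2,3) \edge (4,3)$. The main obstacle I expect is precisely the asymmetry noted in the figure caption: unlike Lemma~\ref{lemma-two-common}, the hypotheses do not determine $z$ and $q$, so at every step one must pick a witnessing pair of rows whose induced NB constraint attaches $(1,3)$ to the red or the blue component -- the only two forbidden outcomes -- while reading off solely from entries independent of the still-free $z,q$, so that the two target constraints persist for every $(q,z) \in \{0,1\}^2$.
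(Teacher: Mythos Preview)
Your proposal is correct and follows essentially the same route as the paper's proof: reduce to the $4\times 5$ template via Proposition~\ref{prop:unchanged_input}, reuse the column-$5$-independent part of the Lemma~\ref{lemma-one-common-one-opposite} argument to force $x=y=u=v=0$, then eliminate $t=1$ and $p=1$ by exhibiting the NB constraints $(1,3,5)$ and $(3,1,4)$ (each of which would put $(1,3)$ into the blue resp.\ red component), and finally read off $(2,1,5)$ and $(2,3,4)$ from the pinned-down matrix regardless of $z,q$. Your identification of the specific witnessing row pairs is slightly more explicit than the paper's, but the logic is identical.
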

\begin{proof}
Matrix $\calP$ induces NB constraints $(2, 1, 4)$ and $(2, 3, 5)$. By Proposition \ref{prop:unchanged_input} and the discussion afterwards, we can assume that matrix $\calP$ has the form in Fig.~\ref{fig:lemma-two-opposite-matrices-a}. Notice that matrices in Fig.~\ref{fig:lemma-two-opposite-matrices-a} and Fig.~\ref{fig:lemma-one-common-one-opposite-matrices-a} differ only in the two blue entries in column 5 being swapped, so the proof of Lemma \ref{lemma-one-common-one-opposite} works unmodified until the first time column 5 is mentioned, so we get that it is enough to consider the case $x = y = u = v = 0$, depicted in Fig.~\ref{fig:lemma-two-opposite-matrices-b}.

Now, notice that if $t = 1,$ then $\calP$ induces the NB constraint $(1, 3, 5),$ from which $(1, 3)$ and $(5, 3)$ have the same color in the colorful graph, which happens to also be the color of $(2, 3),$ contradicting our hypothesis, so $t = 0$ holds, fact depicted in Fig.~\ref{fig:lemma-two-opposite-matrices-c}.

Similarly, notice that if $p = 1,$ then $\calP$ induces the NB constraint $(3, 1, 4),$ from which $(1, 3)$ and $(1, 4)$ have the same color in the colorful graph, which happens to also be the color of $(1, 2),$ contradicting our hypothesis, so $p = 0$ holds, fact depicted in Fig.~\ref{fig:lemma-two-opposite-matrices-d}. As a remark, observe that we could have also argued that Fig.~\ref{fig:lemma-two-common-a} is symmetric with respect to interchanging red/blue components, and then use a symmetry argument like in the proof of Lemma \ref{lemma-two-common} to deduce that $p = 0,$ but this time not much work is saved in contrast to a direct argument.

Finally, irrespective of the values of $q$ and $t$, the current matrix $\calP$ already induces NB constraints $(2, 1, 5)$ and $(2, 3, 4)$, implying our conclusion. It can also be checked that all four options for $(z, q) \in \{0, 1\}^2$ lead to no contradictions, but this is not necessary.
\end{proof}

In order to streamline the proof of Theorem \ref{lemma-no-edge-cliques}, we summarize the essence of Lemmas \ref{lemma-one-common-one-opposite} and \ref{lemma-two-opposite} into the following lemma, similar in spirit to Lemma \ref{lemma-middle}:

\begin{figure}[t]
    \centering
    \begin{subfigure}{.25\textwidth}
    \centering
    \includegraphics[width=.95\linewidth]{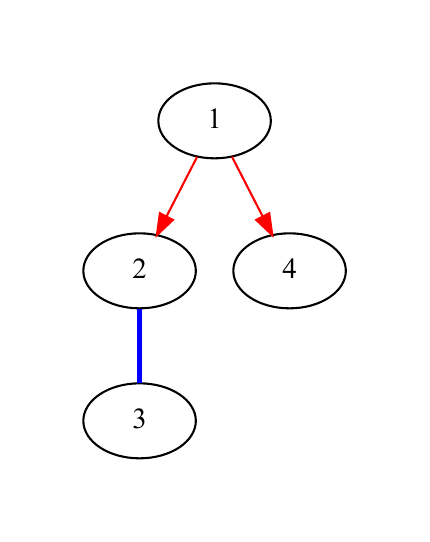}
    \caption{Assumed colors.}
    \label{fig:lemma-ends-a}
    \end{subfigure}
    \begin{subfigure}{.25\textwidth}
    \centering
    \includegraphics[width=.95\linewidth]{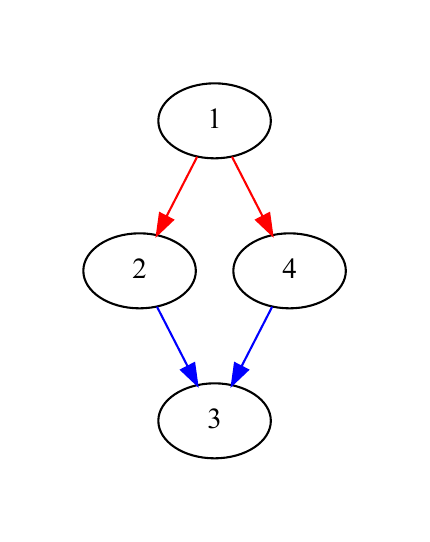}
    \caption{Implied colors.}
    \label{fig:lemma-ends-b}
    \end{subfigure}
    \caption{Illustration of Lemma \ref{lemma-ends}. Fig.~\ref{fig:lemma-ends-a} depicts the assumptions:~edge $(1, 2) \edge (1, 4)$ exists in the formula graph (colored red) and edge $(2, 3)$ exists in the colorful graph and bears a different color (colored blue). Additionally, edge $(1, 3)$ is neither red nor blue. Fig.~\ref{fig:lemma-ends-b} shows the colors implied by the setup.
    }
    \label{fig:lemma-ends}
\end{figure}
\begin{lemma} \label{lemma-ends} Consider four voters, say $1, 2, 3, 4,$ such that the formula graph contains the edge $(1, 2) \edge (1, 4)$ and that edge $(2, 3)$ exists in the colorful graph and has a different color than $(1, 2).$ If edge $(1, 3)$ is either not present in the colorful graph, or it has a different color than $(1, 2)$ and $(2, 3)$, then edge $(2, 3) \edge (4, 3)$ is guaranteed to be in the formula graph. The scenario is illustrated in Fig.~\ref{fig:lemma-ends-a} and Fig.~\ref{fig:lemma-ends-b}.
\end{lemma}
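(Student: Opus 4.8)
The plan is to mirror the proof of Lemma~\ref{lemma-middle}, dispatching to the two intermediate results Lemma~\ref{lemma-one-common-one-opposite} and Lemma~\ref{lemma-two-opposite}, which were tailored precisely to the two shapes that an edge incident to $(2,3)$ in the formula graph can take. First I would exploit the hypothesis that edge $(2,3)$ exists in the colorful graph: this means its connected component in the formula graph is non-singleton, so there is some edge $(2,3) \edge (x, y)$ in the formula graph. By the construction of the formula graph from constraint set \ref{bool_form_2_reformulated} — where every edge is of the form $(a, b) \edge (c, b)$ or $(a, b) \edge (a, c)$ — either $x = 2$ or $y = 3$, giving a clean two-way case split.

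In the case $x = 2$, rename the other endpoint as voter $5$, so that the formula graph contains $(1, 2) \edge (1, 4)$ and $(2, 3) \edge (2, 5)$. Together with the colour hypotheses (that $(1, 2)$ and $(2, 3)$ bear different colours, and that $(1, 3)$ is absent or of a third colour), this is exactly the hypothesis of Lemma~\ref{lemma-one-common-one-opposite} under the identity relabelling, whose conclusion already lists $(2, 3) \edge (4, 3)$ among the implied formula-graph edges. In the case $y = 3$, rename the other endpoint as voter $5$, so that the formula graph contains $(1, 2) \edge (1, 4)$ and $(2, 3) \edge (5, 3)$; this matches the hypothesis of Lemma~\ref{lemma-two-opposite} verbatim, whose conclusion again yields $(2, 3) \edge (4, 3)$. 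Either way the desired edge is obtained, completing the argument.

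The only point that requires care, rather than being a genuine obstacle, is verifying that the colour hypotheses transfer unchanged to each sub-lemma and that the instantiation stays valid even when the newly named voter $5$ happens to coincide with one of $1, 2, 3, 4$; the latter is dealt with by the standard cloning remark used throughout this section (introduce a fresh copy of the repeated voter and invoke the lemma with distinct voters, as noted after Lemma~\ref{lemma-one-common-one-opposite}). I expect no further difficulty, since all the heavy lifting — the $4 \times 5$ partial-matrix case analyses — is already internalised in Lemmas~\ref{lemma-one-common-one-opposite} and~\ref{lemma-two-opposite}, and the present lemma merely selects whichever of the two applies according to the orientation of the incident edge.
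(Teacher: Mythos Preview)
Your proposal is correct and follows essentially the same approach as the paper: use that $(2,3)$ lies in the colorful graph to obtain a formula-graph edge $(2,3)\edge(x,y)$, then split on whether $x=2$ or $y=3$ and invoke Lemma~\ref{lemma-one-common-one-opposite} or Lemma~\ref{lemma-two-opposite} respectively. The paper's proof is the same two-line dispatch, and your added remarks about the colour hypotheses and the cloning convention are accurate elaborations.
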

\begin{proof}
Since edge $(2, 3)$ is in the colorful graph, an edge $(2, 3) \edge (x, y)$ has to exist in the formula graph. By construction of the formula graph, either $x = 2$ or $y = 3$. If $x = 2,$ then Lemma \ref{lemma-one-common-one-opposite} gives the conclusion. If $y = 3,$ then Lemma \ref{lemma-two-opposite} gives the conclusion.
\end{proof}

We are now ready to prove Theorem \ref{lemma-no-edge-cliques}, as follows:

\repeattheorem{lemma-no-edge-cliques}
\begin{proof} Consider an inclusion-maximal set $S$ such that $2 \in S,$ $1, 3 \notin S$ and the subgraphs of the formula graph induced by $\{1\} \times S$ and $S \times \{3\}$ are connected. Note that the connectivity assumption implies that in the colorful graph edges in $\{1\} \times S$ have the color of $(1, 2)$ and edges in $S \times \{3\}$ have the color of $(2, 3)$. Denote by $C_{12}$ and $C_{23}$ the connected components of $(1, 2)$ and $(2, 3)$ in the formula graph. We will now show that $C_{12} = \{1\} \times S$ and $C_{23} = S \times \{3\}.$ Assume for a contradiction that this was not the case, then, by symmetry, without loss of generality, $C_{12} \neq \{1\} \times S$. Together with the above, this means that $\{1\} \times S \subsetneq C_{12}.$ Because $C_{12}$ induces a connected subgraph in the formula graph, and $\varnothing \neq \{1\} \times S \subsetneq C_{12},$ it follows that there is an edge in the formula graph crossing the cut $(\{1\} \times S, C_{12} \setminus (\{1\} \times S))$.
This edge can be of one of two forms:~either $(1, s) \edge (x, s)$ or $(1, s) \edge (1, x),$ where $s \in S$ and $x \notin \{1, s\}$. We tackle the two cases separately in the paragraphs below. 

If the edge crossing the cut is of the form $(1, s) \edge (x, s),$ then instantiate Lemma \ref{lemma-middle} with $1 \mapsto 1, 2 \mapsto s, 3 \mapsto 3$ and $4 \mapsto x$. This is sound because the edge $(1, s) \edge (x, s)$ exists in the formula graph and $(1, s),$ $(s, 3)$ have different colors in the colorful graph, together with the fact that $(1, 3)$ does not appear in the colorful graph. As a result, we get that edge $(1, 3) \edge (x, 3)$ is in the formula graph, contradicting the nonexistence of edge $(1, 3)$ in the colorful graph.

On the other hand, if the edge crossing the cut is of the form $(1, s) \edge (1, x),$ then this time we instantiate Lemma \ref{lemma-ends}, with values $1 \mapsto 1, 2 \mapsto s, 3 \mapsto 3$ and $4 \mapsto x$. This is sound because edge $(1, s) \edge (1, x)$ is in the formula graph and $(1, s), (s, 3)$ have different colors in the colorful graph, together with the fact that $(1, 3)$ does not appear in the colorful graph. As a result, we get that edge $(s, 3) \edge (x, 3)$ appears in the formula graph. As a result, the subgraphs induced by $\{1\} \times (S \cup \{x\})$ and $(S \cup \{x\}) \times \{3\}$ in the formula graph are connected. If we could also show that $x \notin (S \cup \{1, 3\})$ then we would get that our $S$ was not maximal with respect to inclusion, a contradiction. We now show this last fact to conclude the proof:~$x = 1$ was forbidden above; $x = 3$ would make our cut edge be $(1, s) \edge (1, 3)$, implying that $(1, 3)$ appeared in the colorful graph, a contradiction;~finally $x \in S$ would mean that $(1, s) \edge (1, x)$ is not a cut edge.
\end{proof}

\subsection{Proof of Lemma \ref{lemma-stronger-middle}}

In this section, we prove Lemma \ref{lemma-stronger-middle}, which we repeat below for convenience:

\repeatlemma{lemma-stronger-middle}

The proof closely follows that of Lemma \ref{lemma-middle}, except for requiring stronger versions of Lemmas \ref{lemma-one-common-one-opposite} and \ref{lemma-two-common}, which we state and prove next. The proofs remain similar in spirit, by identifying small matrices with blanks to be filled in, like Fig.~\ref{fig:lemma-one-common-one-opposite-matrices-a}. Unfortunately, the weaker assumptions make it rather challenging to proceed as before in a principled manner. Instead, we employ a computer program to try out all the possibilities.

\begin{lemma}[Strengthened Lemma \ref{lemma-one-common-one-opposite}]\label{stronger-lemma-one-common-one-opposite} Consider five voters, say $1, 2, 3, 4$ and $5,$ such that the formula graph has edges $(1, 2) \edge (1, 4)$ and $(2, 3) \edge (2, 5)$, and that, moreover, the formula graph does not contain edges $(1, 2) \edge (3, 2)$ or $(1, 2) \edge (5, 2)$. Then, the formula graphs is guaranteed to contain the edge $(1, 3) \edge (1, 5).$
\end{lemma}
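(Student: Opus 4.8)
The plan is to reuse the partially-filled $4 \times 5$ approval matrix of Lemma \ref{lemma-one-common-one-opposite} (Fig.~\ref{fig:lemma-one-common-one-opposite-matrices-a}), but, since the colour hypotheses that previously pinned down the blank entries are gone, to replace the sequential hand-elimination by a finite exhaustive check. First I would fix the matrix: by Proposition \ref{prop:unchanged_input} and the disjointness remark following it, the edge $(1,2)\edge(1,4)$ may be assumed to arise from an NB constraint $(2,1,4)$ witnessed by the candidate pair in rows $1,2$, and $(2,3)\edge(2,5)$ from a disjoint pair in rows $3,4$ witnessing $(3,2,5)$. This fixes the coloured entries exactly as in Fig.~\ref{fig:lemma-one-common-one-opposite-matrices-a} and leaves $x,y,z,t,u,v,p,q \in \{0,1\}$ free; note that $p,q$ lie in column $4$, which is irrelevant here because every constraint at play involves only voters $1,2,3,5$.

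Second, I would reduce the statement to a finite local claim. Write $R_1,\dots,R_4$ for the four chosen rows. Absence of $(1,2)\edge(3,2)$ from the formula graph means exactly that no candidate pair of the whole profile induces $(1,2,3)$; in particular $R_1,\dots,R_4$ induce no $(1,2,3)$, and likewise absence of $(1,2)\edge(5,2)$ gives that they induce no $(1,2,5)$. The goal $(1,3)\edge(1,5)$ asks that some pair induce $(3,1,5)$, and any pair among $R_1,\dots,R_4$ that induces it induces it in the whole profile as well. Hence it suffices to prove the finite statement: for every filling of $x,\dots,q$ under which $R_1,\dots,R_4$ induce neither $(1,2,3)$ nor $(1,2,5)$, some pair among them induces $(3,1,5)$. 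The point of the reduction is that the global hypotheses force the two local non-inducedness conditions on our four rows, while a local witness of $(3,1,5)$ is automatically a global one.

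Finally, I would discharge this finite claim by enumerating the at most $2^8$ fillings (in fact only $2^6$ matter, as column $4$ is irrelevant) with a short program, testing each of the six row-pairs on the voter triples $\{1,2,3\}$, $\{1,2,5\}$ and $\{1,3,5\}$, discarding the fillings that induce a forbidden constraint and checking $(3,1,5)$ on the survivors. The main obstacle is exactly that, unlike in Lemma \ref{lemma-one-common-one-opposite}, where the colour assumptions pinned every blank to a unique value, here many fillings survive, so a principled one-entry-at-a-time argument is unavailable and exhaustive enumeration is the cleanest certificate. For intuition one can split on the blue column-$1$ entries $u,v$: if $u=1$, then avoiding $(1,2,3)$ forces $x=v=1$, after which $z=0$ would create $(1,2,5)$, so $z=1$ and the pair $(R_1,R_4)$ witnesses $(3,1,5)$; if $u=0$, then either $v=1$ and $(R_3,R_4)$ already witnesses $(3,1,5)$, or $v=0$, in which case avoiding $(1,2,3)$ forces $y=0$ and avoiding $(1,2,5)$ forces $t=0$, so $(R_2,R_3)$ witnesses $(3,1,5)$.
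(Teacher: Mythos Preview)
Your proposal is correct and follows essentially the same route as the paper: set up the $4\times 5$ matrix of Fig.~\ref{fig:lemma-one-common-one-opposite-matrices-a} via Proposition~\ref{prop:unchanged_input}, observe that the weakened hypotheses no longer pin down the blanks one-by-one, and discharge the claim by finite exhaustion over the free entries. The paper simply reports the computer enumeration (68 surviving fillings, all satisfying the conclusion); you additionally supply a short hand case-split on $(u,v)$, which checks out and is a nice strengthening of the presentation, and you correctly note that column~4 is irrelevant for the three constraints at play, cutting the search to $2^6$.
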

\begin{proof} Similarly to the proof of Lemma \ref{lemma-one-common-one-opposite}, we can assume that matrix $\calP$ has the form in Fig.~\ref{fig:lemma-one-common-one-opposite-matrices-a}. However, this time our premises are weaker, so we did not manage to write a direct ``linear'' argument. However, we tried out all $2^8 = 256$ options for the tuple $(x, y, z, t, u, v, p, q)$ with a computer and $68$ of them did not contradict our hypothesis, and for all of them the conclusion held. The relevant code is included with the paper, and instructions for running it can be found in Appendix \ref{app:code}.
\end{proof}

\begin{lemma}[Strengthened Lemma \ref{lemma-two-common}]\label{stronger-lemma-two-common} Consider five voters, say $1, 2, 3, 4$ and $5,$ such that the formula graph has edges $(1, 2) \edge (4, 2)$ and $(2, 3) \edge (2, 5)$, and that, moreover, the formula graph does not contain edges $(1, 2) \edge (3, 2)$ or $(1, 2) \edge (5, 2).$ Then, the formula graphs is guaranteed to contain the edge $(1, 3) \edge (1, 5).$
\end{lemma}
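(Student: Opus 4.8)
The plan is to avoid redoing the case analysis and instead reduce this statement almost for free to its already-established sibling, Lemma~\ref{stronger-lemma-one-common-one-opposite}. First, invoking Proposition~\ref{prop:unchanged_input} and the disjointness discussion following it, I would assume that $\calP$ takes the canonical form of Fig.~\ref{fig:lemma-two-common-matrices-a}: the fixed (colored) entries encode the two assumed NB constraints $(1,2,4)$ and $(3,2,5)$ — equivalently the assumed formula-graph edges $(1,2)\edge(4,2)$ and $(2,3)\edge(2,5)$ — leaving the eight free cells $(x,y,z,t,u,v,p,q)$ to be filled with $0$/$1$.

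The key observation is that this matrix coincides with the one used for Lemma~\ref{stronger-lemma-one-common-one-opposite} (Fig.~\ref{fig:lemma-one-common-one-opposite-matrices-a}) in every column except column~$4$; the two differ only in the two swapped red entries of voter~$4$. Now both forbidden-edge hypotheses — the absence of $(1,2)\edge(3,2)$ and of $(1,2)\edge(5,2)$ — correspond exactly to the NB constraints $(1,2,3)$ and $(1,2,5)$ not being induced, while the target conclusion $(1,3)\edge(1,5)$ corresponds to the NB constraint $(3,1,5)$ being induced. Each of these three constraints involves only voters in $\{1,2,3,5\}$, so whether it is induced depends solely on columns $1,2,3,5$ — precisely the columns on which the two matrices agree. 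The free cells residing in these columns are $(x,y,z,t,u,v)$, whereas $(p,q)$ lie in column~$4$ and are therefore irrelevant both to the filtering of admissible assignments and to the conclusion check. Consequently, for every assignment the relevant induced constraints are identical to the one-common case, and the exhaustive enumeration of the $2^8 = 256$ assignments already carried out for Lemma~\ref{stronger-lemma-one-common-one-opposite} (of which $68$ are consistent with the hypothesis, all satisfying the conclusion) applies verbatim.

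The one point requiring care — and the step I expect to be the only genuine obstacle — is the translation: one must confirm that the forbidden edges and the target edge are honest \emph{direct}, NB-induced edges of $G_\calC$, and not artifacts of connectivity through longer formula-graph paths that could secretly route through voter~$4$. This holds because each edge of $G_\calC$ is added atomically from a single constraint in set~\ref{bool_form_2_reformulated}, so an edge $(1,2)\edge(3,2)$, $(1,2)\edge(5,2)$, or $(1,3)\edge(1,5)$ is present exactly when the corresponding NB constraint among $\{1,2,3,5\}$ is induced, independently of any constraint involving voter~$4$. If one prefers a self-contained argument not relying on this reduction, the identical claim also follows by rerunning the computer search of Lemma~\ref{stronger-lemma-one-common-one-opposite} directly on the matrix of Fig.~\ref{fig:lemma-two-common-matrices-a}, the only delicate bookkeeping being, again, which NB constraints count as forbidden versus targeted.
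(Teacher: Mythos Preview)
Your reduction is correct and more economical than the paper's own proof. The paper establishes Lemma~\ref{stronger-lemma-two-common} by a second, independent computer enumeration over the $2^8$ completions of the matrix in Fig.~\ref{fig:lemma-two-common-matrices-a}, reporting that $59$ of them are consistent with the hypotheses and that all of these induce the target edge. You instead observe that the two forbidden edges $(1,2)\edge(3,2)$, $(1,2)\edge(5,2)$ and the target edge $(1,3)\edge(1,5)$ correspond to NB constraints among voters $\{1,2,3,5\}$ only, so their presence or absence is determined by columns $1,2,3,5$ of the $4\times 5$ template; since those columns are identical in Figs.~\ref{fig:lemma-one-common-one-opposite-matrices-a} and~\ref{fig:lemma-two-common-matrices-a}, the implication already verified for Lemma~\ref{stronger-lemma-one-common-one-opposite} transfers assignment-by-assignment. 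Your care in noting that formula-graph \emph{edges} (as opposed to mere connectivity) are at stake is exactly what makes the reduction airtight.

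What each approach buys: the paper's route is mechanical and self-contained but duplicates work; your route is a one-line structural observation that explains \emph{why} the two strengthened lemmas are really the same statement, and it dispenses with a second computer run altogether. As a side effect, your argument predicts that the admissible-completion count should not depend on $(p,q)$ (both lie in column~$4$) and hence should be a multiple of~$4$; the paper's figure of $68=4\cdot 17$ for the one-common case is consistent with this, while the reported $59$ for the two-common case suggests the authors' code applies some additional filtering involving voter~$4$ there. That is harmless for their conclusion, but your reduction shows the bare hypothesis filter already suffices.
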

\begin{proof} Similarly to the proof of Lemma \ref{lemma-two-common}, we can assume that matrix $\calP$ has the form in Fig.~\ref{fig:lemma-two-common-matrices-a}. However, this time our premises are weaker, so we did not manage to write a direct ``linear'' argument. However, we tried out all $2^8 = 256$ options for the tuple $(x, y, z, t, u, v, p, q)$ with a computer and $59$ of them did not contradict our hypothesis, and for all of them the conclusion held. The relevant code is included with the paper, and instructions for running it can be found in Appendix \ref{app:code}. 
\end{proof}

\subsection{Code for Proofs of Lemmas \ref{stronger-lemma-one-common-one-opposite} and \ref{stronger-lemma-two-common}} \label{app:code}

Together with the paper, we include a \textsc{C++} source file called \texttt{exhaust\_proof.cpp}. This file can be found at \url{https://github.com/Andrei1998/rec-sc-from-approval/}.\\

\noindent To compile the code, use the following command:

\begin{center}
\texttt{g++ -std=c++11 -O2 exhaust\_proof.cpp -o exhaust\_proof}
\end{center}

\noindent To run the check required in the proof of Lemma \ref{stronger-lemma-one-common-one-opposite}, invoke the compiled program as:

\begin{center}
\texttt{echo "1" | ./exhaust\_proof}
\end{center}

\noindent To run the check required in the proof of Lemma \ref{stronger-lemma-two-common}, invoke the compiled program as:

\begin{center}
\texttt{echo "2" | ./exhaust\_proof}
\end{center}



\end{document}